\documentclass[11pt]{article}
\usepackage{graphicx}
\usepackage{tabularx}
\usepackage{url}
\usepackage{amsthm}
\usepackage{amsmath}
\usepackage{amsfonts}
\usepackage{amssymb}

\theoremstyle{plain}
\newtheorem{theorem}{Theorem}
\newtheorem{lemma}{Lemma}
\newtheorem{proposition}{Proposition}

\newtheorem{corollary}{Corollary}
\theoremstyle{definition}

\theoremstyle{remark}

\textheight=21cm
\topmargin=-1cm
\oddsidemargin=0.7cm
\textwidth=15.5cm
\date{}

\begin{document}

\title{Rough analysis of computation trees}
\author{Mikhail Moshkov\thanks{Computer, Electrical and Mathematical Sciences and Engineering Division,
King Abdullah University of Science and Technology (KAUST),
Thuwal 23955-6900, Saudi Arabia. Email: mikhail.moshkov@kaust.edu.sa.
}}
\maketitle

\begin{abstract}
This paper deals with  computation trees over an arbitrary
structure consisting of a set along with collections of functions and   predicates that are defined on it. It is devoted to the  comparative analysis of three parameters of problems with $n$ input variables over this structure: the complexity of a problem description, the minimum complexity of a computation tree solving this problem deterministically, and the minimum complexity of a computation tree solving this problem nondeterministically. Rough classification of relationships among
these parameters is considered and all possible seven types of these
relations are enumerated. The changes of relation types with the growth of the number $n$ of input variables are studied.
\end{abstract}

{\it Keywords}: structure,  computation tree,  complexity measure.

\section{Introduction}
\label{S1}

Computation trees are well known models of algorithms. They are a
natural generalization of decision trees: besides one-place operations of
predicate type (attributes) which are used in decision trees, in
computation trees many-place predicate and functional operations may be used.
Just as in general algorithm theory, where both deterministic and nondeterministic
algorithms are considered, it is expedient to
study not only deterministic but also nondeterministic computation trees.

Linear decision trees and algebraic decision and computation trees were studied most intensively.
Lower bounds on the complexity were obtained in \cite{Bjorner92,Dobkin78,Dobkin79} for linear decision trees, in \cite{Steele82,Yao95,Yao97} for algebraic decision trees, in \cite{Ben-Or83,Gabrielov17} for algebraic computation trees, and in  \cite{Grigoriev96} for Pfaffian computation trees. Upper bounds on the complexity were obtained in \cite{Dobkin76,Heide84,Moshkov82} for linear decision trees and in \cite{Moshkov05} for quasilinear decision trees that includes linear decision trees and some kinds of algebraic decision trees. Nondeterministic linear decision and computation trees were studied in \cite{Heide85} and \cite{Moshkov88,Moshkov05}, respectively.

The complexity of deterministic decision trees over arbitrary infinite sets of $k$-valued attributes, $k \ge 2$, was studied in \cite{Moshkov03,Moshkov05}. Relationships between deterministic and different kinds of nondeterministic decision trees over arbitrary infinite sets of $k$-valued attributes were investigated in \cite{Moshkov96,Moshkov20}.

In this paper, we study computation trees over an arbitrary structure $U=(A,F,P)$
consisting of a set $A$ along with a collection of functions $F$ and a collection of
predicates $P$ that are defined on it.

For each natural $n$, we describe a set $%
\mathcal{P}(U,n)$ of problems over $U$ with $n$ input variables. Each such problem with input variables $x_1, \ldots , x_n$ is given by a finite sequence of functional and predicate expressions over $U$. This sequence  defines $r$ functions $\alpha_1, \ldots, \alpha_r$ of the form $p(t_{1},\ldots ,t_{m})$, where $p \in P$ and
$t_{1},\ldots ,t_{m}$ are functions with variables from the set $\{x_1, \ldots , x_n\}$ obtained
from functions contained in $F\cup \{x\}$ by the operation of substitution.
The functions $\alpha_1, \ldots, \alpha_r$ divide the set $A^n$ into areas in which these functions are constant. Each area is labeled with a finite nonempty set of solutions. For a given $n$-tuple $\bar{a} \in A^n$, we should find a solution from the set attached to the area to which $\bar{a}$ belongs. Various problems of combinatorial
optimization, pattern recognition, computational geometry, etc., can be represented in such form.

We define a
complexity measure $\psi $ and, for each problem $z \in \mathcal{P}(U,n)$, we consider three
parameters: $\psi _{U}^{i}(z)$ -- the complexity of the problem $z$
description, $\psi _{U}^{d}(z)$ -- the minimum complexity of a computation
tree that solves the problem $z$ deterministically, and $\psi _{U}^{a}(z)$
-- the minimum complexity of a computation tree that solves the problem $z$
nondeterministically. The pair $(U,\psi )$ is called a sm-pair
((structure, measure)-pair).

To study relationships among these parameters, for each $b,c\in \{i,d,a\}$,
we consider two partial functions defined on the set of nonnegative integers:%
\begin{eqnarray*}
\mathcal{U}_{U\psi n}^{bc}(m) &=&\max \{\psi _{U}^{b}(z):z\in \mathcal{P}%
(U,n),\psi _{U}^{c}(z)\leq m\}, \\
\mathcal{L}_{U\psi n}^{bc}(m) &=&\min \{\psi _{U}^{b}(z):z\in \mathcal{P}%
(U,n),\psi _{U}^{c}(z)\geq m\}.
\end{eqnarray*}%
If the value $\mathcal{U}_{U\psi n}^{bc}(m)$ is defined for some $m$, then
it is the unimprovable upper bound on the value $\psi _{U}^{b}(z)$ for
problems $z\in \mathcal{P}(U,n)$ such that the inequality $\psi
_{U}^{c}(z)\leq m$ holds. If the value $\mathcal{L}_{U\psi n}^{bc}(m)$ is
defined for some $m$, then it is the unimprovable lower bound on the value $%
\psi _{U}^{b}(z)$ for problems $z\in \mathcal{P}(U,n)$ such that the
inequality $\psi _{U}^{c}(z)\geq m$ holds.

The study of the functions $\mathcal{U}_{U\psi n}^{bc}$ and $\mathcal{L}%
_{U\psi n}^{bc}$ directly is, in general case, too complicated problem.
Therefore, instead of the functions $\mathcal{U}_{U\psi n}^{bc}$ and $\mathcal{L}%
_{U\psi n}^{bc}$, we study their types $typ(\mathcal{U}_{U\psi n}^{bc})$ and
$typ(\mathcal{L}_{U\psi n}^{bc})$ from the set $\{\alpha ,\beta ,\gamma
,\delta ,\varepsilon \}$. In particular, the type $\alpha $ means that the considered
function $\mathcal{U}_{U\psi n}^{bc}$ has infinite domain and is bounded
from above, types $\beta $, $\gamma $, and $\delta $ describe different
variants of the growth of the unbounded function $\mathcal{U}_{U\psi n}^{bc}$
with an infinite domain, and the type $\varepsilon $ means that the function
$\mathcal{U}_{U\psi n}^{bc}$ has a finite domain.

All pairs $typ(\mathcal{L}_{U\psi n}^{bc})$ $typ(\mathcal{U}_{U\psi n}^{bc})$%
, $b,c\in \{i,d,a\}$, form the $n$-type of the sm-pair $%
(U,\psi )$ that is the table $typ(U,\psi ,n)$ with three rows and three
columns in which rows from top to bottom and columns from the left to the
right are labeled with indices $i,d,a$, and the pair $typ(\mathcal{L}_{U\psi
n}^{bc})$ $typ(\mathcal{U}_{U\psi n}^{bc})$ is in the intersection of the
row with index $b\in \{i,d,a\}$ and the column with index $c\in \{i,d,a\}$.
We describe all possible seven $n$-types of sm-pairs. These
results are similar to ones obtained for decision trees \cite{Moshkov96}.

For the sm-pair $(U,\psi )$, $n$-types can change with the
increasing of $n$. To investigate this phenomenon, we study the infinite
sequence $$type(U,\psi )=typ(U,\psi ,1)typ(U,\psi ,2)\cdots $$ that is called
the dynamic type of the sm-pair $(U,\psi )$. In this paper,
we describe all possible dynamic types of  sm-pairs.

Some preliminary results in this direction were published without proofs in
\cite{Moshkov97a,Moshkov97b,Moshkov98a}. The publication of the final results
with proofs was postponed for years: only in the present paper it was possible
to finally resolve the issue of the structure of the set of all possible dynamic
types of  sm-pairs. This required considering of non-trivial
constructions in Section \ref{S6}. This section contains also examples of study of different sm-pairs.

The rest of the paper is organized as follows. In Sections \ref{S2} and \ref{S3}, basic notions and main results are considered. Sections \ref{S4}--\ref{S7} are devoted to the proofs of auxiliary statements and main theorems. Section \ref{S8} contains some explanations of the results of this paper and Section \ref{S9} -- short conclusions.

\section{Basic Notions}
\label{S2}

In this section, we consider the notions of structure, computation tree, problem, complexity measure,  sm-pair ((structure,measure)-pair), type of function, $n$-type of sm-pair, and dynamic type of sm-pair.

\subsection{Structures}

Let $\omega =\{0,1,2,\ldots \}$ be the set of nonnegative integers, $%
E_{2}=\{0,1\}$, and $X=\{x_{i}:i\in \omega \}$ be the set of variables. Let $%
A$ be a nonempty set, $F$ be a set of functions of the kind $f(x_{1},\ldots
,x_{n})$, where $f:A^{n}\rightarrow A$ and $n\in \omega $ (if $n=0$, then $f$
is a constant), $P$ be a nonempty set of predicates (relations) of the kind $%
p(x_{1},\ldots ,x_{n})$, where $p:A^{n}\rightarrow E_{2}$ and $n\in \omega
\setminus \{0\}$, and $F\cap P=\emptyset $. The triple $U=(A,F,P)$ is called
a structure.

We denote by $[F]$ the set of all functions with variables from $X$ obtained
from functions contained in $F\cup \{x\}$ by the operation of substitution. We
denote by $P[F]$ the set of all functions of the kind $p(t_{1},\ldots ,t_{n})$, where
$t_{1},\ldots ,t_{n}\in \lbrack F]$ and $p$ is a predicate from $P$ with $n$
variables.

An expression $x_{j}\Leftarrow f(x_{l_{1}},\ldots ,x_{l_{r}})$, where $f$ is
a function from $F$ with $r$ variables, will be called a functional
expression over $U$. An expression $p(x_{q_{1}},\ldots ,x_{q_{k}})$, where $%
p $ is a predicate from $P$ with $k\geq 1$ variables, will be called a
predicate expression over $U$.

Consider a pair $(Y,\beta )$, where $Y$ is a finite nonempty subset of the
set $X$ and $\beta $ is a finite sequence of functional and predicate
expressions over $U$. We now correspond to each predicate expression from $%
\beta $ a function from $P[F]$. Let $\beta =\beta _{1},\ldots ,\beta _{m}$
and $s$ be the minimum number from $\omega $ such that all variables from $Y$
and all variables from the expressions $\beta _{1},\ldots ,\beta _{m}$ are
contained in the set $\{x_{0},\ldots ,x_{s}\}$. Let $x_{w}$ be the variable
from the set $Y$ with the minimum index $w$.

For $i=0,\ldots ,m$, we define a sequence $t_{i}=t_{i0},\ldots ,t_{is}$ of
functions from $[F]$ with variables from $Y$. Let $j\in \{0,\ldots ,s\}$. If
$x_{j}\in Y$, then $t_{0j}=x_{j}$. If $x_{j}\notin Y$, then $t_{0j}=x_{w}$.
Let sequences $t_{0},\ldots ,t_{i}$, $0\leq i<m$, be already defined. If $%
\beta _{i+1}$ is a predicate expression, then $t_{i+1}=t_{i}$. If $\beta
_{i+1}$ is a functional expression $x_{j}\Leftarrow f(x_{l_{1}},\ldots
,x_{l_{n}})$, then $t_{i+1}=t_{i0},\ldots ,t_{ij-1},f(t_{il_{1}},\ldots
,t_{il_{n}}),t_{ij+1},\ldots ,t_{is}$.

Let there be exactly $r>0$ predicate expressions $\beta _{c_{1}},\ldots
,\beta _{c_{r}}$ among $\beta _{1},\ldots ,$ $\beta _{m}$, where $c_{1}<\cdots
<c_{r}$. For $i=1,\ldots ,r$, we associate with the expression $\beta
_{c_{i}}$ a function $\alpha _{i}\in P[F]$ with variables from $Y$. Let $%
\beta _{c_{i}}$ be an expression $p(x_{q_{1}},\ldots ,x_{q_{k}})$. Then $%
\alpha _{i}=p(t_{c_{i}q_{1}},\ldots ,t_{c_{i}q_{k}})$. We denote by $\Pi
(Y,\beta )$ the $r$-tuple $(\alpha _{1},\ldots ,\alpha _{r})$.

\subsection{Computation Trees}

A node in a finite directed tree is called the root, if it is the only node
without entering edges. A tree, which has such a node, is called a finite
directed tree with the root. The tree nodes without leaving edges are called
terminal nodes. The tree nodes, which are neither the root nor terminal, will be
called working nodes. A complete path in a finite directed tree with the root is
any sequence $\xi =v_{0},d_{0},\ldots ,v_{m},d_{m},v_{m+1}$ of nodes and
edges of the tree such that $v_{0}$ is the root, $v_{m+1}$ is a terminal
node, and the edge $d_{i}$ leaves the node $v_{i}$ and enters the node $%
v_{i+1}$ for $i=0,\ldots ,m$.

A computation tree over the structure $U=(A,F,P)$ is a pair $\Gamma =(Y,G)$, where $%
Y $ is a finite nonempty subset of the set of variables $X$ and $G$ is a
marked finite directed tree with the  root, which has at least two nodes and
satisfies the following conditions:
\begin{itemize}
\item The root and the edges leaving the root are not labeled.

\item Each working node is a functional or a predicate node.

\item Each functional node is labeled with a functional expression over $U$,
and each edge leaving a functional node is not labeled.

\item Each predicate node is labeled with a predicate expression over $U$, and
each edge leaving a predicate node is labeled with a number from $E_{2}$.

\item Each terminal node is labeled with a number from $\omega $.
\end{itemize}
A computation tree is called deterministic if it satisfies the following
conditions:
\begin{itemize}
\item There is exactly one edge leaving the root.{}

\item Each functional node has exactly one edge leaving it.

\item For each predicate node, edges leaving this node are labeled with
pairwise different numbers from $E_{2}$.
\end{itemize}

The set of computation trees over the structure $U$ will be denoted by $\mathcal{T}%
(U)$. Let $\Gamma =(Y,G)\ $be a computation tree over $U$. Nodes, edges and
paths in the tree $G$ are called nodes, edges and paths in the computation tree $%
\Gamma $. The set $Y$ is called the set of input variables for computation tree $%
\Gamma $. Let $Y=\{x_{l_{1}},\ldots ,x_{l_{n}}\}$ and $l_{1}\,<\cdots <l_n$.
Denote $\bar{x}=(x_{l_{1}},\ldots ,x_{l_{n}})$. We denote by $\Xi (\Gamma )$
the set of complete paths in $\Gamma $. Let $\xi =v_{0},d_{0},\ldots
,v_{m},d_{m},v_{m+1}$ be a complete path in $\Gamma $. We denote by $\kappa
(\xi )$ the number assigned to the node $v_{m+1}$. We now define a sequence $%
\beta (\xi )$ of functional and predicate expressions over $U$ and a subset $%
\mathcal{A}(\xi )$ of the set $A^{n}$ associated with $\xi $. If $m=0$, then
$\beta (\xi )$ is the empty sequence. Let $m>0$, and let the expression $%
\beta _{j}$ be assigned to the node $v_{j}$, $j=1,\ldots ,m$. Then $\beta
(\xi )=\beta _{1},\ldots ,\beta _{m}$. If there are no predicate expressions
in the sequence $\beta (\xi )$, then $\mathcal{A}(\xi )=A^{n}$. Let there are
exactly $r>0$ predicate expressions $\beta _{c_{1}},\ldots ,\beta _{c_{r}}$
among $\beta _{1},\ldots ,\beta _{m}$, where $c_{1}<\cdots <c_{r}$. Let $\Pi
(Y,\beta )=(\alpha _{1},\ldots ,\alpha _{r})$, and let $\delta _{i}$ be the
number assigned to the edge $d_{c_{i}}$, $i=1,\ldots ,r$. Then $\mathcal{A}%
(\xi )$ is the set of solutions on $A^{n}$ for the system of equations
\[
\{\alpha _{1}(\bar{x})=\delta _{1},\ldots ,\alpha _{r}(\bar{x})=\delta
_{r}\}.
\]

\subsection{Problems}

The set of nonempty finite subsets of the set $\omega $ will be denoted by $%
S(\omega )$. A problem over the structure $U$ is any tuple of the kind $z=(Y,\nu
,\beta _{1},\ldots ,\beta _{m})$, where $Y$ is a nonempty finite subset of
the set $X$, $m\in \omega \setminus \{0\}$, $\beta _{1},\ldots ,\beta _{m}$
are functional and predicate expressions over $U$, and there is $r\in \omega
\setminus \{0\}$ such that $\nu :E_{2}^{r}\rightarrow S(\omega )$ and there
are exactly $r$ predicate expressions in the sequence $\beta _{1},\ldots
,\beta _{m}$. The set $Y$ is called the set of input variables for the
problem $z$. We denote by $\beta (z)$ the sequence $\beta _{1},\ldots ,\beta
_{m}$. Let $|Y|=n$ and $\Pi (Y,\beta (z))=(\alpha _{1},\ldots ,\alpha _{r})$%
. The problem $z$ may be interpreted as a problem of searching for at least
one number from the set $z(\bar{a})=\nu (\alpha _{1}(\bar{a}),\ldots ,\alpha
_{r}(\bar{a}))$ for an arbitrary $\bar{a}\in A^{n}$. Different problems of
pattern recognition, combinatorial optimization, and computational geometry
can be represented in such form. We denote by $\mathcal{P}(U)$ the set of
problems over the structure $U$.

Let $z=(Y_{1},\nu ,\beta _{1},\ldots ,\beta _{m})\in \mathcal{P}(U)$ and $%
\Gamma =(Y_{2},G)\in \mathcal{T}(U)$. Let $|Y_{1}|=n$. We will say that the
computation tree $\Gamma $ solves the problem $z$ nondeterministically if the
following conditions hold:
\begin{itemize}
 \item $Y_{1}=Y_{2}$.
 \item $\bigcup_{\xi \in \Xi (\Gamma )}\mathcal{A}(\xi )=A^{n}$.
 \item For any $\bar{a}\in A^{n}$ and any $\xi \in \Xi (\Gamma )$ such that $%
\bar{a}\in \mathcal{A}(\xi )$, the relation $\kappa (\xi )\in z(\bar{a})$
holds.
 \end{itemize}
We will say that the computation tree $\Gamma $ solves the problem $z$ deterministically if $%
\Gamma $ is a deterministic computation tree, which solves $z$
nondeterministically.

\subsection{Complexity Measures and SM-Pairs}

Let $U=(A,F,P)$ be a structure. Denote by $(F\cup P)^{\ast }$ the set of all
finite words over the alphabet $F\cup P$, including the empty word $\lambda $%
. A complexity measure over the structure $U$ is any map of the kind $\psi
:(F\cup P)^{\ast }\rightarrow \omega $. The complexity measure $\psi $ is
called limited if it has the following properties:
\begin{itemize}
\item $\psi (\alpha _{1}\alpha _{2})\leq \psi (\alpha _{1})+\psi (\alpha _{2})$
for any $\alpha _{1},\alpha _{2}\in (F\cup P)^{\ast }$.

\item $\psi (\alpha _{1}\alpha _{2}\alpha _{3})\geq \psi (\alpha _{1}\alpha
_{3})$ for any $\alpha _{1},\alpha _{2},\alpha _{3}\in (F\cup P)^{\ast }$.

\item For any $\alpha \in (F\cup P)^{\ast }$, the inequality $\psi (\alpha
)\geq |\alpha |$ holds, where $|\alpha |$ is the length of $\alpha $.
\end{itemize}

We extend the complexity measure $\psi $ onto the set of all finite sequences of
functional and predicate expressions over $U$ in the following way: $\psi
(\beta )=\psi (\lambda )$ if $\beta $ is the empty sequence. Let $\beta $ be
a nonempty sequence and $\beta =\beta _{1},\ldots ,\beta _{m}$. Then $\psi
(\beta )=\psi (\alpha )$, where $\alpha =b_{1}\cdots b_{m}\in (F\cup
P)^{\ast }$ and, for $i=1,\ldots ,m$, if $\beta _{i}$ is a predicate
expression $p(x_{q_{1}},\ldots ,x_{q_{k}})$, then $b_{i}=p$, and if $\beta
_{i}$ is a functional expression $x_{j}\Leftarrow f(x_{l_{1}},\ldots
,x_{l_{r}})$, then $b_{i}=f$.

We extend the complexity measure $\psi $ onto the set $\mathcal{T}(U)$ of
computation trees over $U$ as follows: $\psi (\Gamma )=$ $\max \{\psi
(\beta (\xi )):\xi \in \Xi (\Gamma )\}$ for any $\Gamma \in \mathcal{T}(U)$.
The value $\psi (\Gamma )$ will be called the $\psi $-complexity of a
computation tree $\Gamma $.

We now consider some examples of complexity measures. Let $w:(F\cup
P)\rightarrow \omega \setminus \{0\}$. We define the function $\psi
^{w}:(F\cup P)^{\ast }\rightarrow \omega $ in the following way: for any $%
\alpha \in (F\cup P)^{\ast }$, $\psi ^{w}(\alpha )=0$ if $\alpha =\lambda $,
and $\psi ^{w}(\alpha )=\sum_{i=1}^{m}w(b_{i})$ if $\alpha =b_{1}\cdots
b_{m} $. The function $\psi ^{w}$ is a limited complexity measure over $U$
and is called a weighted depth. If $w\equiv 1$, then the function $\psi ^{w}$
is called the depth.

Let $\psi $ be a complexity measure over $U$ and $z\in \mathcal{P}(U)$. The
value $\psi _{U}^{i}(z)=\psi (\beta (z))$ is called the complexity of the
problem $z$ description. We denote by $\psi _{U}^{d}(z)$ the minimum $\psi $%
-complexity of a computation tree $\Gamma \in \mathcal{T}(U)$, which solves the
problem $z$ deterministically. We denote by $\psi _{U}^{a}(z)$ the minimum $%
\psi $-complexity of a computation tree $\Gamma \in \mathcal{T}(U)$, which
solves the problem $z$ nondeterministically.

A pair $(U,\psi )$, where $U$ is a structure and $\psi $ is a complexity measure
over $U$, will be called a (structure,measure)-pair or, in short, a sm-pair. If $\psi
$ is a limited complexity measure, then the pair $(U,\psi )$ will be called a
limited sm-pair.

\subsection{$n$-Types and Dynamic Types of SM-Pairs}

Let $(U,\psi )$ be a sm-pair and $n\in \omega \setminus \{0\}$. We denote by
$\mathcal{P}(U,n)$ the set of problems from $\mathcal{P}(U)$ with $n$ input
variables.

We have the three parameters $\psi _{U}^{i}(z)$, $\psi _{U}^{d}(z)$, and $%
\psi _{U}^{a}(z)$ for any problem $z\in P(U,n)$, and we investigate the
relationships between any two such parameters for problems from $\mathcal{P}%
(U,n)$. Let us consider, for example, the parameters $\psi _{U}^{i}(z)$ and $%
\psi _{U}^{d}(z)$. Let $m\in \omega $. We will study relations $\psi
_{U}^{i}(z)\leq m\Rightarrow \psi _{U}^{d}(z)\leq u$ true for any $z\in
\mathcal{P}(U,n)$. The minimum value of $u$ is most interesting for us. This
value (if exists) is equal to
\[
\mathcal{U}_{U\psi n}^{di}(m)=\max \{\psi _{U}^{d}(z):z\in \mathcal{P}%
(U,n),\psi _{U}^{i}(z)\leq m\}.
\]%
We also study relations $\psi _{U}^{i}(z)\geq m\Rightarrow \psi
_{U}^{d}(z)\geq l$. In this case, the maximum value of $l$ is most
interesting for us. This value (if exists) is equal to
\[
\mathcal{L}_{U\psi n}^{di}(m)=\min \{\psi _{U}^{d}(z):z\in \mathcal{P}%
(U,n),\psi _{U}^{i}(z)\geq m\}.
\]%
The two functions $\mathcal{U}_{U\psi n}^{di}$ and $\mathcal{L}_{U\psi n}^{di}$ describe how the
behavior of the parameter $\psi _{U}^{d}(z)$ depends on the behavior of the
parameter $\psi _{U}^{i}(z)$.

There are 18 similar functions for all ordered pairs of parameters $\psi
_{U}^{i}(z)$, $\psi _{U}^{d}(z)$, and $\psi _{U}^{a}(z)$. These 18 functions
well describe the relationships among the considered parameters. It would be
very interesting to enumerate 18-tuples of these functions for all sm-pairs.
But this is a very complicated problem.

In this paper, instead of functions
we study types of functions. With any function, we associate its type from
the set $\{\alpha ,\beta ,\gamma ,\delta ,\varepsilon \}$. For example, if a
function has infinite domain of definition, and it is bounded from above,
then its type is equal to $\alpha $. Thus, we enumerate 18-tuples of types
of functions. These tuples are represented as tables called the $n$-types of
sm-pairs. We also consider infinite sequences of the kind $1$-type of
sm-pair $2$-type of sm-pair ..., which are called dynamic types of sm-pairs
and characterize changes of relationships among the considered parameters
with the growth of the number of input variables.

We now give definitions of mentioned above notions. Let $b,c\in \{i,d,a\}$.
We define partial functions $\mathcal{U}_{U\psi n}^{bc}:\omega \rightarrow \omega $
and $\mathcal{L}_{U\psi n}^{bc}:\omega \rightarrow \omega $ as follows:
\[
\mathcal{U}_{U\psi n}^{bc}(m)=\max \{\psi _{U}^{b}(z):z\in \mathcal{P}%
(U,n),\psi _{U}^{c}(z)\leq m\},
\]%
\[
\mathcal{L}_{U\psi n}^{bc}(m)=\min \{\psi _{U}^{b}(z):z\in \mathcal{P}%
(U,n),\psi _{U}^{c}(z)\geq m\}.
\]%
If the value $\mathcal{U}_{U\psi n}^{bc}(m)$ is defined, then it is the
unimprovable upper bound on the value $\psi _{U}^{b}(z)$ for problems $z\in
\mathcal{P}(U,n)$ such that the inequality $\psi _{U}^{c}(z)\leq m$ holds.
If the value $\mathcal{L}_{U\psi n}^{bc}(m)$ is defined, then it is the
unimprovable lower bound on the value $\psi _{U}^{b}(z)$ for problems $z\in
P(U,n)$ such that the inequality $\psi _{U}^{c}(z)\geq m$ holds.

Let $g$ be a partial function from $\omega $ to $\omega $. We denote by $%
Dom(g)$ the domain of definition (domain for short) of $g$. Let $%
Dom^{+}(g)=\{n:n\in Dom(g),g(n)\geq n\}$ and $Dom^{-}(g)=\{n:n\in
Dom(g),g(n)\leq n\}$. We now define the value $typ(g)\in \{\alpha ,\beta
,\gamma ,\delta ,\varepsilon \}$, which will be called the type of the function $%
g $:

\begin{itemize}
\item If $Dom(g)$ is an infinite set and $g$ is a bounded from above
function, then $typ(g)=\alpha $.

\item If $Dom(g)$ is an infinite set, $Dom^{+}(g)$ is a finite set, and $g$
is an unbounded from above function, then $typ(g)=\beta $.

\item If each of the sets $Dom^{+}(g)$ and $Dom^{-}(g)$ is an infinite set,
then $typ(g)=\gamma $.

\item If $Dom(g)$ is an infinite set and $Dom^{-}(g)$ is a finite set, then $%
typ(g)=\delta $.

\item If $Dom(g)$ is a finite set, then $typ(g)=\varepsilon $.
\end{itemize}

To clarify the notion of a function type, consider in more detail the function $\mathcal{U}_{U\psi n}^{di}$.
One can show that $\psi
_{U}^{d}(z)\leq \psi _{U}^{i}(z)$ for any problem $z\in \mathcal{P}(U)$:
based on the problem description, it is easy to construct a computation tree,
which solves this problem deterministically and which complexity is equal to
the complexity of the problem description. Therefore $typ(\mathcal{U}_{U\psi n}^{di})\in
\{\alpha ,\beta ,\gamma \}$. If $typ(\mathcal{U}_{U\psi
n}^{di})=\alpha $, then there is a positive constant $p$ such that $%
\psi _{U}^{d}(z)\leq p$ for any problem $z\in \mathcal{P}(U,n)$. If $typ(%
\mathcal{U}_{U\psi n}^{di})=\gamma $, then there are infinitely many
numbers $m\in \omega $ for each of which there exists a problem $z\in \mathcal{P}%
(U,n)$ with $\psi _{U}^{d}(z)=\psi _{U}^{i}(z)=m$.  The case $typ(\mathcal{U}_{U\psi n}^{di})=\beta $ is
the most interesting for us: the function $\mathcal{U}_{U\psi
n}^{di}$ is not bounded from above and, for each problem with high enough complexity of
description, there exists a computation tree, which solves this problem
deterministically and which complexity is less than the complexity of the
problem description.

We denote by $typ(U,\psi ,n)$ a table with three rows and three columns, in
which rows from top to bottom and columns from the left to the right are
labeled with indices $i,d,a$, and the pair $typ(\mathcal{L}_{U\psi n}^{bc})$
$typ(\mathcal{U}_{U\psi n}^{bc})$ is in the intersection of the row with
index $b\in \{i,d,a\}$ and the column with index $c\in \{i,d,a\}$. The table
$typ(U,\psi ,n)$ is called the $n$-type of sm-pair $(U,\psi )$. We denote by
$dtyp(U,\psi )$ an infinite sequence $typ(U,\psi ,1)$ $typ(U,\psi ,2)\cdots $%
. The sequence $dtyp(U,\psi )$ is called the dynamic type of sm-pair $%
(U,\psi )$.

\section{Main Results}
\label{S3}

The main goal of this paper is to describe the set of all dynamic types of
sm-pairs and the set of all dynamic types of limited sm-pairs. The solution
of this problem allows us to characterize roughly all possible with the growth
of the number of input variables changes of relationships among the
complexity of problem description, the minimum complexity of computation trees
solving this problem deterministically, and the minimum complexity of
computation trees solving this problem nondeterministically.

Define seven tables:%
\[
T_{1}=%
\begin{tabular}{|c|ccc|}
\hline
& $i$ & $d$ & $a$ \\ \hline
$i$ & $\varepsilon \alpha $ & $\varepsilon \alpha $ & $\varepsilon \alpha $
\\
$d$ & $\varepsilon \alpha $ & $\varepsilon \alpha $ & $\varepsilon \alpha $
\\
$a$ & $\varepsilon \alpha $ & $\varepsilon \alpha $ & $\varepsilon \alpha $
\\ \hline
\end{tabular}%
\;\;T_{2}=%
\begin{tabular}{|c|ccc|}
\hline
& $i$ & $d$ & $a$ \\ \hline
$i$ & $\gamma \gamma $ & $\varepsilon \varepsilon $ & $\varepsilon
\varepsilon $ \\
$d$ & $\alpha \alpha $ & $\varepsilon \alpha $ & $\varepsilon \alpha $ \\
$a$ & $\alpha \alpha $ & $\varepsilon \alpha $ & $\varepsilon \alpha $ \\
\hline
\end{tabular}%
\
\]%
\[
T_{3}=%
\begin{tabular}{|c|ccc|}
\hline
& $i$ & $d$ & $a$ \\ \hline
$i$ & $\gamma \gamma $ & $\delta \varepsilon $ & $\varepsilon \varepsilon $
\\
$d$ & $\alpha \beta $ & $\gamma \gamma $ & $\varepsilon \varepsilon $ \\
$a$ & $\alpha \alpha $ & $\alpha \alpha $ & $\varepsilon \alpha $ \\ \hline
\end{tabular}%
\;\;T_{4}=%
\begin{tabular}{|c|ccc|}
\hline
& $i$ & $d$ & $a$ \\ \hline
$i$ & $\gamma \gamma $ & $\gamma \varepsilon $ & $\varepsilon \varepsilon $
\\
$d$ & $\alpha \gamma $ & $\gamma \gamma $ & $\varepsilon \varepsilon $ \\
$a$ & $\alpha \alpha $ & $\alpha \alpha $ & $\varepsilon \alpha $ \\ \hline
\end{tabular}%
\
\]%
\[
T_{5}=%
\begin{tabular}{|c|ccc|}
\hline
& $i$ & $d$ & $a$ \\ \hline
$i$ & $\gamma \gamma $ & $\gamma \varepsilon $ & $\gamma \varepsilon $ \\
$d$ & $\alpha \gamma $ & $\gamma \gamma $ & $\gamma \gamma $ \\
$a$ & $\alpha \gamma $ & $\gamma \gamma $ & $\gamma \gamma $ \\ \hline
\end{tabular}%
\;\;T_{6}=%
\begin{tabular}{|c|ccc|}
\hline
& $i$ & $d$ & $a$ \\ \hline
$i$ & $\gamma \gamma $ & $\gamma \varepsilon $ & $\gamma \varepsilon $ \\
$d$ & $\alpha \gamma $ & $\gamma \gamma $ & $\gamma \delta $ \\
$a$ & $\alpha \gamma $ & $\beta \gamma $ & $\gamma \gamma $ \\ \hline
\end{tabular}%
\
\]%
\[
T_{7}=%
\begin{tabular}{|c|ccc|}
\hline
& $i$ & $d$ & $a$ \\ \hline
$i$ & $\gamma \gamma $ & $\gamma \varepsilon $ & $\gamma \varepsilon $ \\
$d$ & $\alpha \gamma $ & $\gamma \gamma $ & $\gamma \varepsilon $ \\
$a$ & $\alpha \gamma $ & $\alpha \gamma $ & $\gamma \gamma $ \\ \hline
\end{tabular}%
\
\]

Let $T$ be a table and $i\in \omega $. We denote by $T^{i}$ the sequence $%
T\cdots T$, where $T$ is repeated $i$ times (if $i=0$, then $T^{i}$ is the
empty sequence). We denote by $T^{\infty }$ the infinite sequence $TTT\cdots
$. Denote $\Delta =\{T_{2}^{\infty },T_{2}^{i}T_{3}^{\infty
},T_{2}^{i}T_{3}^{j}T_{4}^{\infty }$, $T_{2}^{i}T_{3}^{j}T_{4}^{k}T_{7}^{%
\infty },T_{2}^{i}T_{5}^{\infty },T_{2}^{i}T_{5}^{j}T_{6}^{\infty }$, $%
T_{2}^{i}T_{5}^{j}T_{6}^{k}T_{7}^{\infty }:i,j,k\in \omega \}$.

\begin{theorem}
\label{T2.1}For any sm-pair $(U,\psi )$, the relation $dtyp(U,\psi )\in
\{T_{1}^{\infty }\}\cup \Delta $ holds. For any sequence $\sigma \in
\{T_{1}^{\infty }\}\cup \Delta $, there exists a sm-pair $(U,\psi )$ such that $dtyp(U,\psi )=\sigma $.
\end{theorem}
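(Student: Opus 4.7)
The plan is to split Theorem~\ref{T2.1} into two halves: first prove that every dynamic type lies in $\{T_{1}^{\infty}\} \cup \Delta$, and then realize each sequence in that set by an explicit sm-pair. The first half decomposes further into (i) showing that only the seven tables $T_{1}, \ldots, T_{7}$ can occur as $typ(U,\psi,n)$ for any $n$, and (ii) showing that as $n$ grows, consecutive $n$-types can only change along the transitions encoded in $\Delta$.

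For the characterisation of admissible $n$-types, I would start from the pointwise chain $\psi_{U}^{a}(z) \le \psi_{U}^{d}(z) \le \psi_{U}^{i}(z)$ valid for every $z \in \mathcal{P}(U)$ (the right inequality is the ``read-off-the-description'' construction already mentioned in the excerpt, the left by viewing a deterministic tree as a nondeterministic one). Passing to $\mathcal{U}^{bc}$ and $\mathcal{L}^{bc}$, these give coordinate-wise comparisons such as $\mathcal{U}_{U\psi n}^{ic} \ge \mathcal{U}_{U\psi n}^{dc} \ge \mathcal{U}_{U\psi n}^{ac}$, and dually for the $\mathcal{L}$'s, together with monotonicity in $m$. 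Next I would observe that $Dom(\mathcal{U}^{bc})$ coincides with $\{m : m \ge \min \psi_{U}^{c}\}$ and $Dom(\mathcal{L}^{bc})$ with $\{m : m \le \sup \psi_{U}^{c}\}$, so the $\varepsilon$-vs-non-$\varepsilon$ distinction across a single column or row is determined by whether the range of $\psi_{U}^{c}$ is finite. A finite case analysis driven by these constraints — distinguishing the degenerate case where $\psi_{U}^{i}$ has finite range on $\bigcup_{n} \mathcal{P}(U,n)$ (yielding $T_{1}$) from the generic case where the $(i,i)$-cell is already $\gamma\gamma$ — eliminates every other candidate table and leaves exactly $T_{1}, \ldots, T_{7}$.

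For the dynamic behaviour, the key tool is the canonical embedding $\mathcal{P}(U,n) \hookrightarrow \mathcal{P}(U,n+1)$ that adjoins an input variable not appearing in $\beta(z)$; this preserves all three values $\psi_{U}^{i}, \psi_{U}^{d}, \psi_{U}^{a}$. Hence the image of each $\psi_{U}^{b}$ on $\mathcal{P}(U,n)$ grows with $n$, which forces $Dom(\mathcal{U}_{U\psi n}^{bc}) \subseteq Dom(\mathcal{U}_{U\psi(n+1)}^{bc})$ and an analogous containment for $\mathcal{L}$. Translated into types, this means an entry may move from $\varepsilon$ to a non-$\varepsilon$ class but not back, and a $\mathcal{U}$-type may only advance in the order $\alpha \to \beta \to \gamma$ (and similarly $\alpha \to \gamma \to \delta$ for the $\mathcal{L}$-side). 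A cell-by-cell check of these monotone constraints shows that the only admissible ``one-step'' table transitions are $T_{2} \to T_{3}, T_{5}$; $T_{3} \to T_{4}, T_{7}$; $T_{5} \to T_{6}, T_{7}$; $T_{4}, T_{6} \to T_{7}$; and the self-loops $T_{j} \to T_{j}$. Since $T_{7}$ is absorbing and the two branches $T_{3}T_{4}$ and $T_{5}T_{6}$ are disjoint once entered, the set of achievable infinite sequences is exactly $\{T_{1}^{\infty}\} \cup \Delta$.

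The hard half — and the reason the paper was postponed — is constructing, for each $\sigma \in \{T_{1}^{\infty}\} \cup \Delta$, an sm-pair whose dynamic type equals $\sigma$. My plan is to build modular ``gadgets'', each a structure $U_{t} = (A_{t}, F_{t}, P_{t})$ engineered to produce one specific transition at a prescribed number of input variables: for example, a gadget achieving $T_{2} \to T_{3}$ at step $i$ exposes no problem of large description-complexity while $n \le i$, but from $n = i+1$ onwards admits problems on which $\psi_{U}^{d}$ is unbounded yet strictly sub-linear in $\psi_{U}^{i}$, forcing the $\mathcal{U}^{di}$ entry to become type $\beta$. I would then combine several gadgets on disjoint copies of the carrier set, using a weighted-depth measure $\psi^{w}$ with weights tuned to keep the gadgets ``independent'' in the sense that each one controls only the cell it was designed for. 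The main obstacle is the verification: for every $n$, one must simultaneously pin down all nine upper and nine lower bounds to within the correct type class, and confirm that no parasitic interaction between gadgets perturbs an intended $\gamma$ into a $\delta$ or collapses a $\beta$ back to $\alpha$. I expect the cases $T_{3}, T_{4}, T_{5}, T_{6}$ to demand the most delicate calibration, since they each contain at least one entry of the ``sub-linear but unbounded'' type $\beta$ or the ``super-linear'' type $\delta$, whose witnesses have to be produced at exactly the right $n$ and not sooner.
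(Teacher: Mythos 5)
Your overall architecture matches the paper's (enumerate the possible $n$-types, prove monotonicity of the type in $n$ via the variable-adjoining embedding, read the admissible sequences off a Hasse diagram, then realize each sequence by gluing gadgets), but two of the steps you treat as routine are in fact where the real content lies, and as written they are gaps. First, your ``finite case analysis'' eliminating all tables other than $T_{1},\ldots,T_{7}$ does not follow from the constraints you list. The monotonicity chain $\mathcal{U}_{U\psi n}^{ic}\geq\mathcal{U}_{U\psi n}^{dc}\geq\mathcal{U}_{U\psi n}^{ac}$ and the domain observations do not rule out a table with $typ(\mathcal{U}_{U\psi n}^{ai})=\beta$; excluding that case is Lemma \ref{L3.6}, whose proof is an inductive surgery on nondeterministic computation trees (split a tree along its complete paths, replace each path-problem by a cheaper tree from the inductive hypothesis, and re-glue at the root). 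Nothing in your sketch produces that argument, and without it the row $a$ of the table is not pinned down. Relatedly, your claim that $Dom(\mathcal{U}^{bc})=\{m:m\geq\min\psi_{U}^{c}\}$ is false exactly in the $\varepsilon$ cases (when $\mathcal{U}^{bc}(m)=\infty$ for some $m$ the domain is a finite initial segment), and your plan handles the lower bounds $\mathcal{L}^{bc}$ only ``dually''; the paper instead gets all nine lower entries for free from the exact duality $typ(\mathcal{L}_{U\psi n}^{bc})=\rho(typ(\mathcal{U}_{U\psi n}^{cb}))$ with $\rho(\alpha)=\varepsilon$, $\rho(\beta)=\delta$, $\rho(\gamma)=\gamma$, etc.\ (Proposition \ref{P6.1}), which you would need to prove or replace by a full second case analysis.

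Second, the realizability half is only a plan, and it is precisely the half the paper identifies as the reason its publication was delayed. You correctly guess the shape of the solution (gadgets activated at prescribed arities, combined on disjoint carriers with a weighted depth), but you supply no gadgets, and the ones needed are not generic: the $\beta$ entry of $t_{3}$ requires threshold predicates together with a binary-search upper bound $O(\log m)$ on deterministic complexity plus a terminal-node counting lower bound; separating $\gamma$, $\delta$, and $\varepsilon$ in the $da$ cell requires three distinct constructions (point predicates with weights growing like $c_{i}=i\sum_{j<i}c_{j}$, triples $q_{2i},q_{2i+1},p_{2i}$ giving a factor-two gap, and a mix of threshold and point predicates making $\mathcal{U}^{da}(2)=\infty$). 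Moreover the activation mechanism you describe is off: below the threshold arity the gadget's predicates are identically zero, so what is bounded is $\psi_{U}^{d}$, not $\psi_{U}^{i}$ --- description complexity stays unbounded (that is what distinguishes $T_{2}$ from $T_{1}$). As it stands the proposal establishes neither direction of the theorem; it is a correct outline of the paper's strategy with the two load-bearing arguments left unexecuted.
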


\begin{theorem}
\label{T2.2}For any limited sm-pair $(U,\psi )$, the relation $dtyp(U,\psi
)\in \Delta $ holds. For any sequence $\sigma \in \Delta $, there exists a
limited sm-pair $(U,\psi )$ such that $\psi $ is a
weighted depth and $dtyp(U,\psi )=\sigma $.
\end{theorem}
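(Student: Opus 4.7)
The plan is to split Theorem 2.2 into its two directions and to use Theorem 2.1 as a black box for the bulk of the first one. For the inclusion $dtyp(U,\psi) \in \Delta$, Theorem 2.1 already gives $dtyp(U,\psi) \in \{T_1^\infty\} \cup \Delta$, so it suffices to rule out $T_1^\infty$ when $\psi$ is limited. The third axiom of a limited measure, $\psi(\alpha) \geq |\alpha|$, forces the description complexity $\psi_U^i$ to be unbounded on $\mathcal{P}(U,n)$ for every $n \geq 1$: starting from any $z_0 = (Y,\nu_0,\beta_1,\ldots,\beta_m) \in \mathcal{P}(U,n)$ and any predicate expression $\beta_{i_1}$ in its description, for each $k \geq 1$ I form the padded problem $z_k$ by appending $k$ copies of $\beta_{i_1}$ and defining a new $\nu_k$ of arity $r+k$ that ignores its extra arguments, so that $z_k \in \mathcal{P}(U,n)$ and $\psi_U^i(z_k) \geq |\beta(z_k)| \geq k$. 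Consequently $Dom(\mathcal{L}_{U\psi n}^{ii})$ is infinite, so $typ(\mathcal{L}_{U\psi n}^{ii}) \neq \varepsilon$; this contradicts the $(i,i)$-entry of $T_1$, which is $\varepsilon\alpha$. Hence $typ(U,\psi,n) \neq T_1$ for every $n$, and $dtyp(U,\psi) \in \Delta$.

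For the realizability direction, I construct, for each $\sigma \in \Delta$, a limited sm-pair $(U_\sigma,\psi^w)$ with $\psi^w$ a weighted depth and $dtyp(U_\sigma,\psi^w) = \sigma$. Parsing the definition of $\Delta$, each $\sigma$ is built from a (possibly empty) prefix of $T_2$'s followed by a prefix of $T_3$'s or $T_5$'s, then possibly a run of $T_4$'s or $T_6$'s, and finally a periodic tail of one of $T_2,\ldots,T_7$. The approach is modular: I first design building-block sm-pairs, each of which realizes a single constant dynamic type $T_j^\infty$, and then combine them via a disjoint-union-type construction of the carriers together with function and predicate symbols tailored so that, as $n$ grows, exactly the intended block dominates the complexity of problems in $\mathcal{P}(U_\sigma,n)$. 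Because the weight function $w$ takes positive integer values, $\psi^w$ is automatically a limited complexity measure; its weights are tuned so that the cost crossovers between blocks occur precisely at the transitions between successive $n$-types in $\sigma$.

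The principal obstacle is the fine calibration of these gadgets and of their amalgamation. Each $T_j$ prescribes types simultaneously on all nine $(b,c)$-entries in both the $\mathcal{L}$- and $\mathcal{U}$-components, so a block realizing $T_j^\infty$ must satisfy a long list of extremal conditions on the problem family in $\mathcal{P}(U,n)$, not just on a few cells. The hardest cells are the type-$\beta$ entries appearing in $T_3$ and $T_6$: they demand a persistent unbounded additive gap between deterministic and nondeterministic complexities, forcing predicate-and-function configurations that admit short nondeterministic certificates while blocking short deterministic solutions. Verifying that a combination of several such blocks produces the desired $n$-type at each $n$ without inadvertently creating spurious $\beta$- or $\gamma$-behaviour in cells that should be $\alpha$ or $\varepsilon$, and that the transition values of $n$ come out exactly as prescribed by $\sigma$, is the delicate combinatorial work that, as the author indicates, occupies Section \ref{S6}.
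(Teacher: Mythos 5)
Your first direction is correct and is essentially the paper's argument: the paper also reduces the inclusion to ruling out the $T_1$ table, using the third axiom of a limited measure ($\psi(\alpha)\geq|\alpha|$) to show $\psi_U^i$ is unbounded on $\mathcal{P}(U,n)$ for every $n$ (your padding construction is a fine way to make this explicit; the paper leaves it as an observation). The paper routes this through the notion of \emph{upper} $n$-types (Propositions \ref{P3.2} and \ref{P4.2}) rather than directly through Theorem \ref{T2.1}, but the logic is the same.

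The realizability direction, however, is not a proof: it is a statement of intent. You correctly identify that one needs building-block structures realizing each $t_j$ and an amalgamation that switches blocks as $n$ crosses prescribed thresholds, and you correctly identify that the calibration and verification are "the delicate combinatorial work" — but you then stop there. That work \emph{is} the theorem: the explicit predicate families ($\pi_2,\ldots,\pi_7$ in Section \ref{S6}, e.g.\ the threshold predicates $l_i$ for $t_3$, all Boolean-valued maps on $\omega$ for $t_4$, the point predicates $q_i$ with rapidly growing weights $\psi_5(q_i)=i\sum_{t<i}\psi_5(q_t)$ for $t_5$, the $q_{2i},q_{2i+1},p_{2i}$ triples for $t_6$), the arity-lifting $\pi_r^{(n)}$ via the padding constants $k_0^{(n)},\ldots,k_{n-1}^{(n)}$ that makes block $r$ inert until $n\geq n_r$, the direct sum $\pi_{v_1}^{(n_1)}\oplus\cdots\oplus\pi_{v_m}^{(n_m)}$, and Lemmas \ref{L5.1}--\ref{L5.6} verifying the resulting upper type at each $n$. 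Without exhibiting these gadgets and checking the extremal entries, nothing has been established. You also miss two reductions that make the verification tractable and that your plan implicitly needs: (i) by Proposition \ref{P6.1} the lower-bound entries satisfy $typ(\mathcal{L}^{bc}_{U\psi n})=\rho(typ(\mathcal{U}^{cb}_{U\psi n}))$, so only the nine $\mathcal{U}$-entries must be controlled, not eighteen as you state; and (ii) since Proposition \ref{P3.1} restricts the upper $n$-type to $\{t_1,\ldots,t_7\}$, each lemma only needs to pin down the one or two entries that distinguish the target table (e.g.\ for $t_5,t_6,t_7$ only the $da$ entry). Finally, your characterization of the $\beta$ cells is partly off: the $\beta$ in $T_3$ sits in the $(d,i)$ position and concerns the gap between deterministic complexity and \emph{description} complexity, not the deterministic/nondeterministic gap (which governs the $(d,a)$ entry distinguishing $T_5$, $T_6$, $T_7$).
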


\section{Possible Upper $n$-Types of SM-Pairs}
\label{S4}

Let $(U,\psi )$ be a sm-pair and $n\in \omega \setminus \{0\}$. We denote by
$typ_{u}(U,\psi ,n)$ a table with three rows and three columns in which rows
from top to bottom and columns from the left to the right are labeled with
indices $i,d,a$ and the value $typ(\mathcal{U}_{U\psi n}^{bc})$ is in the
intersection of the row with index $b\in \{i,d,a\}$ and the column with
index $c\in \{i,d,a\}$. The table $typ_{u}(U,\psi ,n)$ will be called the upper $%
n $-type of sm-pair $(U,\psi )$. In this section, all possible upper $n$%
-types of sm-pairs are enumerated.

We now define seven tables:
$$
t_{1}=%
\begin{tabular}{|c|ccc|}
\hline
& $i$ & $d$ & $a$ \\ \hline
$i$ & $\alpha $ & $\alpha $ & $\alpha $ \\
$d$ & $\alpha $ & $\alpha $ & $\alpha $ \\
$a$ & $\alpha $ & $\alpha $ & $\alpha $ \\ \hline
\end{tabular}%
\;\;t_{2}=%
\begin{tabular}{|c|ccc|}
\hline
& $i$ & $d$ & $a$ \\ \hline
$i$ & $\gamma $ & $\varepsilon $ & $\varepsilon $ \\
$d$ & $\alpha $ & $\alpha $ & $\alpha $ \\
$a$ & $\alpha $ & $\alpha $ & $\alpha $ \\ \hline
\end{tabular}%
\;\;t_{3}=%
\begin{tabular}{|c|ccc|}
\hline
& $i$ & $d$ & $a$ \\ \hline
$i$ & $\gamma $ & $\varepsilon $ & $\varepsilon $ \\
$d$ & $\beta $ & $\gamma $ & $\varepsilon $ \\
$a$ & $\alpha $ & $\alpha $ & $\alpha $ \\ \hline
\end{tabular}%
$$
$$
t_{4}=%
\begin{tabular}{|c|ccc|}
\hline
& $i$ & $d$ & $a$ \\ \hline
$i$ & $\gamma $ & $\varepsilon $ & $\varepsilon $ \\
$d$ & $\gamma $ & $\gamma $ & $\varepsilon $ \\
$a$ & $\alpha $ & $\alpha $ & $\alpha $ \\ \hline
\end{tabular}%
\;\;
t_{5}=%
\begin{tabular}{|c|ccc|}
\hline
& $i$ & $d$ & $a$ \\ \hline
$i$ & $\gamma $ & $\varepsilon $ & $\varepsilon $ \\
$d$ & $\gamma $ & $\gamma $ & $\gamma $ \\
$a$ & $\gamma $ & $\gamma $ & $\gamma $ \\ \hline
\end{tabular}%
\;\;t_{6}=%
\begin{tabular}{|c|ccc|}
\hline
& $i$ & $d$ & $a$ \\ \hline
$i$ & $\gamma $ & $\varepsilon $ & $\varepsilon $ \\
$d$ & $\gamma $ & $\gamma $ & $\delta $ \\
$a$ & $\gamma $ & $\gamma $ & $\gamma $ \\ \hline
\end{tabular}%
$$
$$
t_{7}=%
\begin{tabular}{|c|ccc|}
\hline
& $i$ & $d$ & $a$ \\ \hline
$i$ & $\gamma $ & $\varepsilon $ & $\varepsilon $ \\
$d$ & $\gamma $ & $\gamma $ & $\varepsilon $ \\
$a$ & $\gamma $ & $\gamma $ & $\gamma $ \\ \hline
\end{tabular}%
$$

In this section, we will prove the following two propositions.

\begin{proposition}
\label{P3.1}For any sm-pair $(U,\psi )$ and any $n\in \omega \setminus \{0\}$%
, the relation $typ_{u}(U,\psi ,n)\in \{t_{1},\ldots ,t_{7}\}$ holds.
\end{proposition}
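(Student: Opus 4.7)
The argument combines three ingredients. First, the pointwise chain
\[
\psi_U^a(z)\le\psi_U^d(z)\le\psi_U^i(z),\qquad z\in\mathcal P(U,n),
\]
holds because any deterministic tree is nondeterministic and the sequence $\beta(z)$ itself yields a deterministic tree of $\psi$-complexity $\psi_U^i(z)$. Second, each $\mathcal U=\mathcal U_{U\psi n}^{bc}$ is monotone non-decreasing and $Dom(\mathcal U)$ is an interval of $\omega$; its type is $\varepsilon$ precisely when the interval is finite, otherwise it lies in $\{\alpha,\beta,\gamma,\delta\}$. Third, a trivialization construction: given any $z=(Y,\nu,\beta_1,\ldots,\beta_m)$, the problem $z'=(Y,\nu_0,\beta_1,\ldots,\beta_m)$ with constant $\nu_0$ has $\psi_U^i(z')=\psi_U^i(z)$ while $\psi_U^d(z'),\psi_U^a(z')\le\psi(\lambda)$, because a two-node tree labelled with the constant value solves $z'$.

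I would first pin down the three diagonal cells. Setting $V_b=\{\psi_U^b(z):z\in\mathcal P(U,n)\}$, one has $\mathcal U^{bb}(m)=\max(V_b\cap[0,m])\le m$, so $typ(\mathcal U^{bb})\in\{\alpha,\gamma\}$ with $\alpha$ iff $V_b$ is bounded. Boundedness propagates along the chain from $V_i$ to $V_d$ to $V_a$, leaving only four diagonal patterns $(\alpha,\alpha,\alpha),(\gamma,\alpha,\alpha),(\gamma,\gamma,\alpha),(\gamma,\gamma,\gamma)$, corresponding to the blocks $\{t_1\},\{t_2\},\{t_3,t_4\},\{t_5,t_6,t_7\}$ of the claimed list.

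For the off-diagonal cells I would use: for the ``lower'' cells $(d,i),(a,i),(a,d)$, $\psi_U^b\le\psi_U^c$ gives $\mathcal U^{bc}(m)\le m$, ruling out $\delta,\varepsilon$ and giving $\alpha$ exactly when $V_b$ is bounded; for the ``upper'' cells $(i,d),(i,a),(d,a)$, $\psi_U^b\ge\psi_U^c$ gives $\mathcal U^{bc}(m)\ge m$ for every $m\in V_c$, ruling out $\beta$, and when $V_c$ is bounded by $C$ the value $\mathcal U^{bc}(m)=\sup V_b$ for $m\ge C$ forces $\alpha$ or $\varepsilon$ according to whether $V_b$ is bounded. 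The trivialization delivers the crucial extra forcing: whenever $V_i$ is unbounded it yields infinitely many $z'$ with $\psi_U^d(z'),\psi_U^a(z')\le\psi(\lambda)$ but $\psi_U^i(z')\to\infty$, so $\mathcal U^{id}(\psi(\lambda))=\mathcal U^{ia}(\psi(\lambda))=\infty$, forcing $(i,d)=(i,a)=\varepsilon$.

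Combining these observations case-by-case on the diagonal pattern fixes all but one cell in each block: $(\alpha,\alpha,\alpha)$ yields $t_1$; $(\gamma,\alpha,\alpha)$ yields $t_2$; $(\gamma,\gamma,\alpha)$ forces the three upper cells to $\varepsilon$ and leaves $(d,i)\in\{\beta,\gamma\}$ free, giving $t_3,t_4$; $(\gamma,\gamma,\gamma)$ forces $(i,d)=(i,a)=\varepsilon$ and leaves $(d,a)\in\{\gamma,\delta,\varepsilon\}$ free, giving $t_5,t_6,t_7$. The main obstacle is identifying the trivialization as the right combinatorial device, together with a dual construction needed to rule out $\beta$ for the lower cells in pattern $(\gamma,\gamma,\gamma)$: this requires producing, from the unboundedness of $V_a$, problems at which all three complexities $\psi_U^a,\psi_U^d,\psi_U^i$ coincide at arbitrarily large values. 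Once these forcing lemmas are in place, the remaining verification is a routine use of monotonicity and the cell comparisons $\mathcal U^{bi}\le\mathcal U^{bd}\le\mathcal U^{ba}$ and $\mathcal U^{ic}\ge\mathcal U^{dc}\ge\mathcal U^{ac}$ induced by the sublevel set-inclusions.
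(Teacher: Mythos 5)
Your reduction matches the paper's almost cell for cell: the chain $\psi_U^a\le\psi_U^d\le\psi_U^i$ (Lemma~\ref{L3.1}), the domain/monotonicity facts (Lemma~\ref{L3.2}), the diagonal dichotomy $typ(\mathcal U^{bb})\in\{\alpha,\gamma\}$ (Corollary~\ref{C3.1}), the trivialization $\nu'\equiv\{0\}$ forcing $typ(\mathcal U^{id})=typ(\mathcal U^{ia})=\varepsilon$ (Lemma~\ref{L3.5}), and the case split on which of $\psi_U^i,\psi_U^d,\psi_U^a$ are bounded are all exactly the paper's ingredients, correctly deployed. The four diagonal patterns and the resulting blocks $\{t_1\},\{t_2\},\{t_3,t_4\},\{t_5,t_6,t_7\}$ are right.

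However, there is a genuine gap at precisely the point you flag as ``the main obstacle'' and then leave unresolved: excluding $typ(\mathcal U_{U\psi n}^{ai})=\beta$ in the pattern $(\gamma,\gamma,\gamma)$. Without this, tables with $\beta$ in the $(a,i)$ cell (and, via the sandwich $\mathcal U^{ai}\vartriangleleft\mathcal U^{di}\vartriangleleft\mathcal U^{ii}$ and $\mathcal U^{ai}\vartriangleleft\mathcal U^{ad}\vartriangleleft\mathcal U^{aa}$, possibly in $(d,i)$ and $(a,d)$ as well) are not ruled out, so the list $\{t_1,\ldots,t_7\}$ is not established. Moreover, the route you sketch for closing the gap --- directly producing, from unboundedness of $\psi_U^a$, problems on which $\psi_U^a=\psi_U^d=\psi_U^i$ take arbitrarily large common values --- is the \emph{conclusion} one wants (it is equivalent to $typ(\mathcal U^{ai})=\gamma$), not an available device; there is no obvious direct construction over an arbitrary structure. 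The paper's Lemma~\ref{L3.6} instead argues by contradiction with an induction on $m$: assuming $\mathcal U^{ai}(m)<m$ for all $m>r$, it takes an optimal nondeterministic tree $\Gamma$ for a problem $z$ with $\psi_U^i(z)\le m+1$, associates to each complete path $\xi$ an auxiliary problem $z_\xi$ with $\psi_U^i(z_\xi)\le m$, applies the inductive hypothesis to get bounded trees $\Gamma_\xi$, and glues them at the root to bound $\psi_U^a(z)$ by a constant, contradicting unboundedness of $\psi_U^a$. This tree-surgery induction is the substantive content of the proposition and is absent from your proposal.
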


\begin{proposition}
\label{P3.2}For any limited sm-pair $(U,\psi )$ and any $n\in \omega
\setminus \{0\}$, the relation $typ_{u}(U,\psi ,n)\in \{t_{2},\ldots
,t_{7}\} $ holds.
\end{proposition}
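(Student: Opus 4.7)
The plan is to combine Proposition~\ref{P3.1} with a single exclusion step. Proposition~\ref{P3.1} already pins the upper $n$-type down to $\{t_{1},\ldots,t_{7}\}$, so the only new content needed for Proposition~\ref{P3.2} is to rule out $t_{1}$ when $\psi$ is limited. Inspecting the seven tables, $t_{1}$ is the unique one whose $(i,i)$ entry is $\alpha$ (all of $t_{2},\ldots,t_{7}$ have $\gamma$ at position $(i,i)$), so the whole task reduces to showing that $\mathcal{U}_{U\psi n}^{ii}$ is not of type $\alpha$; equivalently, that it is unbounded from above.

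To establish unboundedness, I would construct problems in $\mathcal{P}(U,n)$ with arbitrarily large description complexity. The set $P$ is nonempty by the very definition of a structure, so fix a predicate $p\in P$ of arity $k\ge 1$ and put $\beta_{0}=p(x_{1},\ldots,x_{k})$. Choose any $n$-element $Y\subseteq X$, for instance $Y=\{x_{1},\ldots,x_{n}\}$. For each $M\in\omega\setminus\{0\}$, let $\beta$ be the sequence of $M$ copies of $\beta_{0}$, let $\nu:E_{2}^{M}\to S(\omega)$ be any function (e.g.\ the constant $\{0\}$), and set $z_{M}=(Y,\nu,\beta)$. Then $z_{M}\in\mathcal{P}(U,n)$, and the word in $(F\cup P)^{\ast}$ associated with $\beta$ is $p^{M}$; by the third limited-ness axiom $\psi(p^{M})\ge |p^{M}|=M$, so $\psi_{U}^{i}(z_{M})=\psi(\beta)\ge M$.

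Since $\mathcal{U}_{U\psi n}^{ii}(m)\ge \psi_{U}^{i}(z_{M})$ whenever $m\ge\psi_{U}^{i}(z_{M})$, the function $\mathcal{U}_{U\psi n}^{ii}$ is unbounded and so $typ(\mathcal{U}_{U\psi n}^{ii})\ne\alpha$. Combined with Proposition~\ref{P3.1}, this gives $typ_{u}(U,\psi,n)\in\{t_{2},\ldots,t_{7}\}$. No step here looks delicate: the only nontrivial input is the third axiom of a limited complexity measure, which is precisely the ingredient that forces description $\psi$-complexity to grow with description length, and this is exactly what separates $t_{1}$ from the remaining six tables.
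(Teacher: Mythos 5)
Your proof is correct and follows essentially the same route as the paper: both reduce the claim to excluding $t_{1}$, which by Lemma~\ref{L3.4} amounts to showing $\psi_{U}^{i}$ is unbounded on $\mathcal{P}(U,n)$, and both derive this from the third axiom of a limited complexity measure ($\psi(\alpha)\geq|\alpha|$) before invoking Proposition~\ref{P3.1}. The only difference is that you spell out the witnessing problems (repeated copies of a single predicate expression), which the paper leaves implicit.
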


First, we prove some auxiliary statements.

\begin{lemma}
\label{L3.1} Let $(U,\psi )$ be a sm-pair and $z\in \mathcal{P}(U)$. Then
the inequalities $\psi _{U}^{a}(z)\leq \psi _{U}^{d}(z)\leq \psi _{U}^{i}(z)$
hold.
\end{lemma}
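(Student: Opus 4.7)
The first inequality $\psi_U^a(z) \leq \psi_U^d(z)$ is essentially a definitional matter: a deterministic computation tree that solves $z$ deterministically is, by the very definition in Section 2, a computation tree that also solves $z$ nondeterministically. Thus the infimum of $\psi(\Gamma)$ over the class of trees solving $z$ nondeterministically is taken over a superset of the corresponding class for deterministic solving, and the inequality is immediate. I would simply remark this in one sentence and move on.

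The substantive content is $\psi_U^d(z) \leq \psi_U^i(z)$. The plan is to build, from the description $z = (Y,\nu,\beta_1,\ldots,\beta_m)$ itself, a deterministic computation tree $\Gamma = (Y,G)$ whose every complete path carries exactly the sequence $\beta_1,\ldots,\beta_m$ of labels. Concretely, take a root with a single unlabeled outgoing edge. Then process $\beta_1,\ldots,\beta_m$ in order: at a functional expression $\beta_j$, create one node labeled $\beta_j$ with a single outgoing edge; at a predicate expression $\beta_j$, create a node labeled $\beta_j$ with two outgoing edges labeled $0$ and $1$, and recursively attach a copy of the remainder of the construction below each edge. At each leaf, the sequence of $0/1$ labels encountered along the path determines values $\delta_1,\ldots,\delta_r \in E_2$ for the predicate expressions; label the leaf with any element of $\nu(\delta_1,\ldots,\delta_r)$, which is nonempty since $\nu$ takes values in $S(\omega)$.

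I would then verify two things. First, $\Gamma$ is deterministic: the root and every functional node have exactly one outgoing edge, and at every predicate node the two outgoing edges carry distinct labels $0$ and $1$ by construction. Second, $\Gamma$ solves $z$: given $\bar{a} \in A^n$, the predicates $\alpha_1,\ldots,\alpha_r$ associated with $\beta(z)$ take definite values $\alpha_i(\bar{a})$, and the unique complete path $\xi$ in $\Gamma$ traversed on input $\bar{a}$ is the one whose $i$th predicate edge is labeled $\alpha_i(\bar{a})$; therefore $\mathcal{A}(\xi) \ni \bar{a}$ and $\kappa(\xi) \in \nu(\alpha_1(\bar{a}),\ldots,\alpha_r(\bar{a})) = z(\bar{a})$. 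Finally, on every complete path $\xi$ in $\Gamma$ the sequence $\beta(\xi)$ is, as a word over $F \cup P$, identical to $\beta(z)$, so $\psi(\beta(\xi)) = \psi(\beta(z))$. Taking the maximum over complete paths gives $\psi(\Gamma) = \psi(\beta(z)) = \psi_U^i(z)$, whence $\psi_U^d(z) \leq \psi(\Gamma) = \psi_U^i(z)$.

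There is no real obstacle here; the only mildly tricky point is to confirm that the recursive duplication of subtrees under predicate nodes preserves the property that each complete path, read as a sequence of node labels, reproduces $\beta_1,\ldots,\beta_m$ in order, so that the $\psi$-complexity of every path equals $\psi(\beta(z))$. Because $\psi$ depends only on the word in $(F \cup P)^*$ extracted from the sequence of expressions, and we use no properties of $\psi$ beyond its definition on that set (in particular, limitedness is not required), the bound follows without any further manipulation.
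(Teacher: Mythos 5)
Your proposal is correct and follows the same route as the paper: the first inequality is definitional, and the second is obtained by building from the description a deterministic tree all of whose complete paths carry exactly the sequence $\beta(z)$, so that its $\psi$-complexity equals $\psi_U^i(z)$. You merely spell out explicitly the construction that the paper asserts as ``not difficult,'' which is a fine level of detail.
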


\begin{proof}
Let $z=(Y,\nu ,\beta _{1},\ldots ,\beta _{m})$. It is not difficult to
construct a computation tree $\Gamma _{0}\in \mathcal{T}(U)$, which solves the
problem $z$ deterministically and for which $\beta (\xi )=\beta (z)=$ $\beta
_{1},\ldots , \beta _{m}$ for any complete path $\xi $ in the computation tree $%
\Gamma _{0}$. Evidently, $\psi (\Gamma _{0})=\psi _{U}^{i}(z)$. Therefore $%
\psi _{U}^{d}(z)\leq \psi _{U}^{i}(z)$. If a computation tree $\Gamma \in
\mathcal{T}(U)$ solves the problem $z$ deterministically, then the
computation tree $\Gamma $ solves the problem $z$ nondeterministically.
Therefore $\psi _{U}^{a}(z)\leq \psi _{U}^{d}(z)$.
\end{proof}

Let $(U,\psi )$ be a sm-pair, $n\in \omega \setminus \{0\}$, $m\in \omega $,
and $b,c\in \{i,d,a\}$. The notation $\mathcal{U}_{U\psi n}^{bc}(m)=\infty $
means that the set $\{\psi _{U}^{b}(z):z\in \mathcal{P}(U,n),\psi
_{U}^{c}(z)\leq m\}$ is infinite. Evidently, if $\mathcal{U}_{U\psi
n}^{bc}(m)=\infty $, then $\mathcal{U}_{U\psi n}^{bc}(m+1)=\infty $. It is
not difficult to prove the following statement.

\begin{lemma}
\label{L3.2} Let $(U,\psi )$ be a sm-pair, $n\in \omega \setminus \{0\}$, $%
m\in \omega $, and $b,c\in \{i,d,a\}$. Then

(a) If there exists $m\in \omega $ such that $\mathcal{U}_{U\psi
n}^{bc}(m)=\infty $, then $typ(\mathcal{U}_{U\psi n}^{bc})=\varepsilon $,
$Dom(\mathcal{U}_{U\psi n}^{bc})=\emptyset $ if $m_{0}=m_{1}$, and $Dom(%
\mathcal{U}_{U\psi n}^{bc})=\{m:m\in \omega ,m_{0}\leq m<m_{1}\}$ if $%
m_{0}<m_{1}$, where $m_{0}=\min \{\psi _{U}^{c}(z):z\in \mathcal{P}(U,n)\}$
and $m_{1}=\min \{m:m\in \omega ,\mathcal{U}_{U\psi n}^{bc}(m)=\infty \}$.

(b) If there is no $m\in \omega $ such that $\mathcal{U}_{U\psi
n}^{bc}(m)=\infty $, then $Dom(\mathcal{U}_{U\psi n}^{bc})=\{m:m\in \omega
,m\geq m_{0}\}$.
\end{lemma}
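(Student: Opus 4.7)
The plan is to unpack the definition of $\mathcal{U}_{U\psi n}^{bc}$ to determine exactly when $\mathcal{U}_{U\psi n}^{bc}(m)$ is defined. Writing $S(m) = \{\psi _{U}^{b}(z):z\in \mathcal{P}(U,n),\psi _{U}^{c}(z)\leq m\}$, I would observe that (i) $\mathcal{U}_{U\psi n}^{bc}(m)$ takes a value in $\omega$ precisely when $S(m)$ is nonempty and has a maximum, (ii) since $S(m)$ is a subset of $\omega$, having a maximum is equivalent to being bounded and hence to being finite, and (iii) $\mathcal{U}_{U\psi n}^{bc}(m)=\infty$ (in the sense introduced just before the lemma) corresponds exactly to $S(m)$ being nonempty and unbounded. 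So $\mathcal{U}_{U\psi n}^{bc}(m)$ is undefined in exactly two situations: when $S(m) = \emptyset$, or when $\mathcal{U}_{U\psi n}^{bc}(m) = \infty$.

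Next I would record the two obvious monotonicity facts. First, if $m\le m'$ then $S(m)\subseteq S(m')$, so both nonemptiness of $S(m)$ and the condition $\mathcal{U}_{U\psi n}^{bc}(m)=\infty$ propagate upward in $m$ (the latter is already noted in the text preceding the lemma). Second, by the very definition of $m_{0}$, the set $S(m)$ is nonempty if and only if $m\ge m_{0}$.

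With these observations in hand, part (b) is immediate: under its hypothesis $S(m)$ is never unbounded, hence $\mathcal{U}_{U\psi n}^{bc}(m)$ is defined iff $S(m)\ne\emptyset$, which by the second monotonicity fact is iff $m\ge m_{0}$. For part (a), let $m_{1}$ be as defined. Since $S(m_{1})\ne\emptyset$ (being infinite), we have $m_{1}\ge m_{0}$, so the interval $\{m : m_{0}\le m < m_{1}\}$ is a well-defined (possibly empty) finite set. By the upward propagation of the $=\infty$ condition, $\mathcal{U}_{U\psi n}^{bc}(m)=\infty$ for every $m\ge m_{1}$, so no such $m$ lies in the domain. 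By minimality of $m_{1}$, for every $m< m_{1}$ the value $\mathcal{U}_{U\psi n}^{bc}(m)$ is finite, and it is defined iff $S(m)\ne\emptyset$, i.e.\ iff $m\ge m_{0}$. This yields the stated description of the domain, which is finite, so $typ(\mathcal{U}_{U\psi n}^{bc})=\varepsilon$.

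There is no real obstacle here; the lemma is essentially a bookkeeping exercise once the $\max$-of-a-set-of-nonnegative-integers dichotomy (empty / nonempty-bounded / nonempty-unbounded) is combined with the $S(m)\subseteq S(m')$ monotonicity. The only point requiring a sentence of care is the inequality $m_{0}\le m_{1}$, which must be justified before one writes the interval $\{m : m_{0}\le m < m_{1}\}$.
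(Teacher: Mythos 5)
Your proof is correct, and it is exactly the routine argument the paper has in mind (the paper states Lemma \ref{L3.2} without proof, remarking only that it is ``not difficult to prove''): the empty / nonempty-finite / infinite trichotomy for $S(m)\subseteq\omega$, the monotonicity $S(m)\subseteq S(m')$ for $m\le m'$, and the observation $m_0\le m_1$ give the stated description of the domain in both cases.
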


Let $(U,\psi )$ be a sm-pair, $n\in \omega \setminus \{0\}$, and $b,c,e,f\in
\{i,d,a\}$. The notation $\mathcal{U}_{U\psi n}^{bc}\vartriangleleft
\mathcal{U}_{U\psi n}^{ef}$ means that, for any $m\in \omega $, the
following conditions hold:
\begin{itemize}
\item If the value $\mathcal{U}_{U\psi n}^{bc}(m)$ is defined, then either $%
\mathcal{U}_{U\psi n}^{ef}(m)=\infty $ or the value $\mathcal{U}_{U\psi
n}^{ef}(m)$ is defined and the inequality $\mathcal{U}_{U\psi
n}^{bc}(m)\leq \mathcal{U}_{U\psi n}^{ef}(m)$ holds.

\item If $\mathcal{U}_{U\psi n}^{bc}(m)=\infty $, then $\mathcal{U}_{U\psi
n}^{ef}(m)=\infty $.
\end{itemize}

We define a linear order $\preceq $ on the set $\{\alpha ,\beta ,\gamma
,\delta ,\varepsilon \}$ as follows: $\alpha \preceq \beta \preceq \gamma
\preceq \delta \preceq \varepsilon $.

\begin{lemma}
\label{L3.3} Let $(U,\psi )$ be a sm-pair and $n\in \omega \setminus \{0\}$.
Then $typ(\mathcal{U}_{U\psi n}^{bi})\preceq typ(\mathcal{U}_{U\psi
n}^{bd})\preceq typ(\mathcal{U}_{U\psi n}^{ba})$ and $typ(\mathcal{U}_{U\psi
n}^{ab})\preceq typ(\mathcal{U}_{U\psi n}^{db})\preceq typ(\mathcal{U}%
_{U\psi n}^{ib})$ for any $b\in \{i,d,a\}$.
\end{lemma}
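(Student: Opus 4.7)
The plan is to first establish the pointwise comparisons $\mathcal{U}_{U\psi n}^{bi} \vartriangleleft \mathcal{U}_{U\psi n}^{bd} \vartriangleleft \mathcal{U}_{U\psi n}^{ba}$ and $\mathcal{U}_{U\psi n}^{ab} \vartriangleleft \mathcal{U}_{U\psi n}^{db} \vartriangleleft \mathcal{U}_{U\psi n}^{ib}$, and then show that the relation $\vartriangleleft$ always propagates into $\preceq$ on types.

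For the first chain, fix $b$ and apply Lemma \ref{L3.1}: since $\psi_U^a(z) \leq \psi_U^d(z) \leq \psi_U^i(z)$, the constraint sets $\{z \in \mathcal{P}(U,n) : \psi_U^c(z) \leq m\}$ grow as $c$ runs through $i,d,a$. Taking the max of $\psi_U^b(z)$ over a larger set can only weakly increase the value (or push it from finite to $\infty$), and an infinite set remains infinite; these are exactly the two clauses defining $\vartriangleleft$. For the second chain, fix $c$, so the quantified set is the same for all three functions; since pointwise $\psi_U^a(z) \leq \psi_U^d(z) \leq \psi_U^i(z)$, the corresponding maxima satisfy the same chain of inequalities and the $\vartriangleleft$ relations follow immediately.

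The core step is the general claim that $g_1 \vartriangleleft g_2$ implies $typ(g_1) \preceq typ(g_2)$. By Lemma \ref{L3.2}, each $Dom(g_i)$ has the form $\{m : m_0(g_i) \leq m < m_1(g_i)\}$, with $g_i(m) = \infty$ on $[m_1(g_i),\infty)$ and undefined below $m_0(g_i)$. The clause $g_1(m) = \infty \Rightarrow g_2(m) = \infty$ of $\vartriangleleft$ yields $m_1(g_2) \leq m_1(g_1)$; the clause that $g_1(m)$ finite forces $g_2(m)$ not undefined yields (for nonempty $Dom(g_1)$) $m_0(g_2) \leq m_0(g_1)$. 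In particular, whenever $m_1(g_1) = m_1(g_2) = \infty$, both functions are defined on $[m_0(g_1),\infty)$ with $g_1(m) \leq g_2(m)$ there.

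A short case analysis on $typ(g_1)$ finishes the proof. If $typ(g_1) = \varepsilon$, then $m_1(g_1) < \infty$, so $m_1(g_2) < \infty$ and $typ(g_2) = \varepsilon$. Otherwise $m_1(g_1) = \infty$; either $m_1(g_2) < \infty$, giving $typ(g_2) = \varepsilon$ (which dominates every type), or $m_1(g_2) = \infty$ and $g_1 \leq g_2$ on a cofinite subinterval of $Dom(g_1)$. In the second subcase, boundedness of $g_2$ forces boundedness of $g_1$, membership in $Dom^+(g_1)$ propagates into $Dom^+(g_2)$, and the inclusion $Dom^-(g_2) \cap [m_0(g_1),\infty) \subseteq Dom^-(g_1)$ shows that finiteness of $Dom^-(g_1)$ forces finiteness of $Dom^-(g_2)$. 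Running through $typ(g_1) \in \{\alpha,\beta,\gamma,\delta\}$ in turn then gives $typ(g_2) \succeq typ(g_1)$ in each case. The only real obstacle is keeping the three regimes (undefined, finite, $\infty$) of both functions aligned simultaneously; once the interval picture from Lemma \ref{L3.2} is in hand, the case analysis is routine.
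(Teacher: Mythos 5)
Your proof is correct and follows essentially the same route as the paper: establish the two chains $\mathcal{U}_{U\psi n}^{bi}\vartriangleleft \mathcal{U}_{U\psi n}^{bd}\vartriangleleft \mathcal{U}_{U\psi n}^{ba}$ and $\mathcal{U}_{U\psi n}^{ab}\vartriangleleft \mathcal{U}_{U\psi n}^{db}\vartriangleleft \mathcal{U}_{U\psi n}^{ib}$ from Lemma \ref{L3.1} and the definitions, then deduce the type inequalities via the domain structure given by Lemma \ref{L3.2}. The paper leaves the implication ``$g_{1}\vartriangleleft g_{2}$ implies $typ(g_{1})\preceq typ(g_{2})$'' implicit, whereas you carry out the case analysis in full; the details you supply are accurate.
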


\begin{proof}
From the definition of the functions $\mathcal{U}_{U\psi n}^{bc}$, $b,c\in
\{i,d,a\}$, and from Lemma \ref{L3.1} it follows that $\mathcal{U}_{U\psi
n}^{bi}\vartriangleleft \mathcal{U}_{U\psi n}^{bd}\vartriangleleft \mathcal{U%
}_{U\psi n}^{ba}$ and $\mathcal{U}_{U\psi n}^{ab}\vartriangleleft \mathcal{U}%
_{U\psi n}^{db}\vartriangleleft \mathcal{U}_{U\psi n}^{ib}$ for any $b\in
\{i,d,a\}$. Using these relations and Lemma \ref{L3.2} we obtain the
statement of the lemma.
 \end{proof}

\begin{lemma}
\label{L3.4} Let $(U,\psi )$ be a sm-pair, $n\in \omega \setminus \{0\}$,
and $b,c\in \{i,d,a\}$. Then

(a) $typ(\mathcal{U}_{U\psi n}^{bc})=\alpha $ if and only if the function $%
\psi _{U}^{b}$ is bounded from above on the set $\mathcal{P}(U,n)$.

(b) If the function $\psi _{U}^{b}$ is unbounded from above on the set $%
\mathcal{P}(U,n)$, then $typ(\mathcal{U}_{U\psi n}^{bb})=\gamma $.
\end{lemma}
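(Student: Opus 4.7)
The plan is to unpack the definitions and use Lemma \ref{L3.2} to pin down the domain shape, then read off the type. Throughout, set $m_0^c = \min\{\psi_U^c(z) : z \in \mathcal{P}(U,n)\}$, which is finite since $\mathcal{P}(U,n) \neq \emptyset$.

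For the forward direction of part (a), assume $typ(\mathcal{U}_{U\psi n}^{bc}) = \alpha$. By definition this means $Dom(\mathcal{U}_{U\psi n}^{bc})$ is infinite and $\mathcal{U}_{U\psi n}^{bc} \leq K$ for some constant $K$. Since $\mathcal{U}_{U\psi n}^{bc}(m) \neq \infty$ on its domain, Lemma \ref{L3.2}(b) forces $Dom(\mathcal{U}_{U\psi n}^{bc}) = \{m : m \geq m_0^c\}$. For any $z \in \mathcal{P}(U,n)$, the number $m = \psi_U^c(z)$ lies in this domain, so $\psi_U^b(z) \leq \mathcal{U}_{U\psi n}^{bc}(m) \leq K$, giving the desired bound on $\psi_U^b$.

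For the reverse direction of (a), suppose $\psi_U^b(z) \leq K$ for all $z \in \mathcal{P}(U,n)$. Then for every $m \in \omega$, the set $\{\psi_U^b(z) : z \in \mathcal{P}(U,n),\psi_U^c(z) \leq m\}$ is a subset of $\{0,\ldots,K\}$, hence finite, so $\mathcal{U}_{U\psi n}^{bc}(m) \neq \infty$. Lemma \ref{L3.2}(b) again gives $Dom(\mathcal{U}_{U\psi n}^{bc}) = \{m : m \geq m_0^c\}$, which is infinite, and the function is bounded by $K$, so its type is $\alpha$.

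For part (b), the key observation is that $\mathcal{U}_{U\psi n}^{bb}(m) = \max\{\psi_U^b(z) : z \in \mathcal{P}(U,n), \psi_U^b(z) \leq m\}$ is at most $m$ whenever it is defined, and it is never $\infty$ for the same boundedness reason (its defining set is contained in $\{0,\ldots,m\}$). By Lemma \ref{L3.2}(b), $Dom(\mathcal{U}_{U\psi n}^{bb}) = \{m : m \geq m_0^b\}$ is infinite and coincides with $Dom^-(\mathcal{U}_{U\psi n}^{bb})$. Moreover, $m \in Dom^+(\mathcal{U}_{U\psi n}^{bb})$ iff $\mathcal{U}_{U\psi n}^{bb}(m) = m$ iff $m$ lies in the image of $\psi_U^b$ on $\mathcal{P}(U,n)$. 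Since $\psi_U^b$ is unbounded and integer-valued, this image is an infinite subset of $\omega$, so $Dom^+(\mathcal{U}_{U\psi n}^{bb})$ is infinite, and the type is $\gamma$.

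There is no real obstacle: the proof is essentially a definition-chase, with Lemma \ref{L3.2}(b) doing the work of converting ``no $\infty$ values'' into ``infinite domain.'' The only point worth stating carefully is that boundedness of the set defining $\mathcal{U}_{U\psi n}^{bc}(m)$ follows either from the uniform bound on $\psi_U^b$ (for (a)) or from the constraint $\psi_U^b(z) \leq m$ (for (b)), so in either situation one may invoke Lemma \ref{L3.2}(b) uniformly.
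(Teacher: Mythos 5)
Your proof is correct and follows essentially the same route as the paper, which simply declares part (a) obvious and sketches part (b) via the observation that $\mathcal{U}_{U\psi n}^{bb}(m)=m$ for infinitely many $m$ — exactly the fact you establish by noting that $m\in Dom^{+}$ iff $m$ lies in the (infinite) image of $\psi_{U}^{b}$. Your write-up just supplies the details the paper leaves implicit, using Lemma \ref{L3.2} to control the domain in the same way.
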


\begin{proof}
(a) The first statement of the lemma is obvious. (b) Let the function $\psi
_{U}^{b}$ be unbounded from above on the set $\mathcal{P}(U,n)$. One can
show that in this case the equality $\mathcal{U}_{U\psi n}^{bb}(m)=m$ holds for
infinitely many $m\in \omega $. Therefore $typ(\mathcal{U}_{U\psi
n}^{bb})=\gamma $.
 \end{proof}

\begin{corollary}
\label{C3.1} Let $(U,\psi )$ be a sm-pair, $n\in \omega \setminus \{0\}$,
and $b\in \{i,d,a\}$. Then $typ(\mathcal{U}_{U\psi n}^{bb})\in \{\alpha
,\gamma \}$.
\end{corollary}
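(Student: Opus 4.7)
The plan is to obtain this as an immediate dichotomy from Lemma \ref{L3.4}, applied with $c = b$. Fix a sm-pair $(U,\psi)$, a positive integer $n$, and $b \in \{i,d,a\}$. The key observation is that the hypothesis of Lemma \ref{L3.4}(a) (with $c=b$) and the hypothesis of Lemma \ref{L3.4}(b) are exact negations of each other: on the set $\mathcal{P}(U,n)$, the function $\psi_U^b$ is either bounded from above or it is not.

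In the first case, Lemma \ref{L3.4}(a), specialized to $c = b$, gives directly that $typ(\mathcal{U}_{U\psi n}^{bb}) = \alpha$. In the second case, Lemma \ref{L3.4}(b) gives directly that $typ(\mathcal{U}_{U\psi n}^{bb}) = \gamma$. Since these two cases exhaust all possibilities for $\psi_U^b$ on $\mathcal{P}(U,n)$, the value $typ(\mathcal{U}_{U\psi n}^{bb})$ must lie in $\{\alpha, \gamma\}$, as claimed.

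There is no real obstacle here; the corollary is essentially a bookkeeping consequence of the preceding lemma. The only thing I would double check while writing the proof is that $\mathcal{P}(U,n)$ is nonempty so that the $\max$ and $\min$ in the definitions of $\mathcal{U}_{U\psi n}^{bb}$ are over a legitimate set, and that boundedness from above is the correct dichotomy relative to types $\alpha$ and $\gamma$ (as opposed to $\varepsilon$, which would correspond to some $\mathcal{U}_{U\psi n}^{bb}(m) = \infty$). Taking $c = b$ rules the latter out because $\psi_U^b(z) \le m$ immediately forces $\psi_U^b(z) \le m$, so the defining set of $\mathcal{U}_{U\psi n}^{bb}(m)$ is a subset of $\{0,1,\ldots,m\}$ and hence finite. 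This sanity check confirms that Lemma \ref{L3.4} exhausts the options, yielding the corollary.
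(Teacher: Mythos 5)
Your proof is correct and matches the paper's intended derivation exactly: the corollary is an immediate consequence of the dichotomy in Lemma \ref{L3.4}, applying part (a) with $c=b$ when $\psi_U^b$ is bounded and part (b) when it is not. Your extra check that $\mathcal{U}_{U\psi n}^{bb}(m)=\infty$ cannot occur (since $\psi_U^b(z)\leq m$ bounds the defining set) is a sound, if unstated in the paper, confirmation that type $\varepsilon$ is excluded.
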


Let $U=(A,F,P)$ be a structure and $n\in \omega \setminus \{0\}$. We denote by $%
P_{n}[F]$ the set of functions from $P[F]$ with variables from the set $%
\{x_{1},\ldots ,x_{n}\}$. It is not difficult to prove the following
statement.

\begin{lemma}
\label{L0.1}Let $(U,\psi )$ be a limited sm-pair, $U=(A,F,P)$, and $n\in
\omega \setminus \{0\}$. Then the function $\psi _{U}^{d}$ is bounded from
above on the set $\mathcal{P}(U,n)$ if and only if the set $P_{n}[F]$ is
a finite set.
\end{lemma}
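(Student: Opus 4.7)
The plan is to prove the equivalence by constructing, for the sufficiency, a single ``universal'' deterministic computation tree that evaluates every member of $P_n[F]$ once and for all, and, for the necessity, a sequence of problems whose descriptions demand ever-larger partitions of $A^n$. In the sufficiency direction, suppose $P_n[F]=\{\alpha_1,\ldots,\alpha_N\}$ is finite. Each $\alpha_j$ has the form $p(t_1,\ldots,t_{k_j})$ with $t_l\in[F]$ using variables from $\{x_1,\ldots,x_n\}$, so I can write a finite sequence $\sigma_j$ of functional expressions (assigning the $t_l$'s into auxiliary scratch variables $x_{n+1},x_{n+2},\ldots$) followed by one predicate expression that computes $\alpha_j$; let $c_j=\psi(\sigma_j)$. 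I then build $\Gamma^{*}$ by executing $\sigma_1$, branching on the two outcomes of $\alpha_1$, then executing $\sigma_2$ on each branch, and so on through $\sigma_N$. The resulting deterministic tree has $2^N$ leaves, each corresponding to one truth-assignment to $(\alpha_1(\bar{a}),\ldots,\alpha_N(\bar{a}))$; the subadditivity axiom for a limited $\psi$ gives that the $\psi$-value of every complete path is at most $c_1+\cdots+c_N$. For an arbitrary $z\in\mathcal{P}(U,n)$, after identifying $Y$ with $\{x_1,\ldots,x_n\}$, every function in $\Pi(Y,\beta(z))$ lies in $P_n[F]$, so I can label each leaf of $\Gamma^{*}$ with an appropriate element of the corresponding $\nu(\cdot)$-set and obtain a deterministic tree solving $z$ of $\psi$-complexity at most $c_1+\cdots+c_N$, a constant depending only on $n$.

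For the necessity direction, assume $P_n[F]$ is infinite and consider the partition $\pi$ of $A^n$ defined by $\bar{a}\sim\bar{a}'$ iff $\alpha(\bar{a})=\alpha(\bar{a}')$ for every $\alpha\in P_n[F]$. If $|\pi|$ were finite, each $\alpha\in P_n[F]$ would be determined by its constant values on the $|\pi|$ blocks, forcing $|P_n[F]|\leq 2^{|\pi|}$; hence $|\pi|=\infty$. Given any $N$, I pick $N$ points $\bar{a}_1,\ldots,\bar{a}_N$ in distinct blocks of $\pi$; for each pair $(i,j)$ there exists $\alpha^{(i,j)}\in P_n[F]$ with $\alpha^{(i,j)}(\bar{a}_i)\neq\alpha^{(i,j)}(\bar{a}_j)$, and I collect these into a finite set $\{\alpha_1,\ldots,\alpha_k\}$. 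The joint partition induced by $\alpha_1,\ldots,\alpha_k$ then has at least $N$ non-empty blocks. I form the problem $z_N=(\{x_1,\ldots,x_n\},\nu,\gamma_1,\ldots,\gamma_k)$, where each $\gamma_i$ is a sequence of expressions computing $\alpha_i$ and $\nu:E_2^k\to S(\omega)$ assigns pairwise distinct singletons $\{d(\bar{b})\}$ to the tuples $\bar{b}$ indexing the $\geq N$ non-empty blocks. Any deterministic tree solving $z_N$ must produce at least $N$ pairwise-distinct outputs on $A^n$, so it has at least $N$ leaves; since only predicate nodes branch (and each at most two ways), some root-to-leaf path contains at least $\lceil\log_2 N\rceil$ predicate expressions. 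By the limitedness axiom $\psi(\alpha)\geq|\alpha|$, the $\psi$-value of that path is at least $\lceil\log_2 N\rceil$, so $\psi_U^d(z_N)\to\infty$ as $N\to\infty$.

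The principal obstacle is the necessity direction, and specifically turning the qualitative assumption that $P_n[F]$ is infinite into a quantitative lower bound; the pairwise-separation trick above is the key device, converting infiniteness of the global partition $\pi$ into finite predicate collections with arbitrarily large joint partitions, from which the information-theoretic lower bound $\lceil\log_2 N\rceil$ on depth is transferred to $\psi$ via limitedness. The sufficiency direction is essentially routine, with the only care-point being the use of scratch variables outside $\{x_1,\ldots,x_n\}$ so that the sequences $\sigma_1,\ldots,\sigma_N$ do not overwrite one another's outputs in $\Gamma^{*}$.
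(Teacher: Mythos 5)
The paper itself gives no proof of Lemma \ref{L0.1} (it is introduced with ``it is not difficult to prove''), so there is nothing to compare against; judged on its own, your argument is correct and supplies exactly the missing content. Your sufficiency direction is the natural one: a single universal deterministic tree that evaluates one fixed representative of each of the finitely many functions in $P_n[F]$ (renamed to the input set $Y$), with the first limitedness axiom (subadditivity) giving the uniform bound $c_1+\cdots+c_N$ on every complete path, and the third condition in the definition of ``solves nondeterministically'' verified because every coordinate of $\Pi(Y,\beta(z))$ coincides with some $\alpha_j$, so each leaf's class is constant on $z(\bar a)$. Your necessity direction is the more substantive contribution: the observation that finiteness of the joint partition $\pi$ would force $|P_n[F]|\leq 2^{|\pi|}$, hence infiniteness of $\pi$, followed by the pairwise-separation trick to manufacture, for each $N$, a problem $z_N$ whose deterministic trees need $\geq N$ terminal nodes, and then the transfer of the resulting $\lceil\log_2 N\rceil$ bound on the number of predicate nodes along some path to $\psi$ via the axiom $\psi(\alpha)\geq|\alpha|$. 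The only point worth making explicit is that your counting step $|P_n[F]|\leq 2^{|\pi|}$ (and the existence of separating predicates) relies on reading $P_n[F]$ as a set of \emph{functions}, so that distinct elements are semantically distinct; this is the reading the paper intends (and indeed the only reading under which the lemma is true, since syntactic duplicates would not affect boundedness of $\psi_U^d$). With that understanding, both directions are sound, and each correctly isolates which limitedness axiom it needs.
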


\begin{lemma}
\label{L3.5} Let $(U,\psi )$ be a sm-pair, $n\in \omega \setminus \{0\}$,
and $typ(\mathcal{U}_{U\psi n}^{ii})\neq \alpha $. Then $typ(\mathcal{U}%
_{U\psi n}^{id})=typ(\mathcal{U}_{U\psi n}^{ia})=\varepsilon $.
\end{lemma}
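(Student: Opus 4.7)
The plan is to invoke Lemma \ref{L3.2}(a) by producing a single value $m$ at which both $\mathcal{U}_{U\psi n}^{id}(m)$ and $\mathcal{U}_{U\psi n}^{ia}(m)$ equal $\infty$; the type $\varepsilon$ then follows at once. First, from the hypothesis $typ(\mathcal{U}_{U\psi n}^{ii})\neq \alpha$ together with Corollary \ref{C3.1} I get $typ(\mathcal{U}_{U\psi n}^{ii})=\gamma$, and then Lemma \ref{L3.4}(a) tells me that $\psi_{U}^{i}$ is unbounded on $\mathcal{P}(U,n)$. Fix a sequence of problems $z_{1},z_{2},\ldots \in \mathcal{P}(U,n)$ with $\psi_{U}^{i}(z_{k})\to \infty $.

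The key step is a simple ``trivialization'' of each $z_{k}$. Write $z_{k}=(Y,\nu_{k},\beta_{1},\ldots ,\beta_{m})$ with $r$ predicate expressions among $\beta_{1},\ldots ,\beta_{m}$, and define $z_{k}^{\prime }=(Y,\nu^{\prime },\beta_{1},\ldots ,\beta_{m})$, where $\nu^{\prime }:E_{2}^{r}\rightarrow S(\omega )$ is the constant function $\nu^{\prime }\equiv \{0\}$. Then $z_{k}^{\prime }\in \mathcal{P}(U,n)$, the descriptions coincide, so $\psi_{U}^{i}(z_{k}^{\prime })=\psi_{U}^{i}(z_{k})$, and $z_{k}^{\prime }(\bar{a})=\{0\}$ for every $\bar{a}\in A^{n}$. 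Consequently the ``stub'' computation tree consisting of the root, one unlabeled edge, and a single terminal node labeled $0$ solves $z_{k}^{\prime }$ deterministically: its only complete path $\xi $ satisfies $\beta (\xi )=\lambda $ and $\mathcal{A}(\xi )=A^{n}$, while $\kappa (\xi )=0\in z_{k}^{\prime }(\bar{a})$. Hence $\psi_{U}^{d}(z_{k}^{\prime })\leq \psi (\lambda )$ for every $k$.

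Setting $c=\psi (\lambda )$, the family $\{z_{k}^{\prime }\}$ consists of infinitely many problems in $\mathcal{P}(U,n)$ with $\psi_{U}^{d}(z_{k}^{\prime })\leq c$ and $\psi_{U}^{i}(z_{k}^{\prime })\to \infty $, so the set $\{\psi_{U}^{i}(z):z\in \mathcal{P}(U,n),\ \psi_{U}^{d}(z)\leq c\}$ is infinite, i.e.\ $\mathcal{U}_{U\psi n}^{id}(c)=\infty $, and Lemma \ref{L3.2}(a) yields $typ(\mathcal{U}_{U\psi n}^{id})=\varepsilon $. Lemma \ref{L3.1} gives $\psi_{U}^{a}(z_{k}^{\prime })\leq \psi_{U}^{d}(z_{k}^{\prime })\leq c$, so the same argument applied to $\mathcal{U}_{U\psi n}^{ia}(c)$ shows $typ(\mathcal{U}_{U\psi n}^{ia})=\varepsilon $. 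The only thing to be careful about is that the trivialized problem and the stub tree actually meet every bullet point in the definitions of Section \ref{S2} (at least two nodes in the tree, exactly $r$ predicate expressions in the sequence, correct arity of $\nu^{\prime }$), and these checks are routine; no genuine obstacle arises.
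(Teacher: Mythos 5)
Your proof is correct and follows essentially the same route as the paper: trivialize problems of arbitrarily large description complexity by replacing $\nu$ with the constant map $\{0\}$, solve each trivialized problem with the one-edge stub tree of complexity $\psi(\lambda)$, conclude $\mathcal{U}_{U\psi n}^{id}(\psi(\lambda))=\mathcal{U}_{U\psi n}^{ia}(\psi(\lambda))=\infty$, and apply Lemma \ref{L3.2}. The only cosmetic difference is that you fix a sequence $z_k$ with $\psi_U^i(z_k)\to\infty$ where the paper quantifies over an arbitrary threshold $r$; the substance is identical.
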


\begin{proof}
By Lemma \ref{L3.4}, the function $\psi _{U}^{i}$ is unbounded from above on
$\mathcal{P}(U,n)$. Let $r\in \omega $. Then there exists a problem $%
z=(Y,\nu ,\beta _{1},\ldots ,\beta _{m})\in \mathcal{P}(U,n)$ such that $%
\psi _{U}^{i}(z)\geq r$. Let us consider the problem $z^{\prime }=(Y,\nu
^{\prime },\beta _{1},\ldots ,\beta _{m})$, where $\nu ^{\prime }\equiv
\{0\} $. It is clear that $\psi _{U}^{i}(z^{\prime })\geq r$. Let $\Gamma
=(Y,G)$ be a computation tree which consists of the root, the terminal node
labeled with $0$ and the edge leaving the root and entering the terminal
node. One can show that the computation tree $\Gamma $ solves the problem $%
z^{\prime }$ deterministically. Therefore $\psi _{U}^{a}(z^{\prime })\leq
\psi _{U}^{d}(z^{\prime })\leq \psi (\Gamma )=\psi (\lambda )$. Taking into
account that $r$ is an arbitrary number from $\omega $, we obtain $%
\mathcal{U}_{U\psi n}^{id}(\psi (\lambda ))=\infty $ and $\mathcal{U}_{U\psi
n}^{ia}(\psi (\lambda ))=\infty $. By Lemma \ref{L3.2}, $typ(\mathcal{U}%
_{U\psi n}^{id})=typ(\mathcal{U}_{U\psi n}^{ia})=\varepsilon $.
 \end{proof}

\begin{lemma}
\label{L3.6} Let $(U,\psi )$ be a sm-pair and $n\in \omega \setminus \{0\}$.
Then $typ(\mathcal{U}_{U\psi n}^{ai})\in \{\alpha ,\gamma \}$.
\end{lemma}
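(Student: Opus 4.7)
The proof splits into two cases according to whether $\psi_U^a$ is bounded from above on $\mathcal{P}(U,n)$. If $\psi_U^a$ is bounded from above on $\mathcal{P}(U,n)$, then directly from Lemma~\ref{L3.4}(a) we get $typ(\mathcal{U}_{U\psi n}^{ai}) = \alpha$, and we are done.

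For the remainder, assume $\psi_U^a$ is unbounded on $\mathcal{P}(U,n)$ and write $g = \mathcal{U}_{U\psi n}^{ai}$; I plan to show $typ(g) = \gamma$ by excluding the four other types. Lemma~\ref{L3.1} gives $\psi_U^a(z) \leq \psi_U^i(z)$ for every $z$, so $g(m) \leq m$ whenever $g(m)$ is defined. This forces $g(m) < \infty$ for every $m$ in the domain of $g$, so by Lemma~\ref{L3.2}(b) the set $Dom(g)$ is $\{m \in \omega : m \geq m_0\}$ with $m_0 = \min\{\psi_U^i(z) : z \in \mathcal{P}(U,n)\}$, hence infinite; this rules out type $\varepsilon$. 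The bound $g(m) \leq m$ further yields $Dom^-(g) = Dom(g)$, which is infinite; this rules out type $\delta$. Unboundedness of $\psi_U^a$ combined with $g(\psi_U^i(z)) \geq \psi_U^a(z)$ shows $g$ is unbounded from above, ruling out type $\alpha$. Thus $typ(g) \in \{\beta, \gamma\}$.

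The remaining task is to exclude $\beta$, that is, to show $Dom^+(g) = \{m \in Dom(g) : g(m) \geq m\}$ is infinite. Since $g(m) \leq m$, this is equivalent to exhibiting infinitely many $m$ for which there exists $z \in \mathcal{P}(U,n)$ with $\psi_U^a(z) = \psi_U^i(z) = m$. For each value $v$ in the (infinite) range of $\psi_U^a$, I would start with a witnessing problem $z$ and a minimum-complexity nondeterministic computation tree $\Gamma$ solving $z$ (so $\psi(\Gamma) = v$), then choose a complete path $\xi \in \Xi(\Gamma)$ with $\psi(\beta(\xi)) = v$. Taking $\beta(\xi)$ as the description, I define a new problem $z^\star = (Y, \nu^\star, \beta(\xi))$ with $\nu^\star$ chosen so that $\psi_U^a(z^\star) = v$; then $\psi_U^i(z^\star) = v$ follows automatically from Lemma~\ref{L3.1}, and $g(v) = v$. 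As $v$ ranges over the infinite range of $\psi_U^a$, this yields infinitely many diagonal points of $g$, forcing $typ(g) = \gamma$.

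The main obstacle is establishing the lower bound $\psi_U^a(z^\star) \geq v$; the matching upper bound $\psi_U^a(z^\star) \leq v$ is routine, using a tree built from $\xi$ together with short alternative branches that detect any single mismatch in the predicate vector along $\xi$, each of complexity at most $v$. For the lower bound, one exploits both the maximality of $\xi$ in an optimal tree $\Gamma$ for $z$ and a careful choice of $\nu^\star$ (for instance, assigning pairwise distinct singleton outputs to the $2^{r'}$ possible predicate vectors, where $r'$ is the number of predicate expressions in $\beta(\xi)$) so that every nondeterministic tree solving $z^\star$ is forced to replicate the full information content gathered along $\xi$. This combinatorial reduction is the technical heart of the argument.
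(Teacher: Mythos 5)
Your reduction of the lemma to the exclusion of type $\beta$ is correct, and the easy eliminations are fine: $\mathcal{U}_{U\psi n}^{ai}(m)\leq m$ on its domain (Lemma~\ref{L3.1}), so the value is never $\infty$, the domain is infinite, $Dom^{-}$ is all of the domain, and Lemma~\ref{L3.4}(a) settles the bounded case. But the entire content of the lemma beyond these observations is the exclusion of $\beta$, and there you stop at exactly the point where the work begins: you concede that the lower bound $\psi _{U}^{a}(z^{\star })\geq v$ is ``the technical heart'' and do not prove it. Worse, the route you sketch for it does not appear to be sound. The maximality of the path $\xi $ inside a tree $\Gamma $ that is optimal \emph{for $z$} gives no control over $z^{\star }$, which has a different output map $\nu ^{\star }$: the regions of $A^{n}$ cut out by $\beta (\xi )$ may be separable by entirely different, much cheaper predicates that were irrelevant to $z$, so a nondeterministic tree for $z^{\star }$ of complexity far below $v$ may exist. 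Nothing in your setup rules this out, and no choice of $\nu ^{\star }$ alone can force a complexity lower bound that depends only on one path of one optimal tree for a different problem.

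The paper proves the exclusion of $\beta $ by contraposition plus bootstrapping rather than by directly manufacturing diagonal points. Assume $typ(\mathcal{U}_{U\psi n}^{ai})=\beta $, so $\mathcal{U}_{U\psi n}^{ai}(m)<m$ for all $m>r$. One then shows by induction on $m$ that $\psi _{U}^{i}(z)\leq m$ implies $\psi _{U}^{a}(z)\leq r_{0}=\max \{r,\psi (\lambda )\}$: given $z$ with $\psi _{U}^{i}(z)\leq m+1$, the hypothesis $\beta $ yields $\psi _{U}^{a}(z)\leq m$; take an optimal nondeterministic tree $\Gamma $ for $z$, and for each complete path $\xi $ form the path-problem $z_{\xi }$ whose description is $\beta (\xi )$ and whose $\nu _{\xi }$ is two-valued (output $\{\kappa (\xi )\}$ on the edge-label tuple of $\xi $, something disjoint otherwise). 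Since $\psi _{U}^{i}(z_{\xi })\leq \psi (\Gamma )\leq m$, the induction hypothesis gives a tree $\Gamma _{\xi }$ for $z_{\xi }$ of complexity at most $r_{0}$; pruning each $\Gamma _{\xi }$ to the paths outputting $\kappa (\xi )$ and identifying the roots produces a tree solving $z$ nondeterministically with complexity at most $r_{0}$. Hence $\psi _{U}^{a}$ is bounded, so the type is $\alpha $ by Lemma~\ref{L3.4}, contradicting $\beta $. If you want to keep your direct strategy, note that this same glueing inequality $\psi _{U}^{a}(z)\leq \max _{\xi }\psi _{U}^{a}(z_{\xi })$ is what actually forces some path-problem $z_{\xi _{0}}$ (with the two-valued $\nu _{\xi _{0}}$, not an injective singleton map, and not necessarily the maximum-complexity path) to satisfy $\psi _{U}^{a}(z_{\xi _{0}})=\psi _{U}^{i}(z_{\xi _{0}})=\psi _{U}^{a}(z)$; without that glueing construction your argument has no lower bound and does not close.
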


\begin{proof}
Using Lemma \ref{L3.3} and Corollary \ref{C3.1} we obtain $typ(\mathcal{U}%
_{U\psi n}^{ai})\in \{\alpha ,\beta ,\gamma \}$. Assume that $typ(\mathcal{U}%
_{U\psi n}^{ai})=\beta $. Then there exists $r\in \omega \setminus \{0\}$
such that $\mathcal{U}_{U\psi n}^{ai}(m)<m$ for any $m\in \omega $, $m>r$.
We prove by induction on $m$ that, for any problem $z\in \mathcal{P}(U,n)$,
if $\psi _{U}^{i}(z)\leq m$, then $\psi _{U}^{a}(z)\leq r_{0}$, where $%
r_{0}=\max \{r,\psi (\lambda )\}$. From Lemma \ref{L3.1} it follows that
under the condition $m\leq r$ the considered statement holds. Let it hold
for some $m\geq r.$ We now show that this statement holds for $m+1$ too. Let
$z\in \mathcal{P}(U,n)$ and $\psi _{U}^{i}(z)\leq m+1$. Since $m+1>r$, $\psi
_{U}^{a}(z)\leq m$. Let $\Gamma \in \mathcal{T}(U)$, $\psi (\Gamma )=$ $\psi
_{U}^{a}(z)$, and the computation tree $\Gamma $ solves the problem $z$
nondeterministically. Assume that in $\Gamma $ there exists a complete path $%
\xi $ such that in the sequence $\beta (\xi )$ there are no predicate
expressions. In this case, a computation tree, which has the same set of input
variables as $z$ and consists of the root, the terminal node labeled with $%
\kappa (\xi )$, and the edge leaving the root and entering the terminal
node, solves the problem $z$ nondeterministically. Therefore $\psi
_{U}^{a}(z)\leq \psi (\lambda )\leq r_{0}$. Assume now that, for each
complete path $\xi $ in the computation tree $\Gamma $, the sequence $\beta (\xi
)$ contains a predicate expression. Let $\xi \in \Xi (\Gamma )$, $\xi
=v_{0},d_{0},\ldots ,v_{p},d_{p},v_{p+1}$ and let the expression $\beta _{i}$
be assigned to the node $v_{i}$, $i=1,\ldots ,p$. Let there be exactly $t>0$
predicate expressions $\beta _{c_{1}},\ldots ,\beta _{c_{t}}$ among the
expressions $\beta _{1},\ldots ,\beta _{p}$, where $c_{1}<\cdots <c_{t}$.
For $i=1,\ldots ,t$, let $\delta _{i}$ be the number from the set $E_{2}$
assigned to the edge $d_{c_{i}}$. Let us consider a problem $z_{\xi }=(Y,\nu
_{\xi },\beta _{1},\ldots ,\beta _{p})$, where $Y$ is the set of input
variables for the problem $z$, $\nu _{\xi }(\delta _{1},\ldots ,\delta
_{t})=\{\kappa (\xi )\}$ and $\nu _{\xi }(\bar{\sigma})=\{\kappa (\xi )+1\}$
for any $t$-tuple $\bar{\sigma}\in E_{2}^{t}$ such that $\bar{\sigma}\neq $ $%
(\delta _{1},\ldots ,\delta _{t})$. It is clear that $\psi _{U}^{i}(z_{\xi
})\leq m$. Using the inductive hypothesis we obtain that there exists a
computation tree $\Gamma _{\xi }\in \mathcal{T}(U)$, which has the following
properties: $\Gamma _{\xi }$ solves the problem $z_{\xi }$
nondeterministically and $\psi (\Gamma _{\xi })\leq r_{0}$. Let $\mathcal{A}%
(\xi )\neq \emptyset $. We denote by $\tilde{\Gamma}_{\xi }$ a computation tree
obtained from $\Gamma _{\xi }$ by removal of all nodes and edges satisfying
the following condition: there is no a complete path $\xi ^{\prime }$ in $%
\Gamma _{\xi }$, which contains this node or edge and for which $\kappa (\xi
^{\prime })=\kappa (\xi )$. Let $\{\xi :\xi \in \Xi (\Gamma ),\mathcal{A}%
(\xi )\neq \emptyset \}=\{\xi _{1},\ldots ,\xi _{q}\}$. We identify the
roots of computation trees $\tilde{\Gamma}_{\xi _{1}},\ldots ,\tilde{\Gamma}%
_{\xi _{q}}$. Denote by $\Gamma ^{\ast }$ the obtained computation tree. It is
not difficult to show that $\Gamma ^{\ast }\in \mathcal{T}(U)$, $\psi
(\Gamma ^{\ast })\leq r_{0}$ and the computation tree $\Gamma ^{\ast }$ solves
the problem $z$ nondeterministically. Thus, the considered statement holds.
By Lemma \ref{L3.4}, $typ(\mathcal{U}_{U\psi n}^{ai})=\alpha $. The obtained
contradiction shows that $typ(\mathcal{U}_{U\psi n}^{ai})\in \{\alpha
,\gamma \}$.
 \end{proof}

\begin{proof}[Proof of Proposition \ref{P3.1}] Let $(U,\psi )$ be a sm-pair and $n\in \omega
\setminus \{0\}$. By Corollary \ref{C3.1}, $typ(\mathcal{U}_{U\psi
n}^{ii})\in \{\alpha ,\gamma \}$. Using Corollary \ref{C3.1} and Lemma \ref%
{L3.3} we obtain $typ(\mathcal{U}_{U\psi n}^{di})\in \{\alpha ,\beta ,\gamma
\}$. From Lemma \ref{L3.6} it follows that $typ(\mathcal{U}_{U\psi
n}^{ai})\in \{\alpha ,\gamma \}$.

(a) Let $typ(\mathcal{U}_{U\psi n}^{ii})=\alpha $. Using Lemmas \ref{L3.3}
and \ref{L3.4} we obtain $typ_{u}(U,\psi ,n)=t_{1}$.

(b) Let $typ(\mathcal{U}_{U\psi n}^{ii})=\gamma $ and $typ(\mathcal{U}%
_{U\psi n}^{di})=\alpha $. Using Lemmas \ref{L3.3}, \ref{L3.4}, and \ref%
{L3.5} we obtain $typ_{u}(U,\psi ,n)=t_{2}$.

(c) Let $typ(\mathcal{U}_{U\psi n}^{ii})=\gamma $ and $typ(\mathcal{U}%
_{U\psi n}^{di})=\beta $. Using Lemma \ref{L3.5} we obtain $typ(\mathcal{U}%
_{U\psi n}^{id})=typ(\mathcal{U}_{U\psi n}^{ia})=\varepsilon $. By Lemmas %
\ref{L3.3} and \ref{L3.6}, $typ(\mathcal{U}_{U\psi n}^{ai})=\alpha $. From
this equality and from Lemma \ref{L3.4} it follows that $typ(\mathcal{U}%
_{U\psi n}^{ad})=typ(\mathcal{U}_{U\psi n}^{aa})=\alpha $. Using the
equality $typ(\mathcal{U}_{U\psi n}^{di})=\beta $, Lemma \ref{L3.3}, and
Corollary \ref{C3.1} we obtain $typ(\mathcal{U}_{U\psi n}^{dd})=\gamma $.
From the equalities $typ(\mathcal{U}_{U\psi n}^{dd})=\gamma $, $typ(\mathcal{%
U}_{U\psi n}^{aa})=\alpha $ and from Lemmas \ref{L3.2} and \ref{L3.4} it
follows that $typ(\mathcal{U}_{U\psi n}^{da})=\varepsilon $. Thus, $%
typ_{u}(U,\psi ,n)=t_{3}$.

(d) Let $typ(\mathcal{U}_{U\psi n}^{ii})=typ(\mathcal{U}_{U\psi
n}^{di})=\gamma $ and $typ(\mathcal{U}_{U\psi n}^{ai})=\alpha $. Using Lemma %
\ref{L3.5} we obtain $typ(\mathcal{U}_{U\psi n}^{id})=typ(\mathcal{U}_{U\psi
n}^{ia})=\varepsilon $. From Lemma \ref{L3.4} it follows that $typ(\mathcal{U%
}_{U\psi n}^{ad})=typ(\mathcal{U}_{U\psi n}^{aa})=\alpha $. Using Lemma \ref%
{L3.3} and Corollary \ref{C3.1} we obtain $typ(\mathcal{U}_{U\psi
n}^{dd})=\gamma $. Taking into account the equality $typ(\mathcal{U}_{U\psi
n}^{aa})=\alpha $ and Lemmas \ref{L3.2} and \ref{L3.4} we obtain $typ(%
\mathcal{U}_{U\psi n}^{da})=\varepsilon $. Thus, $typ_{u}(U,\psi ,n)=t_{4}$.

(e) Let $typ(\mathcal{U}_{U\psi n}^{ii})=typ(\mathcal{U}_{U\psi n}^{di})=typ(%
\mathcal{U}_{U\psi n}^{ai})=\gamma $. Using Lemma \ref{L3.5} we obtain $typ(%
\mathcal{U}_{U\psi n}^{id})=typ(\mathcal{U}_{U\psi n}^{ia})=\varepsilon $.
From Lemma \ref{L3.3} and Corollary \ref{C3.1} it follows that $typ(\mathcal{%
U}_{U\psi n}^{dd})=typ(\mathcal{U}_{U\psi n}^{ad})=typ(\mathcal{U}_{U\psi
n}^{aa})=\gamma $. By Lemma \ref{L3.3}, $typ(\mathcal{U}_{U\psi n}^{da})\in
\{\gamma ,\delta ,\varepsilon \}$. Therefore $typ_{u}(U,\psi ,n)\in
\{t_{5},t_{6},t_{7}\}$.
 \end{proof}

\begin{proof}
[Proof of Proposition \ref{P3.2}] Let $(U,\psi )$ be a limited sm-pair, $U=(A,F,P)$%
, and $n\in \omega \setminus \{0\}$. Then, for any $\alpha \in (F\cup
P)^{\ast }$, the inequality $\psi (\alpha )\geq |\alpha |$ holds. Therefore the function $\psi _{U}^{i}$ is unbounded from above on the set $\mathcal{P}(U,n)$.
By Lemma %
\ref{L3.4}, $typ(\mathcal{U}_{U\psi n}^{ii})\neq \alpha $. Therefore $%
typ_{u}(U,\psi ,n)\neq t_{1}$. From this relation and Proposition \ref{P3.1}
it follows that the statement of Proposition \ref{P3.2} holds.
 \end{proof}

\section{Possible Upper Dynamic Types of SM-Pairs}
\label{S5}

Let $(U,\psi )$ be a sm-pair. The infinite sequence $typ_{u}(U,\psi
,1)typ_{u}(U,\psi ,2)\cdots $ is called the upper dynamic type of the sm-pair $%
(U,\psi )$. We denote this sequence $dtyp_{u}(U,\psi )$. Denote $\Delta
_{u}=\{t_{2}^{\infty },t_{2}^{i}t_{3}^{\infty
},t_{2}^{i}t_{3}^{j}t_{4}^{\infty },t_{2}^{i}t_{3}^{j}t_{4}^{k}t_{7}^{\infty
},t_{2}^{i}t_{5}^{\infty },t_{2}^{i}t_{5}^{j}t_{6}^{\infty
},t_{2}^{i}t_{5}^{j}t_{6}^{k}t_{7}^{\infty }:i,j,k\in \omega \}$. In this
section, we prove the following two propositions.

\begin{proposition}
\label{P4.1}For any sm-pair $(U,\psi )$, the relation $dtyp_{u}(U,\psi )\in
\{t_{1}^{\infty }\}\cup \Delta _{u}$ holds.
\end{proposition}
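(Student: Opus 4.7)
My approach is first to prove a monotonicity-in-$n$ result for each entry of $typ_u(U,\psi,n)$ via a padding embedding $\mathcal{P}(U,n)\hookrightarrow\mathcal{P}(U,n+1)$ that preserves all three complexities, then to perform a short combinatorial case analysis on the seven tables of Proposition \ref{P3.1}.

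I would first establish the padding lemma. Given $z=(Y,\nu,\beta_1,\ldots,\beta_m)\in\mathcal{P}(U,n)$, choose $x_*\in X$ not occurring in $Y$ or in any $\beta_i$, and put $z'=(Y\cup\{x_*\},\nu,\beta_1,\ldots,\beta_m)\in\mathcal{P}(U,n+1)$. Then $\beta(z')=\beta(z)$, so $\psi_U^i(z')=\psi_U^i(z)$. Because the sequence $\beta$ does not mention $x_*$, the functions $\alpha_i$ attached to $z'$ via $\Pi(Y\cup\{x_*\},\beta)$ are the extensions of those attached to $z$ that are independent of the last coordinate; any computation tree $(Y,G)$ solving $z$ then yields $(Y\cup\{x_*\},G)$ of the same $\psi$-complexity solving $z'$. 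Conversely, from any tree $(Y\cup\{x_*\},G')$ solving $z'$, substituting $x_w$ (the minimum-index element of $Y$) for every occurrence of $x_*$ in the expressions of $G'$ produces a tree $(Y,G)$ of the same $\psi$-complexity that still solves $z$ and remains deterministic when $G'$ was. Hence $\psi_U^b(z')=\psi_U^b(z)$ for every $b\in\{i,d,a\}$, whence $\mathcal{U}_{U\psi n}^{bc}\vartriangleleft\mathcal{U}_{U\psi,n+1}^{bc}$ for all $b,c\in\{i,d,a\}$, and Lemma \ref{L3.2}, used as in the proof of Lemma \ref{L3.3}, gives $typ(\mathcal{U}_{U\psi n}^{bc})\preceq typ(\mathcal{U}_{U\psi,n+1}^{bc})$ for all $n,b,c$.

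Next, I would dispose of the $t_1$ case. The set of possible $\beta(z)$ for $z\in\mathcal{P}(U,n)$ is the same for every $n\geq 1$ (given any valid sequence of expressions, one may attach any $Y\subseteq X$ of the required size), so $\psi_U^i$ has the same range on $\mathcal{P}(U,n)$ for all $n$, and by Lemma \ref{L3.4}(a) $typ(\mathcal{U}_{U\psi n}^{ii})$ is constant in $n$. Since $t_1$ is the only table with $typ(\mathcal{U}^{ii})=\alpha$, either $typ_u(U,\psi,n)=t_1$ for every $n$ (giving $dtyp_u(U,\psi)=t_1^\infty$) or $typ_u(U,\psi,n)\in\{t_2,\ldots,t_7\}$ for every $n$. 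In the latter case, $t_2,\ldots,t_7$ differ only at the entries $\mathcal{U}^{di},\mathcal{U}^{da},\mathcal{U}^{ai}$, with triples $(\alpha,\alpha,\alpha),(\beta,\varepsilon,\alpha),(\gamma,\varepsilon,\alpha),(\gamma,\gamma,\gamma),(\gamma,\delta,\gamma),(\gamma,\varepsilon,\gamma)$ respectively; entrywise monotonicity in $\preceq$ then forbids every transition except $t_2\to\{t_2,\ldots,t_7\}$, $t_3\to\{t_3,t_4,t_7\}$, $t_4\to\{t_4,t_7\}$, $t_5\to\{t_5,t_6,t_7\}$, $t_6\to\{t_6,t_7\}$, and $t_7\to\{t_7\}$ (forbidden transitions fail at $\mathcal{U}^{ai}$ going $\gamma\to\alpha$, at $\mathcal{U}^{di}$ going $\beta\to\alpha$, or at $\mathcal{U}^{da}$ going $\varepsilon\to\gamma$ or $\varepsilon\to\delta$). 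A case split on the last position where the sequence equals $t_2$ and on the branch it enters (the chain $t_3\to t_4\to t_7$ or the chain $t_5\to t_6\to t_7$, each allowing skips) matches the sequence with one of the patterns in $\Delta_u$.

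The main obstacle is the reverse direction of the padding lemma: one must carefully verify that the expression-substitution $x_*\mapsto x_w$ yields a well-formed computation tree of the same $\psi$-complexity that still solves $z$ and preserves determinism. Once this is in place, the remainder of the argument is straightforward bookkeeping guided by the explicit forms of $t_1,\ldots,t_7$ and $\Delta_u$.
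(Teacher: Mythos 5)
Your proposal is correct and follows essentially the same route as the paper: your padding lemma is the paper's Lemma \ref{L4.1} (proved the same way, with the explicit substitution $x_*\mapsto x_w$ playing the role of the paper's reinterpretation of the tree over the smaller variable set), your $t_1$ dichotomy matches the paper's first case, and your transition analysis is the paper's Hasse-diagram argument for $(\{t_1,\ldots,t_7\},\leqq)$. One small inaccuracy that does not affect the conclusion: $t_2,\ldots,t_7$ also differ at the $dd$, $ad$, and $aa$ entries, but the three entries you single out happen to induce the same order on $\{t_2,\ldots,t_7\}$, so your list of admissible transitions is exactly right.
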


\begin{proposition}
\label{P4.2}For any limited sm-pair $(U,\psi )$, the relation $%
dtyp_{u}(U,\psi )\in \Delta _{u}$ holds.
\end{proposition}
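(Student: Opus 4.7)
The plan is to reduce Proposition \ref{P4.2} immediately to the two results already available: Proposition \ref{P4.1}, which classifies the upper dynamic type of an arbitrary sm-pair as lying in $\{t_{1}^{\infty}\}\cup\Delta_{u}$, and Proposition \ref{P3.2}, which rules out $t_{1}$ as an upper $n$-type of a limited sm-pair for any positive $n$.

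First I would invoke Proposition \ref{P4.1} to obtain $dtyp_{u}(U,\psi)\in\{t_{1}^{\infty}\}\cup\Delta_{u}$. Then I would use Proposition \ref{P3.2} with, say, $n=1$: since $(U,\psi)$ is limited, $typ_{u}(U,\psi,1)\in\{t_{2},\ldots,t_{7}\}$, so $typ_{u}(U,\psi,1)\neq t_{1}$. Because the first term of the sequence $t_{1}^{\infty}$ is $t_{1}$, this rules out $dtyp_{u}(U,\psi)=t_{1}^{\infty}$, leaving $dtyp_{u}(U,\psi)\in\Delta_{u}$, which is exactly the claim.

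There is no real obstacle here: the heavy lifting has already been done in Propositions \ref{P3.2} and \ref{P4.1}, and the only content of Proposition \ref{P4.2} is to observe that the restriction coming from the limitedness hypothesis excises the single sequence $t_{1}^{\infty}$ from the admissible list. Accordingly, the proof will be a two-line deduction.
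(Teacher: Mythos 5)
Your proposal is correct and is essentially identical to the paper's own proof: the paper likewise uses Proposition \ref{P3.2} to rule out $dtyp_{u}(U,\psi )=t_{1}^{\infty }$ and then concludes via Proposition \ref{P4.1}. Your only addition is to make explicit that checking $n=1$ suffices, which is a harmless elaboration.
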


Let $(U,\psi )$ be a sm-pair, $n\in \omega \setminus \{0\}$, and $b,c\in
\{i,d,a\}$. The notation $\mathcal{U}_{U\psi n}^{bc}\vartriangleleft
\mathcal{U}_{U\psi n+1}^{bc}$ means that, for any $m\in \omega $, the
following conditions hold:
\begin{itemize}
\item If the value $\mathcal{U}_{U\psi n}^{bc}(m)$ is defined, then either $%
\mathcal{U}_{U\psi n+1}^{bc}(m)=\infty $ or the value $\mathcal{U}_{U\psi
n+1}^{bc}(m)$ is defined and the inequality $\mathcal{U}_{U\psi
n}^{bc}(m)\leq \mathcal{U}_{U\psi n+1}^{bc}(m)$ holds.

\item If $\mathcal{U}_{U\psi n}^{bc}(m)=\infty $, then $\mathcal{U}_{U\psi
n+1}^{bc}(m)=\infty $.
\end{itemize}

\begin{lemma}
\label{L4.1} Let $(U,\psi )$ be a sm-pair, $n\in \omega \setminus \{0\}$,
and $b,c\in \{i,d,a\}$. Then $typ(\mathcal{U}_{U\psi n}^{bc})\preceq typ(%
\mathcal{U}_{U\psi n+1}^{bc})$.
\end{lemma}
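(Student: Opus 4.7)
My plan is to extend the argument used in Lemma~\ref{L3.3}: first, establish the monotonicity relation $\mathcal{U}_{U\psi n}^{bc}\vartriangleleft \mathcal{U}_{U\psi n+1}^{bc}$, then invoke the general principle (implicit in the proof of Lemma~\ref{L3.3}, and following from Lemma~\ref{L3.2} by a short case analysis on the five possible types) that $g_1\vartriangleleft g_2$ forces $typ(g_1)\preceq typ(g_2)$.

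To get the domination, I would construct a variable-padding embedding $z\mapsto z'$ from $\mathcal{P}(U,n)$ into $\mathcal{P}(U,n+1)$ that preserves each of $\psi_U^i, \psi_U^d, \psi_U^a$. Given $z=(Y,\nu,\beta_1,\ldots,\beta_m)$, choose an index $l$ strictly larger than every variable index occurring in $Y$ or in any $\beta_i$ and set $z'=(Y\cup\{x_l\},\nu,\beta_1,\ldots,\beta_m)$. Equality $\psi_U^i(z')=\psi_U^i(z)$ is immediate since $\beta(z')=\beta(z)$. For $b\in\{d,a\}$, I would establish $\psi_U^b(z')=\psi_U^b(z)$ by exhibiting two $\psi$-complexity-preserving passages between computation trees solving $z$ and those solving $z'$.

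The forward direction is easy: given $\Gamma=(Y,G)$ solving $z$, let $\Gamma'=(Y\cup\{x_l\},G)$. Because $x_l$ does not occur in any expression of $G$, unpacking the definition of $\Pi(\cdot,\beta(\xi))$ shows that for every complete path $\xi$ one has $\mathcal{A}_{\Gamma'}(\xi)=\mathcal{A}_{\Gamma}(\xi)\times A$, and combined with the identity $z'(\bar{a},a)=z(\bar{a})$ this shows that $\Gamma'$ solves $z'$ with the same determinism type and the same $\psi$-value. For the reverse direction, given $\Gamma'=(Y\cup\{x_l\},G')$ solving $z'$, let $x_w$ be the smallest-indexed variable in $Y$ and let $G''$ be obtained from $G'$ by replacing every occurrence of $x_l$ in the node-expressions by $x_w$; the resulting $\Gamma''=(Y,G'')$ uses the same sequences of function and predicate symbols along each path, so $\psi(\Gamma'')=\psi(\Gamma')$, and a direct simulation shows that evaluating $\Gamma''$ at $\bar{a}\in A^n$ coincides with evaluating $\Gamma'$ at $(\bar{a},a_w)\in A^{n+1}$, where $a_w$ is the $x_w$-coordinate of $\bar{a}$, so $\Gamma''$ solves $z$.

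Once $\psi_U^b(z')=\psi_U^b(z)$ is in hand for every $b\in\{i,d,a\}$, the two defining conditions for $\mathcal{U}_{U\psi n}^{bc}\vartriangleleft \mathcal{U}_{U\psi n+1}^{bc}$ follow by transporting witnesses through the embedding: a problem witnessing $\mathcal{U}_{U\psi n}^{bc}(m)$ (finite or equal to $\infty$ via a sequence) maps to a problem in $\mathcal{P}(U,n+1)$ with the same $\psi_U^c\le m$ and the same or diverging $\psi_U^b$-value. I expect the chief technical nuisance to be the reverse construction, where one must check carefully that the syntactic substitution $x_l\mapsto x_w$ commutes with the construction of $\Pi(Y,\cdot)$ — in particular, that $\mathcal{A}_{\Gamma''}(\xi)$ is exactly the pullback of $\mathcal{A}_{\Gamma'}(\xi)$ under $\bar{a}\mapsto(\bar{a},a_w)$, given the nontrivial way $\Pi$ treats variables outside the declared input set.
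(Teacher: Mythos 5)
Your overall architecture coincides with the paper's: pad $z$ with a fresh input variable to obtain $\tilde z\in\mathcal{P}(U,n+1)$, show that each of $\psi_U^i,\psi_U^d,\psi_U^a$ is preserved, deduce $\mathcal{U}_{U\psi n}^{bc}\vartriangleleft\mathcal{U}_{U\psi n+1}^{bc}$ by transporting witnesses, and convert domination into $\preceq$ on types via Lemma~\ref{L3.2}. The forward direction is fine, granted the (paper-level) claim that an optimal tree for $z$ can be chosen so that $x_l$ does not occur in its expressions.

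The reverse direction as written has a genuine flaw, located exactly where you flagged it. The syntactic substitution $x_l\mapsto x_w$ in the node expressions is not semantics-preserving once functional expressions are present: an expression $x_l\Leftarrow f(\ldots)$ becomes $x_w\Leftarrow f(\ldots)$, so a write that originally went to register $l$ now overwrites register $w$. In the register semantics of $\Pi(Y,\cdot)$ these are distinct storage cells, so reads of $x_w$ and of $x_l$ get conflated. Concretely, a path fragment $x_w\Leftarrow f(x_j)$ followed by $p(x_l)$ computes $p(a_w)$ in $\Gamma'$ at input $(\bar a,a_w)$ (register $l$ still holds its initial value $a_w$), but after substitution it becomes $x_w\Leftarrow f(x_j)$ followed by $p(x_w)$, which computes $p(f(a_j))$ in $\Gamma''$ at input $\bar a$. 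Hence $\mathcal{A}_{\Gamma''}(\xi)$ is in general not the pullback you claim, and $\Gamma''$ need not solve $z$. The paper's remedy is to perform no substitution at all: it keeps the tree unchanged and only shrinks the declared input set, setting $\Gamma_4=(Y,G_3)$. Then $x_l$ survives as a distinct register, and the initialization rule $t_{0j}=x_w$ for $x_j\notin Y$ automatically seeds that register with the value $a_w$, so evaluating $\Gamma_4$ at $\bar a$ is literally the same register computation as evaluating $\Gamma_3$ at $(a_1,\ldots,a_n,a_1)$. With that one change the rest of your argument goes through.
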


\begin{proof}
Let $U=(A,F,P)$. Let $z\in \mathcal{P}(U,n)$, $z=(Y,\nu ,\beta _{1},\ldots
,\beta _{m})$, and $s$ be the minimum number from $\omega $ such that all
variables from $Y$ and all variables from the expressions $\beta _{1},\ldots
,\beta _{m}$ belong to the set $\{x_{0},\ldots ,x_{s}\}$. Denote $r=s+1$ and
$\tilde{z}=(Y\cup \{x_{r}\},\nu ,\beta _{1},\ldots ,\beta _{m})$. We will prove
that, for any $b\in \{i,d,a\}$, the equality $\psi _{U}^{b}(z)=\psi _{U}^{b}(%
\tilde{z})$ holds. It is clear that the considered equality holds if $b=i$.
We now consider the case $b=a$.

One can show that there exists a computation tree $\Gamma _{1}=(Y,G_{1})$ over $%
U $ satisfying the following conditions:
\begin{itemize}
\item $\Gamma _{1}$ solves the problem $z$ nondeterministically.
\item $\psi (\Gamma _{1})=\psi _{U}^{a}(z)$.

\item The variable $x_{r}$ is not contained in the expressions assigned to
nodes of $\Gamma _{1}$.
\end{itemize}

Denote $\Gamma _{2}=(Y\cup \{x_{r}\},G_{1})$. Let $\bar{a}^{\prime
}=(a_{1},\ldots ,a_{n},a_{n+1})\in A^{n+1},$ $\bar{a}=(a_{1},\ldots ,a_{n})$%
, $\xi ^{\prime }$ be a complete path in $\Gamma _{2}$, and $\xi $ be the
complete path in $\Gamma _{1}$, which coincides with $\xi ^{\prime }$. One
can show that $\tilde{z}(\bar{a}^{\prime })=z(\bar{a})$ and $\bar{a}^{\prime
}\in \mathcal{A}(\xi ^{\prime })$ if and only if $\bar{a}\in \mathcal{A}(\xi
)$. Using these relations it is not difficult to show that $\Gamma _{2}$
solves the problem $\tilde{z}$ nondeterministically. Therefore $\psi
_{U}^{a}(\tilde{z})\leq \psi _{U}^{a}(z)$.

Let $\Gamma _{3}=(Y\cup \{x_{r}\},G_{3})$ be a computation tree over $U$, which
solves the problem $\tilde{z}$ nondeterministically and for which $\psi
(\Gamma _{3})=$ $\psi _{U}^{a}(\tilde{z})$. Denote $\Gamma _{4}=(Y,G_{3})$.
Let $\bar{a}=(a_{1},\ldots ,a_{n})\in A^{n}$, $\bar{a}^{\prime
}=(a_{1},\ldots ,a_{n},a_{1})$, $\xi $ be a complete path in $\Gamma _{4}$
and $\xi ^{\prime }$ be the complete path in $\Gamma _{3}$,
which coincides with $\xi $. One can show that $z(\bar{a})=\tilde{z}(\bar{a}%
^{\prime })$ and $\bar{a}\in \mathcal{A}(\xi )$ if and only if $\bar{a}%
^{\prime }\in \mathcal{A}(\xi ^{\prime })$. Using these relations one can
show that $\Gamma _{4}$ solves the problem $z$ nondeterministically.
Therefore $\psi _{U}^{a}(z)\leq \psi _{U}^{a}(\tilde{z})$. Hence $\psi
_{U}^{a}(z)=\psi _{U}^{a}(\tilde{z})$.

The case $b=d$ can be considered in a similar way. Thus, $\psi
_{U}^{b}(z)=\psi _{U}^{b}(\tilde{z})$ for any problem $z\in \mathcal{P}(U,n)$
and for any $b\in \{i,d,a\}$. Taking into account that $z$ is an arbitrary
problem from $\mathcal{P}(U,n)$, one can show that $\mathcal{U}_{U\psi
n}^{bc}\vartriangleleft \mathcal{U}_{U\psi n+1}^{bc}$. Using this relation
and Lemma \ref{L3.2} we obtain $typ(\mathcal{U}_{U\psi n}^{bc})\preceq typ(%
\mathcal{U}_{U\psi n+1}^{bc})$.
 \end{proof}

We now define a partial order $\leqq $ on the set of tables $\{t_{1},\ldots
,t_{7}\}$. For $b,c\in \{i,d,a\}$ and $i\in \{1,\ldots ,7\}$, let $%
t_{i}^{bc} $ be the letter from the set $\{\alpha ,\beta ,\gamma \,,\delta
,\,\varepsilon \}$ that is in the table $t_{i}$ in the intersection of the
row with index $b$ and the column with index $c$. Let $i,j\in \{1,\ldots
,7\} $. Then $t_{i}\leqq t_{j}$ if and only if $t_{i}^{bc}\preceq t_{j}^{bc}$
for any $b,c\in \{i,d,a\}$. It is easy to check that the graph depicted in
Fig. \ref{Fig1} is the Hasse diagram for the partially ordered set $(\{t_{1},\ldots
,t_{7}\},\leqq )$. Nodes of this diagram are tables $t_{1},\ldots ,t_{7}$.
An edge goes upward from $t_{i}$ to $t_{j}$ if $t_{i}\leqq t_{j}$ and there
is no $t_{k}$, $t_{k}\notin \{t_{i},t_{j}\}$ such that $t_{i}\leqq
t_{k}\leqq t_{j}$.

\begin{figure}[th]
\centering
\includegraphics[width=0.5\textwidth]{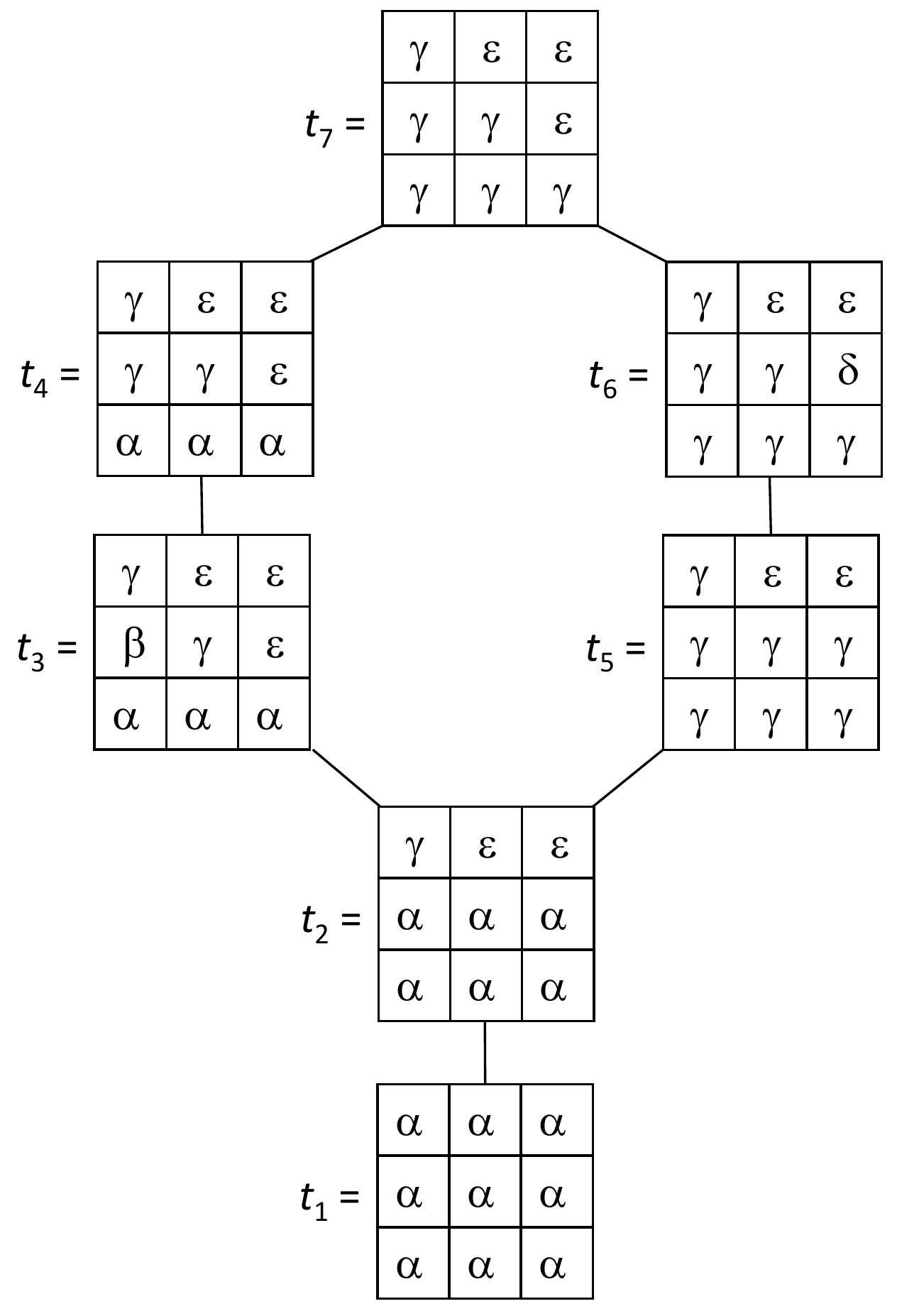}
\caption{Hasse diagram for the partially ordered set $(\{t_{1},\ldots
,t_{7}\},\leqq )$}
\label{Fig1}
\end{figure}

\begin{proof}
[Proof of Proposition \ref{P4.1}] Let $(U,\psi )$ be a sm-pair. Assume that there
exists $n_{0}\in \omega \setminus \{0\}$ for which $typ_{u}(U,\psi
,n_{0})=t_{1}$. Then, by Lemma \ref{L3.4}, the function $\psi _{U}^{i}$ is
bounded from above on the set $\mathcal{P}(U,n_{0})$. From here it follows
that the function $\psi _{U}^{i}$ is bounded from above on the set $\mathcal{%
P}(U,n)$ for any $n\in \omega \setminus \{0\}$. Using Lemmas \ref{L3.1} and %
\ref{L3.4} we obtain $dtyp_{u}(U,\psi )=t_{1}^{\infty }$.

Let for any $n\in \omega \setminus \{0\}$, the relation $typ_{u}(U,\psi
,n)\neq t_{1}$ hold. In this case, by Proposition \ref{P3.1}, $typ_{u}(U,\psi ,n)\in
\{t_{2},\ldots ,t_{7}\}$ for any $n\in \omega \setminus \{0\}$. Using Lemma %
\ref{L4.1}, we obtain that $typ_{u}(U,\psi ,n)\leqq typ_{u}(U,\psi ,n+1)$
for any $n\in \omega \setminus \{0\}$. Simple analysis of the Hasse diagram
for the partially ordered set $(\{t_{1},\ldots ,t_{7}\},\leqq )$ depicted in
Fig. 1 shows that  the set $\Delta _{u}$ coincides with the set of
infinite sequences $t_{i_{1}}t_{i_{2}}\cdots $ such that
$t_{i_{1}},t_{i_{2}},\ldots \in \{t_{2},\ldots ,t_{7}\}$ and $%
t_{i_{1}}\leqq t_{i_{2}}\leqq \cdots $.
Therefore $dtyp_{u}(U,\psi )\in \Delta _{u}$
 \end{proof}

\begin{proof}
[Proof of Proposition \ref{P4.2}] Let $(U,\psi )$ be a limited sm-pair. Using
Proposition \ref{P3.2} we obtain $dtyp_{u}(U,\psi )\neq t_{1}^{\infty }$.
From this relation and Proposition \ref{P4.1} it follows that $%
dtyp_{u}(U,\psi )\in \Delta _{u}$.
 \end{proof}

\section{Realizable Upper Dynamic Types of SM-Pairs}
\label{S6}

In this section, we prove the following two propositions.

\begin{proposition}
\label{P5.1}For any sequence $\tau \in \{t_{1}^{\infty }\}\cup \Delta _{u}$,
there exists a sm-pair $(U,\psi )$ such that $dtyp_{u}(U,\psi )=\tau $.
\end{proposition}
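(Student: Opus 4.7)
My plan is to prove Proposition \ref{P5.1} by explicit construction, handling the degenerate case $\tau = t_1^\infty$ separately from sequences in $\Delta_u$. For $\tau = t_1^\infty$ it suffices to take any structure $U$ together with the complexity measure $\psi \equiv 0$; then $\psi_U^b(z) = 0$ for every problem $z$ and every $b \in \{i,d,a\}$, so by Lemma \ref{L3.4}(a) all nine upper functions have type $\alpha$ and $typ_u(U,\psi,n) = t_1$ for every $n$.

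For sequences in $\Delta_u$, I would first exhibit six ``baseline'' sm-pairs $(U^{(k)},\psi^{(k)})$ with $dtyp_u(U^{(k)},\psi^{(k)}) = t_k^\infty$ for $k \in \{2,\ldots,7\}$, and then combine them to obtain arbitrary prescribed transition points. The baselines would be built by tuning two ingredients: the richness of the set $P_n[F]$ (which, via Lemma \ref{L0.1}, controls whether $\psi_U^d$ is bounded), and the correlation structure among predicates (which controls how much a nondeterministic tree can shortcut a deterministic one). For instance, $t_2$ is realized by a structure whose $P_n[F]$ is finite for every $n$ but whose description length is unbounded; $t_7$ is realized by a structure offering infinitely many mutually independent binary predicates, so that all three parameters $\psi_U^i,\psi_U^d,\psi_U^a$ grow simultaneously on problems of large description; the intermediate tables $t_3,t_4,t_5,t_6$ are realized by examples that tune the gap between deterministic and nondeterministic complexity, e.g.\ structures in which a few key witnesses suffice to certify answers (pushing things toward $t_3$ or $t_5$) versus structures where certification is nearly as hard as determination (toward $t_6,t_7$).

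To realize a general sequence $\tau = t_{i_1}^{n_1} t_{i_2}^{n_2} \cdots t_{i_s}^\infty \in \Delta_u$, I would form a disjoint-union structure $U$ whose base set is $A^{(1)} \sqcup A^{(2)} \sqcup \cdots$ and whose predicates and functions act nontrivially inside $A^{(k)}$ only when at least $n_1 + \cdots + n_{k-1} + 1$ of the input variables take values in the $k$-th sort, with sort-membership predicates acting as gates. The complexity measure $\psi$ inherits the baseline weight within each sector while assigning only a small weight to the gating predicates. Monotonicity of upper types with $n$ (Lemma \ref{L4.1}) then guarantees that $typ_u(U,\psi,n)$ is nondecreasing along the Hasse diagram of Figure \ref{Fig1}, and the gating ensures that no sector beyond the $k$-th one contributes for $n \leq n_1 + \cdots + n_k$.

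The main obstacle will be showing that this gating does not allow problems to mix sectors in a way that overshoots the intended type. The lower bound on $typ_u(U,\psi,n)$ inherited from $U^{(i_k)}$ is immediate by restricting attention to problems whose variables lie in a single sort; the matching upper bound requires a decomposition argument showing that any problem at level $n \leq n_1 + \cdots + n_k$ reduces, modulo bounded overhead, to one living inside the first $k$ sectors only. Verifying this decomposition, and the accompanying case analysis distinguishing the seven table types (especially the subtle distinction between $t_5,t_6,t_7$, all of which share $\gamma$ in the $(a,i)$ slot but differ in how $\psi_U^d$ relates to $\psi_U^a$), is where the nontrivial constructions signalled in the introduction to Section \ref{S6} will enter.
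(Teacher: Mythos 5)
Your overall architecture matches the paper's: the case $\tau = t_1^{\infty}$ is handled exactly as in Lemma \ref{L5.7} by taking $\psi \equiv 0$, and for $\tau \in \Delta _u$ the paper likewise builds a direct sum of lifted baseline sm-pairs $\pi _{v_1}^{(n_1)}\oplus \cdots \oplus \pi _{v_m}^{(n_m)}$, where the lifting $\pi _r^{(n)}$ turns each one-place predicate of the baseline into an $n$-ary predicate that is nonzero only on tuples whose first $n-1$ coordinates are designated key constants --- this is your ``gating'', realized without any extra sort-membership predicates. So the plan is sound. But as written the proposal is a plan, not a proof: Proposition \ref{P5.1} is a pure realizability statement, and everything that makes it true --- the six explicit baseline structures and the verification that each yields the intended table at each level $n$ --- is deferred to ``where the nontrivial constructions will enter''. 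In the paper this deferred content is Lemmas \ref{L5.1}--\ref{L5.6}, each of which needs a genuinely different argument: a binary-search upper bound plus a fooling-set lower bound for $t_3$; a reduction of arbitrary equation systems to at most $2\sum _{j}(n)_{n_j}$ equations for $t_4$; an engineered weight recursion $\psi _5(q_i)=i\sum _{t<i}\psi _5(q_t)$ for $t_5$ so that $\mathcal{U}_{U\psi n}^{da}$ oscillates around the diagonal; a three-predicate gadget $q_{2i},q_{2i+1},p_{2i}$ of equal weight for $t_6$; and a mixture of weight-one threshold predicates with weight-$k$ point predicates for $t_7$. None of these can be recovered from the qualitative descriptions you give.

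Moreover, one of those descriptions is backwards. You say that structures ``in which a few key witnesses suffice to certify answers'' push toward $t_3$ or $t_5$, while structures ``where certification is nearly as hard as determination'' push toward $t_6,t_7$. In fact $t_5$ has $\gamma $ in the $(d,a)$ slot, i.e.\ deterministic and nondeterministic complexities are comparable infinitely often --- that is the ``certification as hard as determination'' regime --- whereas $t_7$ has $\varepsilon $ there, witnessed by problems of nondeterministic complexity at most $2$ whose deterministic complexity is unbounded, i.e.\ precisely the regime where a couple of witnesses certify everything. Since the whole difficulty of separating $t_5$, $t_6$, $t_7$ lies in controlling the $(d,a)$ entry ($\gamma $ vs.\ $\delta $ vs.\ $\varepsilon $), getting this correspondence wrong suggests the baselines you would actually write down might not land on the intended tables. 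Finally, Lemma \ref{L4.1} only gives monotonicity of the types in $n$; it does not locate the transition points, so the verification that every level $n\in [n_r,n_{r+1})$ yields exactly $t_{v_r}$ --- including showing that the predicates inherited from earlier sectors, which remain realizable at level $n$, do not raise the type above $t_{v_r}$ --- still has to be carried out sector by sector, as in Lemmas \ref{L5.2}, \ref{L5.3}, and \ref{L5.5} of the paper.
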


\begin{proposition}
\label{P5.2}For any sequence $\tau \in \Delta _{u}$, there exists a limited
sm-pair $(U,\psi )$ such that $dtyp_{u}(U,\psi )=\tau $ and $\psi $ is a
weighted depth.
\end{proposition}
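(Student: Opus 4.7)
My plan is to construct, for every $\tau \in \Delta_u$, an explicit limited sm-pair $(U,\psi)$ with $\psi$ a weighted depth and $dtyp_u(U,\psi)=\tau$. The argument splits into three stages: building atomic examples for each terminal table $t_i^\infty$, defining a combination operation that pastes two such examples together while controlling the transition index, and then composing these ingredients to realize every prefix permitted by the description of $\Delta_u$.

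First, the atomic examples. For each $i\in\{2,\ldots,7\}$ I would exhibit an explicit structure $V_i=(A_i,F_i,P_i)$ and a weight function $w_i\colon F_i\cup P_i\to\omega\setminus\{0\}$ so that $dtyp_u(V_i,\psi^{w_i})=t_i^\infty$. For $t_2^\infty$, a structure with an infinite set of unary predicates that all share one constant-output behavior makes $\psi_U^i$ unbounded on $\mathcal{P}(V_2,n)$ while $\psi_U^d$ stays bounded; by Lemmas \ref{L3.4} and \ref{L3.5} the $n$-type is forced to be $t_2$. For $t_5^\infty$ one takes a structure whose predicates are "independent" on every finite tuple, so that information-theoretic lower bounds make $\psi_U^a(z)$, $\psi_U^d(z)$ and $\psi_U^i(z)$ grow together. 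The intermediate tables $t_3,t_4,t_6,t_7$ separate the three parameters to different degrees and need structures where nondeterministic verification is strictly cheaper than deterministic computation; the weights are then tuned so that each functional or predicate symbol contributes its natural atomic cost, keeping $\psi^{w_i}$ limited. In every case the verification that $typ_u(V_i,\psi^{w_i},n)=t_i$ for every $n$ is a direct application of the type criteria in Lemmas \ref{L3.2}--\ref{L3.6} together with Corollary \ref{C3.1}.

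Second, a combination operation. Given two limited sm-pairs $(V,\psi^v)$ and $(W,\psi^w)$ with $dtyp_u(V,\psi^v)=t_i^\infty$, $dtyp_u(W,\psi^w)=t_j^\infty$ and $t_i\leqq t_j$, together with an integer threshold $N\geq 1$, I would build a new structure $V\oplus_N W$ on the domain $A_V\sqcup A_W\sqcup\{*\}$ by extending all predicates and functions to return a fixed default value whenever some argument lies outside the component on which they were originally defined. The crucial design point is to insert into $P$ a family of "guard" predicates whose arity forces any problem that wishes to exercise the $W$-part of the structure to mention at least $N$ input variables. The weight function is the obvious union of $w^v$ and $w^w$, so the new $\psi$ is again a weighted depth and the resulting sm-pair is limited. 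By construction, problems in $\mathcal{P}(V\oplus_N W,n)$ with $n<N$ are reducible to problems over $V$, while for $n\geq N$ the full expressive power of $W$ becomes available. Hence the upper $n$-type equals $t_i$ for $n<N$ and $t_j$ for $n\geq N$.

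Third, composing prefixes. An arbitrary $\tau\in\Delta_u$ has the form $t_{i_1}^{n_1}\cdots t_{i_p}^{n_p}t_{i_{p+1}}^\infty$ with $t_{i_1}\leqq\cdots\leqq t_{i_{p+1}}$ in the Hasse diagram of Fig.~\ref{Fig1}. Iterating the combination operation with thresholds $N_1<N_2<\cdots<N_p$ chosen to match the partial sums $n_1,n_1+n_2,\ldots$ yields a limited sm-pair whose upper $n$-type coincides with $\tau(n)$ for every $n\geq 1$. The main obstacle, and the point where most care is required, is ensuring that the $n$-type jumps exactly at the prescribed thresholds and never overshoots: Lemma \ref{L4.1} prevents downward movement, but without the arity-based guards one can easily witness a larger $t_k$ at some intermediate $n$. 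Verifying that the guard predicates are tight — that problems with fewer than $N_s$ input variables genuinely cannot simulate the $s$-th block — is the combinatorial heart of the argument and is presumably one of the "non-trivial constructions" promised in the introduction. Once this is done, Proposition \ref{P4.2} pins down $dtyp_u(U,\psi)$ in $\Delta_u$, and the explicit matching of each $typ_u(U,\psi,n)$ shows that it equals $\tau$.
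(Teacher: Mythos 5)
Your overall architecture coincides with the paper's: build one atomic sm-pair per table $t_i$, gate each block behind predicates of a prescribed arity so that it only ``switches on'' once the number of input variables reaches a threshold, and take a disjoint union of such blocks along a chain of the Hasse diagram. The paper does exactly this with the sm-pairs $\pi_i$, their arity-lifted versions $\pi_i^{(n)}$ (where an $n$-ary predicate is nonconstant only when its first $n-1$ arguments hit $n-1$ pairwise distinct key values, which is precisely your ``guard''), and the sum $\pi_{v_1}^{(1)}\oplus\pi_{v_2}^{(w_1+1)}\oplus\cdots$. Your worry about overshooting at the thresholds is also the right one, and the paper resolves it in Lemmas \ref{L5.3} and \ref{L5.5} by explicitly checking that the predicate families of the earlier blocks (types $3$ or $5$) cannot raise the type of the union above that of the newest block.

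The genuine gap is in the atomic examples, which are the crux of the whole proposition and are not supplied. The tables $t_5$, $t_6$, $t_7$ differ \emph{only} in the single entry $typ(\mathcal{U}_{U\psi n}^{da})\in\{\gamma,\delta,\varepsilon\}$, and realizing each value requires a bespoke construction, not a ``direct application'' of Lemmas \ref{L3.2}--\ref{L3.6} (those lemmas only constrain which tables are possible; they do not produce witnesses). Worse, your sketch for $t_5$ --- predicates that are ``independent on every finite tuple'' so that the three parameters grow together --- describes the mechanism that actually yields $t_4$: in the paper, $P_4$ is the set of \emph{all} maps $\omega\to\{0,1\}$, and precisely because of that independence any consistent system of equations collapses to boundedly many, making $\psi_U^{a}$ bounded. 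To get $t_5$ the paper instead uses point-indicator predicates $q_i$ with explosively growing weights $\psi_5(q_i)=i\sum_{t<i}\psi_5(q_t)$, so that computing all cheaper predicates deterministically still costs less than the one expensive predicate any correct tree is forced to use; this is what makes $Dom^{+}$ and $Dom^{-}$ of $\mathcal{U}^{da}$ both infinite. For $t_6$ one needs the three-predicate gadget $q_{2m},q_{2m+1},p_{2m}$ of equal weight $m$, where one nondeterministic guess costs $m$ but any deterministic tree must pay $2m$, giving $typ(\mathcal{U}^{da})=\delta$; for $t_7$ one mixes threshold predicates of weight $1$ (forcing $\mathcal{U}^{da}(2)=\infty$) with expensive indicators keeping $\psi_U^{a}$ unbounded. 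None of this follows from ``tuning weights to their natural atomic cost'' --- indeed the paper leaves open whether these types are even realizable when $\psi$ is the unweighted depth --- so without these constructions and their lower-bound arguments the proof is incomplete at its hardest point.
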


We now describe a construction that will be used in the proofs of the
considered propositions.

For $i=2,\ldots ,7$, we define a sm-pair $\pi _{i}=(U_{i},\psi _{i})$, where
$U_{i}=(A_{i},F_{i},P_{i})$, $F_{i}=\emptyset $, $P_{i}$ is a set of
one-place predicates, and $\psi _{i}$ is a weighted depth. It is clear that
for the definition of the function $\psi _{i}$ it is enough to define values
of $\psi _{i}$ on elements of the set $P_{i}$.

Define the sm-pair $\pi _{2}$ as follows: $A_{2}=\{0\}$, $P_{2}=\{q_1\}$, $%
q_1(0)=0$, and $\psi _{2}(q_1)=1$.

Define the sm-pair $\pi _{3}$ as follows: $A_{3}=\omega $, $%
P_{3}=\{l_{i}:i\in \omega \}$,
\[
l_{i}(j)=\left\{
\begin{array}{cc}
0, & j\leq i, \\
1, & j>i,%
\end{array}%
\right.
\]%
and $\psi _{3}(l_{i})=1$ for any $i,j\in \omega $, i.e., $\psi _{3}=h$.

Define the sm-pair $\pi _{4}$ as follows: $A_{4}=\omega $, $P_{4}\ $is the
set of mappings of the kind $f:\omega \rightarrow \{0,1\}$, and $\psi
_{4}(f)=1$ for any $f\in P_{4}$, i.e., $\psi _{4}=h$.

Define the sm-pair $\pi _{5}$ as follows: $A_{5}=\omega $, $%
P_{5}=\{q_{i}:i\in \omega \}$,
\[
q_{i}(j)=\left\{
\begin{array}{cc}
0, & j\neq i, \\
1, & j=i%
\end{array}%
\right.
\]%
for any $i,j\in \omega $, $\psi _{5}(q_{0})=1$, and $\psi _{5}(q_{i})=i\sum_{t=0}^{i-1}\psi
_{5}(q_{t})$ for $i\geq 1$.

Define the sm-pair $\pi _{6}$ as follows: $A_{6}=\omega $, $%
P_{6}=\{q_{2i},q_{2i+1},p_{2i}:i\in \omega \backslash \{0\}\}$, $q_{2i}(j)=1$
if and only if $j=2i$, $q_{2i+1}(j)=1$ if and only if $j=2i+1$, $p_{2i}(j)=1$
if and only if $j=2i$ or $j=2i+1$, and $\psi _{6}(q_{2i})=\psi
_{6}(q_{2i+1})=\psi _{6}(p_{2i})=i$ for any $i\in \omega \setminus \{0\}$
and $j\in \omega $.

Define the sm-pair $\pi _{7}$ as follows: $A_{7}=\mathbb{Z}$, where $\mathbb{Z}$ is the set of integers, $P_{7}=\{l_{i}:i\in \omega \}\cup \{q_{-k}:k\in
\omega \setminus \{0\}\}$,
\[
l_{i}(j)=\left\{
\begin{array}{cc}
0, & j\leq i, \\
1, & j>i,%
\end{array}%
\right.
\;\;
q_{-k}(j)=\left\{
\begin{array}{cc}
0, & j\neq -k, \\
1, & j=-k,%
\end{array}%
\right.
\]%
$\psi _{7}(l_{i})=1$, and $\psi _{7}(q_{-k})=k$ for any $i\in \omega $, $%
k\in \omega \setminus \{0\}$, and $j\in \mathbb{Z}$.

Let $r\in \{2,\ldots ,7\}$ and $n\in \omega \setminus \{0\}$. Define a
sm-pair $\pi _{r}^{(n)}=(U_{r}^{(n)},\psi _{r}^{(n)})$, where $%
U_{r}^{(n)}=(A_{r}^{(n)},F_{r}^{(n)},P_{r}^{(n)})$ and $\psi _{r}^{(n)}$ is a weighted
depth. Let $K=\{k_{i}:i\in \omega \}$ be a set such that $k_i \neq k_j$ if $i\neq j$ and $K\cap
\mathbb{Z}
=\emptyset $. For any $c\in A_{r}\cup K$, denote $c^{(n)}=(c,n)$. Then $%
A_{r}^{(n)}=\{a^{(n)}:a\in A_{r}\}\cup \{k_{0}^{(n)},k_{1}^{(n)},\ldots
,k_{n-1}^{(n)}\}$, $F_{r}^{(n)}=\emptyset $, $P_{r}^{(n)}=\{g^{(n)}:g\in
P_{r}\}$, where
\begin{equation}
g^{(n)}(x_{1},\ldots ,x_{n})=\left\{
\begin{array}{rl}
g(a), & x_{1}=k_{1}^{(n)},\ldots ,x_{n-1}=k_{n-1}^{(n)}, \\ & x_{n}=(a,n)\in
A_{r}^{(n)}, \\
0, & \mathrm{otherwise,}%
\end{array}%
\right.  \label{E1}
\end{equation}%
and $\psi _{r}^{(n)}(g^{(n)})=\psi _{r}(g)$ for any $g\in P_{r}$.

Let $i_{1},\ldots ,i_{m}\in \{2,\ldots ,7\}$, $n_{1},\ldots ,n_{m}\in \omega
\setminus \{0\}$, and $n_{1}<\cdots <n_{m}$. Define the sm-pair $\pi
_{i_{1}}^{(n_{1})}\oplus \cdots \oplus \pi _{i_{m}}^{(n_{m})}$ as follows: $%
\pi _{i_{1}}^{(n_{1})}\oplus \cdots \oplus \pi _{i_{m}}^{(n_{m})}=(U,\psi ),$
$U=(A,F,P)$, $A=A_{i_{1}}^{(n_{1})}\cup \cdots \cup
A_{i_{m}}^{(n_{m})}$, $F=\emptyset $, $P=P_{i_{1}}^{(n_{1})}\cup \cdots \cup
P_{i_{m}}^{(n_{m})}$, and $\psi $ is a weighted depth. For any $j\in
\{1,\ldots ,m\}$ and $g^{(n_{j})}\in P_{i_{j}}^{(n_{j})}$, $g^{(n_{j})}$ is
equal to $0$ on tuples that do not belong to the set $%
(A_{i_{j}}^{(n_{j})})^{n_{j}}$, and is defined by (\ref{E1})
with $n=n_j$ and $r=i_j$
on tuples from $%
(A_{i_{j}}^{(n_{j})})^{n_{j}}$, and $\psi (g^{(n_{j})})=\psi
_{i_{j}}(g^{(n_{j})})$.

Let $\tau \in \Delta _{u}$ and $\tau =t_{v_{1}}^{w_{1}}\cdots
t_{v_{m}}^{w_{m}}$, where $v_{1},\ldots ,v_{m}$ are pairwise different
numbers from $\{2,\ldots ,7\}$, $w_{m}=\infty $ and if $m\geq 2$, then $%
w_{j} \in \omega \setminus \{0\}$ for $j=1,\ldots ,m-1$. Define the sm-pair $(U_{\tau },\psi _{\tau
}) $ as follows: $$(U_{\tau },\psi _{\tau })=\pi _{v_{1}}^{(1)}\oplus \pi
_{v_{2}}^{(w_{1}+1)}\oplus \cdots \oplus \pi _{v_m}^{(w_{1}+\cdots
+w_{m-1}+1)}. $$ In particular, if $m=1$, then $(U_{\tau },\psi _{\tau })=\pi _{v_{1}}^{(1)}$.

Let $\tau \in \Delta _{u}$, $(U_{\tau },\psi _{\tau })=\pi
_{v_{1}}^{(n_{1})}\oplus \cdots \oplus \pi _{v_{m}}^{(n_{m})}$, $n\in \omega
\setminus \{0\}$, $r\in \{1,\ldots ,m\}$, $n\geq n_{r}$ and if $r<m$, then $%
n<n_{r+1}$. Let $U_{\tau }=(A,F,P)$. It is clear that all functions from $%
P[F]$ are of the kind $g(x_{i_{1}},\ldots ,x_{i_{s}})$, where $g\in P$.
Since $n<n_{r+1}$, any function from $P[F]$ that depends on $n$  variables
and does not equal identically to $0$ is of the kind $g(x_{i_{1}},\ldots
,x_{i_{n_{j}}})$, where $j\leq r$, $g\in P_{v_{j}}^{(n_{j})}$ and $%
x_{i_{1}},\ldots ,x_{i_{n_{j}}}$ are pairwise different variables. The tuple
$(x_{i_{1}},\ldots ,x_{i_{n_{j}}})$ will be called the sort of the
considered function. Let $Y$ be a set of variables and $|Y|=n$. Then a
function depending on $n_{j}$ pairwise different variables from $Y$ may have
exactly $(n)_{n_{j}}=n(n-1)\cdots (n-n_{j}+1)$ different sorts.

We now prove some statements about properties of sm-pairs $(U_{\tau },\psi
_{\tau })$, where $\tau \in \Delta _{u}$.

\begin{lemma}
\label{L5.1} Let $\tau \in \Delta _{u}$, $(U_{\tau },\psi _{\tau })=\pi
_{v_{1}}^{(n_{1})}\oplus \cdots \oplus \pi _{v_{m}}^{(n_{m})}$, $r\in
\{1,\ldots ,m\}$, $v_{r}=2$, $n\in \omega \setminus \{0\}$, $n\geq n_{r}$
and if $r<m$, then $n<n_{r+1}$. Then $typ_{u}(U_{\tau },\psi _{\tau
},n)=t_{2}$.
\end{lemma}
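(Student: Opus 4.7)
The strategy is to pin down both ends of the $t_2$ profile: the parameters $\psi_{U_\tau}^d$ and $\psi_{U_\tau}^a$ are uniformly bounded on $\mathcal{P}(U_\tau,n)$ (giving the six $\alpha$'s in rows $d$ and $a$), while $\psi_{U_\tau}^i$ is unbounded (giving the $\gamma$ at position $(i,i)$, from which Lemma \ref{L3.5} supplies the two $\varepsilon$'s at positions $(i,d)$ and $(i,a)$). The only substantial step is to show that every non-trivial predicate available in $U_\tau$ for $n$-variable problems is identically $0$ on $A^n$.

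To establish this, I would invoke the structural observation in the paragraph preceding the lemma: any function in $P[F]$ that can appear as some $\alpha_s$ in $\Pi(Y,\beta(z))$ for a problem on an $n$-element input-variable set $Y$ and is not identically zero must be of the form $g(x_{i_1},\ldots,x_{i_{n_j}})$ with $j\leq r$, $g\in P_{v_j}^{(n_j)}$, and the $x_{i_k}$'s pairwise distinct. Because $v_1,\ldots,v_m$ are pairwise distinct and every element of $\Delta_u$ containing $t_2$ begins with $t_2$, the hypothesis $v_r=2$ forces $r=1$ and $n_r=1$. Hence $j=1$ and $g\in P_2^{(1)}=\{q_1^{(1)}\}$; since $q_1(0)=0$ is the only value taken by $q_1$, formula (\ref{E1}) gives $q_1^{(1)}\equiv 0$, contradicting non-triviality.

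Consequently, for every $z=(Y,\nu,\beta_1,\ldots,\beta_m)\in\mathcal{P}(U_\tau,n)$, every component of $\Pi(Y,\beta(z))$ is identically zero, so $z(\bar{a})=\nu(0,\ldots,0)$ for all $\bar{a}\in A^n$. Fixing any $c\in\nu(0,\ldots,0)$, the computation tree consisting of the root, one edge, and a terminal node labeled $c$ solves $z$ deterministically and has $\psi_\tau$-complexity $\psi_\tau(\lambda)=0$, because $\psi_\tau$ is a weighted depth. Therefore $\psi_{U_\tau}^d(z)\leq 0$ for every $z\in\mathcal{P}(U_\tau,n)$, and by Lemma \ref{L3.1} the same bound holds for $\psi_{U_\tau}^a$. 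Lemma \ref{L3.4}(a) then gives $typ(\mathcal{U}_{U_\tau\psi_\tau n}^{bc})=\alpha$ for all $b\in\{d,a\}$ and $c\in\{i,d,a\}$.

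For the first row, I would exhibit a family of problems $z_s$ obtained by repeating the single predicate expression $q_1^{(1)}(x_{l_1})$ exactly $s$ times, paired with any fixed $\nu:E_2^s\to S(\omega)$. Then $\psi_{U_\tau}^i(z_s)=s\cdot \psi_\tau(q_1^{(1)})=s$, so $\psi_{U_\tau}^i$ is unbounded on $\mathcal{P}(U_\tau,n)$; Lemma \ref{L3.4}(b) yields $typ(\mathcal{U}_{U_\tau\psi_\tau n}^{ii})=\gamma$, and Lemma \ref{L3.5} then delivers $\varepsilon$ in positions $(i,d)$ and $(i,a)$. Combining everything gives $typ_u(U_\tau,\psi_\tau,n)=t_2$. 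The main obstacle is the combinatorial bookkeeping that reduces to $r=1$ and drives the identically-zero conclusion; once that reduction is secured, the remainder of the argument is a direct application of the auxiliary lemmas.
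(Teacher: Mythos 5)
Your proof is correct and follows essentially the same route as the paper: reduce to $r=1$, observe that every function in $P[F]$ on $n$ variables is identically zero (so $\psi_{U_\tau}^{d}$ and $\psi_{U_\tau}^{a}$ are bounded while $\psi_{U_\tau}^{i}$ is not), and then apply Lemmas \ref{L3.4} and \ref{L3.5}. The only cosmetic difference is that the paper stops after establishing $typ(\mathcal{U}_{U\psi n}^{ii})=\gamma$ and $typ(\mathcal{U}_{U\psi n}^{di})=\alpha$ and invokes Proposition \ref{P3.1} to fill in the remaining entries, whereas you derive all nine entries directly; you also usefully spell out the ``one can show'' step that $q_{1}^{(1)}\equiv 0$ and that higher-arity predicates degenerate for lack of distinct variables.
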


\begin{proof}
Since $v_{r}=2,$ $r=1$. Denote $(U,\psi )=(U_{\tau },\psi _{\tau })$. It is
clear that the function $\psi _{U}^{i}$ is unbounded from above on the set $%
\mathcal{P}(U,n)$. Using Lemma \ref{L3.4} we obtain that $typ(\mathcal{U}%
_{U\psi n}^{ii})=\gamma $. Let $U=(A,F,P)$. Taking into account that in the
case $m\geq 2$ the inequality $n<n_{2}$ holds, one can show that all
functions from $P[F]$ depending on $n$ variables are equal identically to $0$%
. Therefore the function $\psi _{U}^{d}$ is bounded from above on the set $%
\mathcal{P}(U,n)$. From here and from Lemma \ref{L3.4} it follows that $typ(%
\mathcal{U}_{U\psi n}^{di})=\alpha $. Using Proposition \ref{P3.1} we obtain
$typ_{u}(U_{\tau },\psi _{\tau },n)=t_{2}$.
 \end{proof}

\begin{lemma}
\label{L5.2} Let $\tau \in \Delta _{u}$, $(U_{\tau },\psi _{\tau })=\pi
_{v_{1}}^{(n_{1})}\oplus \cdots \oplus \pi _{v_{m}}^{(n_{m})}$, $r\in
\{1,\ldots ,m\}$, $v_{r}=3$, $n\in \omega \setminus \{0\}$, $n\geq n_{r}$
and if $r<m$, then $n<n_{r+1}$. Then $typ_{u}(U_{\tau },\psi _{\tau
},n)=t_{3}$.
\end{lemma}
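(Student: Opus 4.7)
The plan is to verify the two equalities $typ(\mathcal{U}_{U\psi n}^{ii}) = \gamma$ and $typ(\mathcal{U}_{U\psi n}^{di}) = \beta$, from which case (c) in the proof of Proposition~\ref{P3.1} yields $typ_u(U_\tau,\psi_\tau,n) = t_3$ directly. Write $(U,\psi) = (U_\tau,\psi_\tau)$, $U=(A,F,P)$. Inspecting the sequences in $\Delta_u$ that contain $t_3$, one sees that any summand preceding $\pi_3^{(n_r)}$ in $(U_\tau,\psi_\tau)$ must be $\pi_2^{(1)}$, whose unique predicate is identically zero. Hence the only \emph{useful} elements of $P[F]$ that a problem with $n$ input variables can use are predicate expressions $l_i^{(n_r)}(x_{q_1},\ldots,x_{q_{n_r}})$ with the $x_{q_j}$'s pairwise distinct members of $Y$; there are at most $N := (n)_{n_r}$ sorts, and every predicate carries weight $1$. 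For $typ(\mathcal{U}_{U\psi n}^{ii}) = \gamma$, a problem that lists $k$ distinct $l_i^{(n_r)}$ expressions on a fixed sort has $\psi_U^i = k$, so $\psi_U^i$ is unbounded on $\mathcal{P}(U,n)$ and Lemma~\ref{L3.4} supplies the type.

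The crux is $typ(\mathcal{U}_{U\psi n}^{di}) = \beta$, which by Lemma~\ref{L3.3} and Corollary~\ref{C3.1} a priori lies in $\{\alpha,\beta,\gamma\}$. To rule out $\alpha$, I would take the problem $z_k$ whose description consists of the predicate expressions $l_1^{(n_r)}, l_2^{(n_r)}, \ldots, l_k^{(n_r)}$ applied to a fixed tuple of pairwise distinct variables from $Y$, and pick $\nu_k$ so that the $k+1$ monotone outcome patterns $(1^{j}0^{k-j})$, $0 \le j \le k$, realised on the matching inputs $(k_1^{(n_r)},\ldots,k_{n_r-1}^{(n_r)},(b,n_r))$ with $b\in\omega$, are mapped to pairwise disjoint singletons of $\omega$. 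Since a deterministic tree must assign distinct terminal labels to these $k+1$ equivalence classes of $A^n$, we get $\psi_U^d(z_k) \ge \lceil \log_2(k+1)\rceil \to \infty$, so $\mathcal{U}_{U\psi n}^{di}$ is unbounded.

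To rule out $\gamma$---the main obstacle---I would construct, for any $z \in \mathcal{P}(U,n)$ with $\psi_U^i(z) = m$, a deterministic tree of depth at most $N\lceil\log_2(m+1)\rceil$. The construction enumerates the at most $N$ sorts occurring in the description of $z$; for each sort, with its at most $m$ distinct indices $i_1 < \cdots < i_{k_\sigma}$, it runs a binary search using queries $l_{i_\mathrm{mid}}^{(n_r)}$. Monotonicity of the $l_i$'s (namely $l_i(b) = 0 \Rightarrow l_{i'}(b) = 0$ for $i' \ge i$) guarantees that the search tree correctly identifies the position of the input value when the sort matches, and, crucially, in the ``sort does not match'' case every such query returns $0$, so the search descends to the all-zero leaf---which coincides with the $j=0$ matching pattern and therefore produces the correct answer. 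This yields $\mathcal{U}_{U\psi n}^{di}(m) \le N\lceil\log_2(m+1)\rceil$, which is strictly less than $m$ once $m$ exceeds a constant depending only on $n$. Consequently $Dom^+(\mathcal{U}_{U\psi n}^{di})$ is finite and the type is $\beta$, completing the proof via case (c) in the proof of Proposition~\ref{P3.1}.
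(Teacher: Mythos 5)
Your proposal is correct and follows essentially the same route as the paper's proof: reduce to the predicates $l_i^{(n_r)}$ on at most $(n)_{n_r}$ sorts, rule out type $\gamma$ via a per-sort binary search giving the bound $(n)_{n_r}\cdot O(\log m)$, and rule out type $\alpha$ by exhibiting problems whose $k+1$ pairwise-disjoint outcome classes force a deterministic tree to have unboundedly many terminal nodes. The only differences are cosmetic (you phrase the $\alpha$-exclusion as an explicit $\log_2(k+1)$ lower bound rather than the paper's contradiction with a bounded leaf count, and you spell out the ``sort does not match'' case that the paper leaves implicit).
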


\begin{proof}
One can show that $r=1$ or $r=2$. In the latter case, $v_{1}=2$. Denote $%
(U,\psi )=(U_{\tau },\psi _{\tau })$. Let $U=(A,F,P)$. Since $n<n_{r+1}$,
all functions from $P[F]$, which depend on $n$ variables and do not equal
identically to $0$ are of the kind $l_{i}^{(n_{r})}(x_{j_{1}},\ldots
,x_{j_{n_{r}}})$, where $x_{j_{1}},\ldots ,x_{j_{n_{r}}}$ are pairwise
different variables.

We now prove that $typ(\mathcal{U}_{U\psi n}^{di})=\beta $. Using Corollary %
\ref{C3.1} and Lemma \ref{L3.3} we obtain $typ(\mathcal{U}_{U\psi
n}^{di})\in \{\alpha ,\beta ,\gamma \}$. Show that $typ(\mathcal{U}_{U\psi
n}^{di})\neq \gamma $. Let $m\in \omega \setminus \{0\}$, $z=(Y,\nu ,\beta
_{1},\ldots ,\beta _{t})\in \mathcal{P}(U,n)$, and $t\leq m$. One can
transform the problem $z$ into a problem $z^{\prime }=(Y,\nu ' ,\alpha
_{1},\ldots ,\alpha _{t'})$ such that $z(\bar{a})=z^{\prime }(\bar{a})$ for
any $\bar{a}\in A^{n}$, $t' \leq t$ and, for $s=1,\ldots ,t'$, the expression $\alpha _{s}$
is an expression of the kind $l_{i}^{(n_{r})}(x_{j_{1}},\ldots
,x_{j_{n_{r}}})$, where $x_{j_{1}},\ldots
,x_{j_{n_{r}}}$ are pairwise different variables from $Y$. There are
exactly $(n)_{n_{r}}$ different sorts of such functions.
Using
an approach similar to the binary search algorithm it is not difficult to show that, for any $c$ functions of the
same sort, there exists a computation tree over $U,$ which computes values of the
considered $c$ functions and which depth is at most $1+\log _{2}c$. Using
this fact it is not difficult to show that $h_{U}^{d}(z^{\prime })\leq
(n)_{n_{r}}(1+\log _{2}t')$ and $h_{U}^{d}(z)\leq (n)_{n_{r}}(1+\log _{2}t)$.
Taking into account that $z$ is an arbitrary problem from $\mathcal{P}(U,n)$
such that $\psi _{U}^{i}(z)\leq m$, we obtain $\mathcal{U}_{U\psi
n}^{di}(m)\leq $ $(n)_{n_{r}}(1+\log _{2}m)$. Therefore $\mathcal{U}_{U\psi
n}^{di}(m)<m$ for large enough $m$ and $typ(\mathcal{U}_{U\psi n}^{di})\neq
\gamma $.

We now prove that $typ(\mathcal{U}_{U\psi n}^{di})\neq \alpha $. Assume the
contrary. Using Lemma \ref{L3.4} we obtain that there exists a number $p\in
\omega $ such that $\psi _{U}^{d}(z) \le p$ for any problem $z\in \mathcal{P}(U,n)$%
. Therefore there exists a number $q\in \omega \setminus \{0\}$ that
satisfies the following condition: for any problem $z\in \mathcal{P}(U,n)$,
there exists a computation tree $\Gamma \in \mathcal{T}(U)$, which solves the
problem $z$ deterministically and has at most $q$ terminal nodes. Let us
consider a problem $$z^{\prime }=(Y,\nu ,l_{1}^{(n_{r})}(x_{1},\ldots
,x_{n_{r}}),\ldots ,l_{q}^{(n_{r})}(x_{1},\ldots ,x_{n_{r}}))$$ from $%
\mathcal{P}(U,n)$, where $Y=\{x_{1},\ldots ,x_{n}\}$, $\nu
:E_{2}^{q}\rightarrow \mathcal{S}(\omega )$ and $\nu (\bar{\delta}_{1})\cap
\nu (\bar{\delta}_{2})=\emptyset $ for any $\bar{\delta}_{1},\bar{\delta}%
_{2}\in E_{2}^{q}$ such that $\bar{\delta}_{1}\neq \bar{\delta}_{2}$. It is
easy to check that there are $q+1$ tuples $\bar{a}_{1},\ldots ,\bar{a}%
_{q+1}\in A^{n}$, such that $z(\bar{a}_{i})\cap z(\bar{a}_{j})=\emptyset $
for any $i,j\in \{1,\ldots ,q+1\}$,  $i\neq j$. Let $\Gamma $ be an
arbitrary computation tree over $U$, which solves the problem $z^{\prime }$
deterministically. It is not difficult to show that the tree $\Gamma $
should have at least $q+1$ terminal nodes. We obtain a contradiction.
Therefore $typ(\mathcal{U}_{U\psi n}^{di})=\beta $. From this equality and
Proposition \ref{P3.1} it follows that $typ_{u}(U_{\tau },\psi _{\tau
},n)=t_{3}$.
 \end{proof}

\begin{lemma}
\label{L5.3} Let $\tau \in \Delta _{u}$, $(U_{\tau },\psi _{\tau })=\pi
_{v_{1}}^{(n_{1})}\oplus \cdots \oplus \pi _{v_{m}}^{(n_{m})}$, $r\in
\{1,\ldots ,m\}$, $v_{r}=4$, $n\in \omega \setminus \{0\}$, $n\geq n_{r}$
and if $r<m$, then $n<n_{r+1}$. Then $typ_{u}(U_{\tau },\psi _{\tau
},n)=t_{4}$.
\end{lemma}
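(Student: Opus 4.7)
The overall plan is to invoke case (d) of the proof of Proposition~\ref{P3.1}: once I establish the three identities $typ(\mathcal{U}_{U\psi n}^{ii})=\gamma$, $typ(\mathcal{U}_{U\psi n}^{di})=\gamma$, and $typ(\mathcal{U}_{U\psi n}^{ai})=\alpha$, the remaining six entries of $t_4$ are forced by Lemmas~\ref{L3.3}--\ref{L3.5} and Corollary~\ref{C3.1} exactly as in that case. Throughout, write $(U,\psi)=(U_\tau,\psi_\tau)$ with $U=(A,F,P)$ and recall that $F=\emptyset$.

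The first identity is immediate: $\psi$ is a weighted depth with positive weights and the predicate sets are nonempty, so $\psi_U^i$ is unbounded on $\mathcal{P}(U,n)$, and Lemma~\ref{L3.4}(b) applies.

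For the second identity, the decisive new feature (absent from the structure $\pi_3$ used in Lemma~\ref{L5.2}) is that $P_4$ consists of \emph{all} Boolean functions $\omega\to\{0,1\}$. For each $m\geq 1$ I would pick $f_1,\ldots,f_m\in P_4$ so that the joint map $a\mapsto(f_1(a),\ldots,f_m(a))$ realizes every binary $m$-tuple on some $2^m$-element subset of $\omega$. Form a problem $z_m\in\mathcal{P}(U,n)$ using the predicate expressions $f_i^{(n_r)}(x_1,\ldots,x_{n_r})$, $i=1,\ldots,m$, and a $\nu$ assigning pairwise disjoint singleton answer sets to the $2^m$ realized patterns. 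Each such predicate has weight $1$, whence $\psi_U^i(z_m)=m$. Since $F=\emptyset$, every deterministic computation tree over $U$ branches in at most two directions at each node; distinguishing the $2^m$ required singleton answers therefore requires at least $2^m$ terminal nodes and depth $\geq m$, so Lemma~\ref{L3.1} yields $\psi_U^d(z_m)=m$. Hence $\mathcal{U}_{U\psi n}^{di}(m)=m$ for infinitely many $m$, forcing $typ(\mathcal{U}_{U\psi n}^{di})=\gamma$.

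The hard step is the third identity, uniformly bounding $\psi_U^a$ on $\mathcal{P}(U,n)$. The two structural ingredients are: (i) the number of sorts of predicates in $P[F]$ that depend on $n$ variables is the finite constant $K_n:=\sum_{j=1}^{r}(n)_{n_j}$; and (ii) any input $\bar a\in A^n$ is compatible with at most one sort, since distinct sorts impose mutually incompatible constant-valued conditions on overlapping variables. Given an arbitrary $z\in\mathcal{P}(U,n)$, I would assemble a nondeterministic tree from two kinds of paths. First, for each sort $s=(j,(x_{i_1},\ldots,x_{i_{n_j}}))$ actually used by $z$ and each nonzero pattern $p$ on the sort-$s$ predicates of $z$ that is realized by some input, one path whose constraints witness ``input lies in the sort-$s$ activation region and has pattern $p$'': when $v_j=4$ a single characteristic predicate in $P_4^{(n_j)}$ suffices; when $v_j=3$ at most two threshold queries in $P_3^{(n_j)}$ suffice; when $v_j=2$ no nonzero pattern exists since $q_1\equiv 0$. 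Second, a single ``all-zero'' path that, for each sort $s$ used by $z$ with $v_j\in\{3,4\}$, contains one query witnessing ``every sort-$s$ predicate of $z$ evaluates to $0$ on the input'' and outputs some element of $\nu(0,\ldots,0)$. Every path then has depth at most $K_n$, so $\psi_U^a(z)\leq K_n$, and Lemma~\ref{L3.4}(a) delivers $typ(\mathcal{U}_{U\psi n}^{ai})=\alpha$. The subtle point to verify is that the all-zero path indeed covers every input whose overall pattern is $(0,\ldots,0)$, including both the inputs lying in no sort region and the inputs lying in some sort region where every sort-$s$ predicate happens to take value $0$; this works because an input violating the sort-$s$ activation automatically sends every sort-$s$ predicate to $0$, so the single query chosen for stage $v_j$ characterizes exactly the union of these two kinds of inputs.
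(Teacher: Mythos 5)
Your treatment of $typ(\mathcal{U}_{U\psi n}^{ii})=\gamma$ and of $typ(\mathcal{U}_{U\psi n}^{di})=\gamma$ is correct and essentially identical to the paper's (same choice of $f_1,\ldots,f_m\in P_4$ realizing all $2^m$ patterns, same $2^m$-terminal-node counting argument), and your reduction of the remaining six entries to case (d) of Proposition~\ref{P3.1} is exactly how the paper finishes. The problem is your third step. Your structural ingredient (ii) --- that every input $\bar a\in A^n$ activates at most one sort --- is false. Activation of the sort $(x_{i_1},\ldots,x_{i_{n_j}})$ only pins the values of the variables occurring in that tuple, so two sorts on disjoint (or suitably overlapping) variable tuples can be activated simultaneously: already for $\tau=t_4^{\infty}$, $n_r=1$ and $n=2$, the input $(a^{(1)},b^{(1)})$ with $a,b\in\omega$ activates both sorts $(x_1)$ and $(x_2)$, and one can choose $f,g\in P_4$ with $f^{(1)}(x_1)=g^{(1)}(x_2)=1$ on it. Consequently a path indexed by a single pair $(s,p)$ has $\mathcal{A}(\xi)$ containing inputs whose predicates of \emph{other} sorts are nonzero; for such an input the full pattern differs from ``$p$ on sort $s$, zero elsewhere,'' and since $\nu$ may assign disjoint answer sets to distinct patterns, the terminal label of that path need not lie in $z(\bar a)$. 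So the tree you build does not solve $z$ nondeterministically, and the claim $\psi_U^a(z)\leq K_n$ is not established.

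The repair is the one the paper uses: index the complete paths by the \emph{full} consistent patterns $\bar\delta\in E_2^m$ (there may be up to $2^m$ of them, which is harmless since $\psi(\Gamma)$ is a maximum over paths, not a count of them), and on the path for $\bar\delta$ place a reduced system of equations equivalent on $A^n$ to $\{\alpha_1=\delta_1,\ldots,\alpha_m=\delta_m\}$, obtained by compressing the subsystem of each sort to at most one equation for $P_4$-sorts and at most two equations for $P_3$-sorts. This gives every path complexity at most $2\sum_{j=1}^{r}(n)_{n_j}$, a constant independent of $z$, after which Lemma~\ref{L3.4}(a) yields $typ(\mathcal{U}_{U\psi n}^{ai})=\alpha$ as you intended. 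Your per-sort compression observations (one characteristic predicate for a $P_4$-sort, a minimal threshold pair for a $P_3$-sort) are exactly the right local tools; they just have to be applied within each path of the joint-pattern decomposition rather than as a substitute for it.
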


\begin{proof}
One can show that if $r>1$, then $\{v_{1},\ldots ,v_{r-1}\}\subseteq \{2,3\}$%
. Denote $(U,\psi )=(U_{\tau },\psi _{\tau })$. Let $U=(A,F,P)$.

We show that $typ(\mathcal{U}_{U\psi n}^{ai})=\alpha $. Let $Y$ be a set of $%
n$ variables. A system of equations%
\begin{equation}
\{g_{1}=\delta _{1},\ldots ,g_{m}=\delta _{m}\},  \label{E2}
\end{equation}%
where $g_{1},\ldots ,g_{m}$ are functions from $P$ depending on variables
from $Y$ and $\delta _{1},\ldots ,\delta _{m}\in E_{2}$, is called a system
of equations over $Y$. We show that, for any consistent on $A^{n}$ system
of equations over $Y$, there exists a system of equations over $Y$, which
has the same set of solutions from $A^{n}$ and contains at most $%
m_{0}=2\sum_{j=1}^{r}(n)_{n_j}$ equations. Consider a consistent on $A^{n}$
system of equations (\ref{E2}). Remove from this system all equations $%
g_{i}=\delta _{i}$ such that $g\equiv 0$. Denote the obtained system by $S$.
This system can contain functions of $(n)_{n_{r}}$ sorts from $%
P_{4}^{(n_{r})}$ and if $v_{t}=3$ for some $t\in \{1,\ldots ,r-1\}$, then
this system can contain functions of $(n)_{n_{t}}$ sorts from $%
P_{3}^{(n_{t})}$ too.

Divide the system $S$ into at most $(n)_{n_{r}}+(n)_{n_{t}}$ subsystems each
of which contains equations composed from functions of the same sort and from the same set $P_{4}^{(n_{r})}$ or $P_{3}^{(n_{t})}$. Let $%
\Sigma $ be one of such subsystems. Let $\Sigma $ consist of equations with
the left parts from $P_{4}^{(n_{r})}$. One can show that there exists a
function $g\in P_{4}^{(n_{r})}$ depending on variables from $Y$ and a number
$\delta \in E_{2}$ such that the set of solutions on $A^{n}$ of the equation
system $\Sigma $ coincides with the set of solutions on $A^{n}$ of the
equation system $\Sigma ^{\prime }=\{g=\delta \}$. Let now $\Sigma $ consist of equations with the left
parts from $P_{3}^{(n_{t})}$. One can show that there exists a subsystem $%
\Sigma ^{\prime }$ of the system $\Sigma $, which has the same set of
solutions as the system $\Sigma $ and contains at most two equations.
Replace in the system $S$ each subsystem $\Sigma $ with the corresponding subsystem $%
\Sigma ^{\prime }$ and denote the obtained system by $S^{\prime }$. It is
clear that $S^{\prime }$ has the same set of solutions as $S$ and contains at
most $m_{0}$ equations.

We now show that, for any problem $z=(Y,\nu ,\beta _{1},\ldots ,\beta
_{m})\in \mathcal{P}(U,n)$, the inequality $\psi _{U}^{a}(z)\leq m_{0}$
holds. It is clear that there exists a problem $z^{\prime }=(Y,\nu ,\alpha
_{1},\ldots ,\alpha _{m})\in \mathcal{P}(U,n)$ such that $\alpha _{1},\ldots
,\alpha _{m}$ are expressions depending on variables from $Y$ and  $z(\bar{a}%
)=z^{\prime }(\bar{a})$ for any $\bar{a}\in A^{n}$. For any $\bar{\delta}=(\delta _{1},\ldots ,\delta _{m})\in E_{2}^{m}$,
denote by $S(\bar{\delta})$ the system of equations%
\[
\{\alpha _{1}=\delta _{1},\ldots ,\alpha _{m}=\delta _{m}\}.
\]

Let the system $S(\bar{\delta})$ be consistent on the set $A^{n}$. Then
there exists a system of equations
\[
\{\gamma _{1}=\sigma _{1},\ldots ,\gamma _{t}=\sigma _{t}\}
\]%
over $Y$ such that $t\leq m_{0}$ and the set of solutions of this system
coincides with the set of solutions of the system $S(\bar{\delta}).$

Denote by $\Gamma (\bar{\delta})$ the computation tree $(Y,G(\bar{\delta}))$,
where $G(\bar{\delta})$ is a tree with the root consisting of unique
complete path $v_{0},d_{0},,\ldots ,v_{t},d_{t},v_{t+1}$ in which, for $%
i=1,\ldots ,t$, the node $v_{i}$ is labeled with the expression $\gamma _{i}$
and the edge $d_{i}$ is labeled with the number $\sigma _{i}$, and the node $%
v_{i+1}$ is labeled with the minimum number from the set $\nu (\bar{\delta})$%
.

Identify the roots of the trees $G(\bar{\delta})$, where $\bar{\delta}\in
E_{2}^{m}$ and the system of equations $S(\bar{\delta})$ is consistent.
Denote the obtained tree by $G$. By $\Gamma $ we denote the pair $(Y,G)$. It
is not difficult to prove that $\Gamma $ is a computation tree over $U$, which
solves the problem $z$ nondeterministically and for which $\psi (\Gamma
)\leq m_{0}$. Thus, $\psi _{U}^{a}(z)\leq m_{0}$ for any problem $z\in
\mathcal{P}(U,n)$. From here and from Lemma \ref{L3.4} it follows that $typ(%
\mathcal{U}_{U\psi n}^{ai})=\alpha $.

We now show that $typ(\mathcal{U}_{U\psi n}^{di})=\gamma $. Using Lemmas \ref%
{L3.2} and \ref{L3.3}, and Corollary \ref{C3.1} we obtain $Dom(\mathcal{U}%
_{U\psi n}^{di})=\omega \setminus \{0\}$. Let $m\in \omega \setminus \{0\}$.
One can show that there exist functions $f_{1},\ldots ,f_{m}\in
P_{v_{r}}^{(n_{r})}$ such that, for any $\delta _{1},\ldots ,\delta _{m}\in
E_{2}$, the system of equations%
\[
\{f_{1}(\bar{x})=\delta _{1},\ldots ,f_{m}(\bar{x})=\delta _{m}\},
\]%
where $\bar{x}=(x_{1},\ldots ,x_{n_{r}})$, is consistent on $A^{n}$.

Consider the problem $z=(Y,\nu ,f_{1}(\bar{x}),\ldots ,f_{m}(\bar{x}))$,
where $Y=\{x_{1},\ldots ,x_{n}\}$, $\nu :E_{2}^{m}\rightarrow \mathcal{S}%
(\omega )$ and $\nu (\bar{\delta}_{1})\cap \nu (\bar{\delta}_{2})=\emptyset $
for any $\bar{\delta}_{1},\bar{\delta}_{2}\in E_{2}^{m}$ such that $\bar{%
\delta}_{1}\neq \bar{\delta}_{2}$. It is clear that $\psi _{U}^{i}(z)=m$.
Let $\Gamma $ be a computation tree over $U,$ which solves the problem $z$
deterministically and for which $\psi (\Gamma )=\psi _{U}^{d}(z)$.
Evidently, the computation tree $\Gamma $ must have at least $2^{m}$ terminal
nodes. Therefore $h(\Gamma )\geq m$ and $\psi _{U}^{d}(z)\geq m$. Thus, $%
\mathcal{U}_{U\psi n}^{di}(m)\geq m$. By Lemma \ref{L3.1}, $\mathcal{U}%
_{U\psi n}^{di}(m)=m$. Taking into account that $m$ is an arbitrary number
from $\omega \setminus \{0\}$, we obtain $typ(\mathcal{U}_{U\psi
n}^{di})=\gamma $. From this equality, from equality $typ(\mathcal{U}_{U\psi
n}^{ai})=\alpha $, and from Proposition \ref{P3.1} it follows that $%
typ_{u}(U,\psi ,n)=t_{4}$.
 \end{proof}

\begin{lemma}
\label{L5.4} Let $\tau \in \Delta _{u}$, $(U_{\tau },\psi _{\tau })=\pi
_{v_{1}}^{(n_{1})}\oplus \cdots \oplus \pi _{v_{m}}^{(n_{m})}$, $r\in
\{1,\ldots ,m\}$, $v_{r}=5$, $n\in \omega \setminus \{0\}$, $n\geq n_{r}$
and if $r<m$, then $n<n_{r+1}$. Then $typ_{u}(U_{\tau },\psi _{\tau
},n)=t_{5}$.
\end{lemma}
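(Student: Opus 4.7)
The proof follows the template of Lemmas \ref{L5.1}--\ref{L5.3}. Denote $(U,\psi) = (U_\tau,\psi_\tau)$ with $U = (A,F,P)$. From the shape of $\Delta_u$, $v_j = 2$ for every $j<r$, so together with $n < n_{r+1}$ the only non-zero functions in $P[F]$ depending on $n$ variables are of the form $q_i^{(n_r)}(x_{j_1},\dots,x_{j_{n_r}})$ for $i \in \omega$ and pairwise distinct variables from $\{x_1,\dots,x_n\}$. The recursion $\psi_5(q_i) = i\sum_{t=0}^{i-1}\psi_5(q_t)$ makes the weights $\psi_5(q_i)$ unbounded, so $\psi_U^i$ is unbounded on $\mathcal{P}(U,n)$ and $typ(\mathcal{U}_{U\psi n}^{ii}) = \gamma$ by Lemma \ref{L3.4}.

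To obtain $typ(\mathcal{U}_{U\psi n}^{ai}) = \gamma$ I would consider, for each $k \in \omega$, the single-predicate problem $z_k = (\{x_1,\dots,x_n\},\nu,q_k^{(n_r)}(x_1,\dots,x_{n_r}))$ with $\nu(0)=\{0\}$, $\nu(1)=\{1\}$, so $\psi_U^i(z_k) = \psi_5(q_k)$. For the canonical input $\bar a$ with $a_j = k_j^{(n_r)}$ for $j<n_r$ and $a_{n_r} = (k,n_r)$ one has $z_k(\bar a) = \{1\}$, so any complete path $\xi$ of a nondeterministic tree solving $z_k$ that accepts $\bar a$ must output $1$, forcing $\mathcal{A}(\xi) \subseteq \{\bar b : q_k^{(n_r)}(\bar b) = 1\}$. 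Among all predicates available in $U$, only $q_k^{(n_r)}$ applied to the matching sort evaluates to $1$ at $\bar a$; all $q_i^{(n_r)}$ with $i \neq k$ give $0$ there, and the predicates drawn from the $\pi_2^{(n_j)}$ blocks are identically $0$ on inputs of the relevant shape. Consequently $\xi$ must contain a test of $q_k^{(n_r)}$, contributing weight $\psi_5(q_k)$; so $\psi_U^a(z_k) = \psi_5(q_k)$ for every $k$. Lemma \ref{L3.6} then gives $typ(\mathcal{U}_{U\psi n}^{ai}) = \gamma$, Lemma \ref{L3.3} forces $typ(\mathcal{U}_{U\psi n}^{di}) = \gamma$, and case (e) of Proposition \ref{P3.1} places $typ_u(U,\psi,n) \in \{t_5,t_6,t_7\}$.

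What remains is $typ(\mathcal{U}_{U\psi n}^{da}) = \gamma$. The $Dom^+$ half follows from the same $z_k$, since $\psi_U^d(z_k) = \psi_U^a(z_k) = \psi_5(q_k)$ gives $\mathcal{U}_{U\psi n}^{da}(\psi_5(q_k)) \geq \psi_5(q_k)$ for infinitely many $k$. The main obstacle is proving $Dom^-(\mathcal{U}_{U\psi n}^{da})$ infinite, for which I plan to establish the stronger inequality $\psi_U^d(z) \leq \psi_U^a(z)$ on all of $\mathcal{P}(U,n)$; this would immediately give $\mathcal{U}_{U\psi n}^{da}(m) \leq m$ throughout its (infinite) domain. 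Given an optimal nondeterministic tree $\Gamma$ for $z$ of weight $m$, I would select a complete path $\eta$ with $\mathcal{A}(\eta)$ infinite (such a path exists because $A^n$ is infinite while $\Gamma$ has only finitely many paths, and any positive test $q_i^{(n_r)}(\bar y) = 1$ appearing on a path pins down $n_r$ of the input coordinates), and construct a deterministic tree $\Gamma'$ that sequentially tests the predicate expressions of $\eta$, in $\eta$'s order. The all-agree terminal of $\Gamma'$ carries the label $\kappa(\eta)$, valid on $\mathcal{A}(\eta)$; each branch-off terminal is labeled with a value inherited from a path of $\Gamma$ that covers the inputs reaching it, so that the maximum-weight complete path of $\Gamma'$ has weight $\psi(\beta(\eta)) \leq m$.

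The delicate step of this construction is justifying a single valid label at each branch-off terminal in the multi-variable case $n > n_r$, since a single positive flip pins only $n_r$ of the $n$ coordinates and hence leaves an infinite slice of inputs at the corresponding leaf. My plan is to select $\eta$ — and, where necessary, to further refine a branch-off leaf by additional tests whose weight is absorbed into the $\eta$-budget — so that each branch-off terminal's input set is contained in $\mathcal{A}(\xi)$ for a single path $\xi$ of $\Gamma$, from which the label $\kappa(\xi)$ is inherited directly. Once that is in place, $\psi_U^d(z) \leq m = \psi_U^a(z)$ holds throughout $\mathcal{P}(U,n)$, so $Dom^-(\mathcal{U}_{U\psi n}^{da})$ coincides with the (infinite) domain of $\mathcal{U}_{U\psi n}^{da}$, yielding $typ(\mathcal{U}_{U\psi n}^{da}) = \gamma$ and finally $typ_u(U,\psi,n) = t_5$.
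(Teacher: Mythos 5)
Your first two paragraphs are sound and essentially coincide with the paper's argument: the single-predicate problems $z_k$ with $\psi _{U}^{a}(z_k)=\psi _{U}^{d}(z_k)=\psi _{U}^{i}(z_k)=\psi _{5}(q_k)$ (forced by the two distinguished tuples, the all-$k_{0}^{(n_r)}$ tuple and the canonical tuple) give the row/column $\gamma $'s and the infinitude of $Dom^{+}(\mathcal{U}_{U\psi n}^{da})$, and Proposition \ref{P3.1}(e) reduces everything to the single entry $typ(\mathcal{U}_{U\psi n}^{da})$.

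The gap is in the $Dom^{-}$ half. You propose to prove the pointwise inequality $\psi _{U}^{d}(z)\leq \psi _{U}^{a}(z)$ for every $z\in \mathcal{P}(U,n)$ by simulating a single infinite-solution path $\eta $ of an optimal nondeterministic tree. The step you yourself flag --- assigning a single valid label to each branch-off terminal when $n>n_{r}$ --- is exactly where this fails, and the proposed repair (``additional tests whose weight is absorbed into the $\eta $-budget'') is not justified: a branch-off at the $i$-th test of $\eta $ has already spent the weight of $\eta $'s first $i$ tests, and the extra tests needed to resolve the remaining sorts come from \emph{other} paths of $\Gamma $, whose weights are bounded only by $m$ each; nothing caps the sum by $m$. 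Note also that $\eta $ may itself contain equations of the form $q_{j}^{(n_{r})}(\cdots )=1$ (a slice can be infinite when $n>n_{r}$), so a branch-off can land on a co-slice that is pinned down in no coordinate at all. Tellingly, your $Dom^{-}$ argument never uses the recursion $\psi _{5}(q_{i})=i\sum_{t=0}^{i-1}\psi _{5}(q_{t})$, which is the entire reason $\pi _{5}$ realizes $t_{5}$ rather than $t_{6}$ or $t_{7}$. The paper's proof does not attempt $\mathcal{U}_{U\psi n}^{da}(m)\leq m$ for all $m$; it only needs infinitely many such $m$, and it gets them at the values $m=c_{i}-1$ with $i\geq (n)_{n_{r}}$, where $c_{i}=\psi (q_{i}^{(n_{r})})$. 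There, any problem with $\psi _{U}^{a}(z)\leq c_{i}-1$ can involve (up to identically zero functions) only predicates $q_{j}^{(n_{r})}$ with $j\leq i-1$ in at most $(n)_{n_{r}}$ sorts, and the brute-force deterministic tree that sequentially evaluates \emph{all} of them costs at most $(n)_{n_{r}}\sum_{j=0}^{i-1}c_{j}\leq i\sum_{j=0}^{i-1}c_{j}=c_{i}-2<c_{i}-1$. This sidesteps the branch-off labeling problem entirely, because after evaluating every relevant function the answer is determined. You should replace your third and fourth paragraphs with an argument of this kind.
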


\begin{proof}
One can show that $r=1$ or $r=2$. In the latter case, $v_{1}=2$.
Denote $(U,\psi )=(U_{\tau },\psi _{\tau })$. Let $U=(A,F,P)$. We now show that $typ(\mathcal{%
U}_{U\psi n}^{da})=\gamma $. For an arbitrary $i\in \omega $, denote $%
c_{i}=\psi (q_{i}^{(n_{r})})$. Remind that $c_{0}=1$ and $c_{i}=i
\sum_{j=0}^{i-1}c_{j}+2$ for $i\geq 1$. Consider a problem $z_{i}=(Y,\nu
,q_{i}^{(n_{r})}(x_{1},\ldots ,x_{n_{r}}))$, where $Y=\{x_{1},\ldots
,x_{n}\} $, $\nu (0)=\{0\}$ and $\nu (1)=\{1\}$. It is clear that $\psi
_{U}^{i}(z_{i})=c_{i}$. We now show that $\psi _{U}^{a}(z_{i})=\psi
_{U}^{d}(z_{i})=c_{i}$. One can prove that there exists a computation tree $%
\Gamma $ over $U,$ which solves the problem $z_{i}$ nondeterministically and
satisfies the following conditions: $\psi (\Gamma )=\psi _{U}^{a}(z_{i})$
and all expressions attached to nodes of $\Gamma$ depend on variables from $Y$ only. We now
show that the expression $q_{i}^{(n_{r})}(x_{1},\ldots ,x_{n_{r}})$ is
attached to a node of $\Gamma $. Assume the contrary. Consider two $n$%
-tuples from $A^{n}$: $\bar{\alpha}=(k_{0}^{(n_{r})},\ldots
,k_{0}^{(n_{r})}) $ and $\bar{\beta}=(k_{1}^{(n_{r})},\ldots
,k_{n_{r}-1}^{(n_{r})},i^{(n_{r})},k_{0}^{(n_{r})},\ldots ,k_{0}^{(n_{r})})$%
. Note that all functions from $P$ depending on variables from $Y$ take the
value $0$ on the tuple $\bar{\alpha}$, and all functions from $P$ depending
on variables from $Y$ with the exception of $q_{i}^{(n_{r})}(x_{1},\ldots
,x_{n_{r}})$ take the value $0$ on the tuple $\bar{\beta}$. It is clear that
$z_{i}(\bar{\alpha})=\{0\}$ and $z_{i}(\bar{\beta})=\{1\}$. Let $\xi $ be a
complete path in $\Gamma $ such that $\bar{\alpha}\in \mathcal{A}(\xi )$.
Evidently, all edges of this path with the exception of the first one are
labeled with the number $0$. Therefore $\bar{\beta}\in \mathcal{A}(\xi )$,
but this is impossible since $z_{i}(\bar{\alpha})\cap z_{i}(\bar{\beta}%
)=\emptyset $. Hence the expression $q_{i}^{(n_{r})}(x_{1},\ldots
,x_{n_{r}}) $ is attached to at least one node of $\Gamma $. Taking into
account that $\psi _{U}^{i}(z_{i})=c_{i}$ and using Lemma \ref{L3.1} we
obtain that $\psi _{U}^{a}(z_{i})=\psi _{U}^{d}(z_{i})=c_{i}$. Hence if the
value $\mathcal{U}_{U\psi n}^{da}(c_{i})$ is definite, then it satisfies the
inequality $\mathcal{U}_{U\psi n}^{da}(c_{i})\geq c_{i}$.

Let $i\in \omega \setminus \{0\}$ and $i\geq (n)_{n_{r}}$. We now prove that
the value $\mathcal{U}_{U\psi n}^{da}(c_{i}-1)$ is definite and $\mathcal{U}%
_{U\psi n}^{da}(c_{i}-1)<c_{i}-1$. One can show that the set $\{z:z\in
\mathcal{P}(U,n),\psi _{U}^{a}(z)\le c_{i}-1\}$ is not empty. Let $z=(Y,\nu
,\beta _{1},\ldots ,\beta _{t})$ be an arbitrary problem from this set. One
can prove that there exists a computation tree $\Gamma $ over $U$, which solves
the problem $z$ nondeterministically and satisfies the following conditions:
$\psi (\Gamma )\le c_{i}-1$ and all expressions attached to nodes of $\Gamma $
depend on variables from $Y$. Denote by $\Phi (\Gamma )$ the set of all
functions that are attached to nodes of $\Gamma $ and are not identically
equal to $0$. One can show that there exists a computation tree $\Gamma ^{\prime
}$, which solves the problem $z$ deterministically by sequential computing
values of all functions from $\Phi (\Gamma )$. Since all functions from $%
\Phi (\Gamma )$ are not identically equal to $0$, there are of the kind $%
q_{j}^{(n_{r})}(x_{s_{1}},\ldots ,x_{s_{n_{r}}})$, where $x_{s_{1}},\ldots
,x_{s_{n_{r}}}$ are pairwise different variables from $Y$. Since $\psi
(\Gamma )\le c_{i}-1$, $j\leq i-1$. It is clear that $\psi (\Gamma ^{\prime
})=\sum_{f\in \Phi (\Gamma )}\psi(f)\le (n)_{n_r}\sum_{j=0}^{i-1}c_{j}$. Since $i\geq (n)_{n_{r}}$, $\psi (\Gamma ^{\prime}) \leq i\sum_{j=0}^{i-1}c_{j}=c_{i}-2<c_{i}-1$. Hence $\psi
_{U}^{d}(z)<c_{i}-1$. Taking into account that $z$ is an arbitrary problem
from $\mathcal{P}(U,n)$ such that $\psi _{U}^{a}(z)\le c_{i}-1$, we obtain that
the value $\mathcal{U}_{U\psi n}^{da}(c_{i}-1)$ is definite and satisfies
the inequality $\mathcal{U}_{U\psi n}^{da}(c_{i}-1)<c_{i}-1$. Therefore $Dom(%
\mathcal{U}_{U\psi n}^{da})$ is an infinite set.

By Lemma \ref{L3.2}, $Dom(%
\mathcal{U}_{U\psi n}^{da})=\omega $. Since, for any $i\in \omega $, the
inequality $\mathcal{U}_{U\psi n}^{da}(c_{i})\geq c_{i}$ holds, $Dom^{+}(%
\mathcal{U}_{U\psi n}^{da})$ is an infinite set.
Since, for any $i\in \omega $, $i\geq (n)_{n_{r}}$, the
inequality $\mathcal{U}_{U\psi n}^{da}(c_{i}-1)< c_{i}-1$ holds,
 the set $Dom^{-}(\mathcal{U}_{U\psi
n}^{da})$ is infinite. Therefore $typ(\mathcal{U}_{U\psi n}^{da})=\gamma $.
Using Proposition \ref{P3.1} we obtain that $typ_{u}(U,\psi ,n)=t_{5}$.
 \end{proof}

\begin{lemma}
\label{L5.5} Let $\tau \in \Delta _{u}$, $(U_{\tau },\psi _{\tau })=\pi
_{s_{1}}^{(n_{1})}\oplus \cdots \oplus \pi _{s_{m}}^{(n_{m})}$, $r\in
\{1,\ldots ,m\}$, $s_{r}=6$, $n\in \omega \setminus \{0\}$, $n\geq n_{r}$
and if $r<m$, then $n<n_{r+1}$. Then $typ_{u}(U_{\tau },\psi _{\tau
},n)=t_{6}$.
\end{lemma}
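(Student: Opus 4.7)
The approach is, as in the proofs of Lemmas~\ref{L5.1}--\ref{L5.4}, to exploit Proposition~\ref{P3.1} and pin down only the entry of the upper $n$-type that distinguishes $t_6$ from the other tables allowed in case (e). The three tables $t_5,t_6,t_7$ agree everywhere except in position $(d,a)$, where they have $\gamma$, $\delta$, $\varepsilon$, respectively. So the plan is, first, to verify that $typ(\mathcal{U}_{U\psi n}^{ii})$, $typ(\mathcal{U}_{U\psi n}^{di})$, $typ(\mathcal{U}_{U\psi n}^{ai})$ are all $\gamma$ --- each witnessed by the single-query problem using $q_{2k}^{(n_r)}$ at arbitrarily large $k$, in the spirit of the earlier lemmas --- which places us in case (e) of the proof of Proposition~\ref{P3.1}; and, second, to prove $typ(\mathcal{U}_{U\psi n}^{da})=\delta$, i.e.\ that $Dom(\mathcal{U}_{U\psi n}^{da})$ is infinite while $Dom^{-}(\mathcal{U}_{U\psi n}^{da})$ is finite.

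For infiniteness of the domain, I would bound $\psi_U^d$ in terms of $\psi_U^a$ along the lines of Lemma~\ref{L5.3}. A nondeterministic tree of $\psi$-complexity $\le m$ uses only predicates of weight $\le m$, and on $n$ variables there are only finitely many such predicates across $\pi_{v_1}^{(n_1)}\oplus\cdots\oplus\pi_{v_r}^{(n_r)}$ (for $\pi_6^{(n_r)}$ the count is at most $3m\cdot(n)_{n_r}$, with analogous finite counts for $\pi_2^{(n_1)}$ and $\pi_5^{(n_2)}$ when present). A deterministic tree that evaluates all of them in sequence gives an upper bound $\psi_U^d(z)\le C m^2(n)_{n_r}$ for a constant $C$ depending only on the prefix of $\tau$, so $\mathcal{U}_{U\psi n}^{da}(m)$ is finite for every $m$.

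For finiteness of $Dom^{-}$, I would construct, for each $k\ge 1$, a problem $z_k\in\mathcal{P}(U_\tau,n)$ with $\psi_U^a(z_k)\le k$ and $\psi_U^d(z_k)\ge 2k$. Take $Y=\{x_1,\ldots,x_n\}$, the three predicate expressions $q_{2k}^{(n_r)}(x_1,\ldots,x_{n_r})$, $q_{2k+1}^{(n_r)}(x_1,\ldots,x_{n_r})$, $p_{2k}^{(n_r)}(x_1,\ldots,x_{n_r})$, and the output function $\nu$ that assigns $\{0\}$, $\{1\}$, $\{2\}$ to the three realisable value patterns $(1,0,1)$, $(0,1,1)$, $(0,0,0)$ (and any fixed singleton elsewhere). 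A nondeterministic tree with three single-query paths --- one per output, each of weight $k$ --- covers $A^n$, yielding $\psi_U^a(z_k)\le k$. The matching lower bound rests on an adversarial argument with three inputs $\bar a,\bar b,\bar c\in(A_6^{(n_r)})^n$ realising outputs $0,1,2$: predicates from $\pi_{v_j}^{(n_j)}$ with $j\ne r$ vanish on these tuples, and among $\pi_6^{(n_r)}$-predicates only those with index $k$ can distinguish any pair; since only $q_{2k}^{(n_r)}$ and $q_{2k+1}^{(n_r)}$ separate $\bar a$ from $\bar b$, any deterministic solution must carry, on some root-to-leaf path, at least two such predicates of weight $k$, so $\psi_U^d(z_k)\ge 2k$. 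Setting $k=m$ gives $\mathcal{U}_{U\psi n}^{da}(m)\ge 2m>m$ for all $m\ge 1$, so $Dom^{-}(\mathcal{U}_{U\psi n}^{da})=\emptyset$.

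The main obstacle will be the adversarial lower bound on $\psi_U^d(z_k)$: one must rule out all alternative predicates (from $\pi_2^{(n_1)}$, $\pi_5^{(n_2)}$, and $\pi_6^{(n_r)}$ with index $\ne k$) as useful on $\{\bar a,\bar b,\bar c\}$, and then perform a two-case analysis according to whether the first useful query along the $\bar a$-path also separates $\bar a$ from $\bar c$ --- in the positive case the subsequent need to separate $\bar b$ from $\bar c$ forces a second query of weight $\ge k$ further down, while in the negative case two distinct useful queries of weight $\ge k$ already appear on the $\bar a$-path itself.
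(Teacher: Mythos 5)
Your proposal is correct and follows essentially the same route as the paper: the domain of $\mathcal{U}_{U\psi_\tau n}^{da}$ is shown infinite by replacing a nondeterministic tree of complexity at most $m$ with a deterministic tree that sequentially evaluates the finitely many relevant predicates of weight at most $m$, and $Dom^{-}$ is shown finite via the same three-predicate gadget $q_{2k}^{(n_r)},q_{2k+1}^{(n_r)},p_{2k}^{(n_r)}$ with disjoint outputs, giving $\psi_{U}^{a}(z_k)\leq k$ and $\psi_{U}^{d}(z_k)\geq 2k$ by the same adversarial three-tuple argument; Proposition \ref{P3.1} then forces $t_6$. The preliminary verification of the $\gamma$ entries in the first column is harmless but unnecessary, since $\delta$ in position $(d,a)$ already singles out $t_6$ among $t_1,\ldots,t_7$.
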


\begin{proof}
One can show that if $r>1$, then $\{s_{1},\ldots ,s_{r-1}\}\subseteq \{2,5\}$. Denote $(U,\psi )=(U_{\tau },\psi _{\tau })$. Let $U=(A,F,P)$.

Show that, for any $m\in \omega $, the value $\mathcal{U}_{U\psi n}^{da}(m)$
is definite. One can prove that the set $\{z:z\in \mathcal{P}(U,n),\psi
_{U}^{a}(z)\leq m\}$ is not empty. Let $z=(Y,\nu ,\beta _{1},\ldots ,\beta
_{t})$ be an arbitrary problem from this set. Let $\Phi _{Y}$ be the set of
all functions from $P[F]$ with variables from $Y$, which are not
identically equal to $0$ on $A^{n}$. One can show that the set $\Phi _{Y}$
contains only functions of the kind $q_{2i}^{(n_{r})}(x_{j_{1}},\ldots
,x_{j_{n_{r}}})$, $q_{2i+1}^{(n_{r})}(x_{j_{1}},\ldots ,x_{j_{n_{r}}})$, $%
p_{2i}^{(n_{r})}(x_{j_{1}},\ldots ,x_{j_{n_{r}}})$, where $x_{j_{1}},\ldots
,x_{j_{n_{r}}}$ are pairwise different variables from $Y$, and if $s_{t}=5$
for some $t\in \{1,\ldots ,r-1\}$, then the set $\Phi _{Y}$ contains also
functions of the kind $q_{i}^{(n_{t})}(x_{j_{1}},\ldots ,x_{j_{n_{t}}})$,
where $x_{j_{1}},\ldots ,x_{j_{n_{t}}}$ are pairwise different variables
from $Y$. Denote $\Phi _{Y}(m)=\{f:f\in \Phi _{Y},\psi (f)\leq m\}$. One can
show that $\Phi _{Y}(m)$ is a finite set and $\psi _{U}^{d}(z)\leq
\sum_{f\in \Phi _{Y}(m)}\psi (f)$. Taking into account that $z$ is an
arbitrary problem from $\mathcal{P}(U,n)$ such that $\psi _{U}^{a}(z)\leq m$%
, we obtain that the value $\mathcal{U}_{U\psi n}^{da}(m)$ is definite.

We now show that, for any $m\in \omega \setminus \{0\}$, the inequality $%
\mathcal{U}_{U\psi n}^{da}(m)>m$ holds. Consider the problem $z_{m}=(Y,\nu
,q_{2m}^{(n_{r})}(\bar{x}),q_{2m+1}^{(n_{r})}(\bar{x}),p_{2m}^{(n_{r})}(\bar{%
x}))$, where $Y=\{x_{1},\ldots ,x_{n}\}$, $\bar{x}=(x_{1},\ldots ,x_{n_{r}})$%
, $\nu ((1,0,1))=\{1\}$, $\nu ((0,1,1))=\{2\}$, and $\nu (\bar{\delta}%
)=\{0\} $ for any $3$-tuple $\bar{\delta}\in E_{2}^{3}\setminus
\{(1,0,1),(0,1,1)\}$. Consider the computation tree $\Gamma _{0}$ containing
exactly three complete paths $\xi _{0}$, $\xi _{1}$, and $\xi _{2}$, where $%
\xi _{i}=v_{0},d_{0i},v_{1i},d_{1i},v_{2i}$ for $i=0,1,2$. For $i=0,1,2$,
the node $v_{0}$ and the edge $d_{0i}$ are not labeled, and the node $v_{2i}$
is labeled with the number $i$. The node $v_{10}$ is labeled with the
expression $p_{2m}^{(n_{r})}(\bar{x})$ and the edge $d_{10}$ is labeled with
the number $0$. The node $v_{11}$ is labeled with the expression $%
q_{2m}^{(n_{r})}(\bar{x})$ and the edge $d_{11}$ is labeled with the number $%
1$. The node $v_{12}$ is labeled with the expression $q_{2m+1}^{(n_{r})}(%
\bar{x})$ and the edge $d_{12}$ is labeled with the number $1$. One can show
that $\Gamma _{0}$ solves the problem $z_{m}$ nondeterministically and $\psi
(\Gamma _{0})=m$. Therefore $\psi _{U}^{a}(z_{m})\leq m$.

Show that $\psi _{U}^{d}(z_{m})\geq 2m$. One can prove that there exists a
computation tree $\Gamma $ over $U$, which solves the problem $z_{m}$
deterministically and satisfies the following conditions: $\psi (\Gamma
)=\psi _{U}^{d}(z_{m})$ and all functions attached to nodes of $\Gamma $
depend on variables from the set $Y$ only. Define three $n$-tuples $\bar{%
\alpha}_{0}$, $\bar{\alpha}_{1}$, and $\bar{\alpha}_{2}$ from $A^{n}$: $\bar{\alpha}_{0}=(k_{0}^{(n_{r})},\ldots
,k_{0}^{(n_{r})})$%
, $\bar{\alpha}_{1}=(k_{1}^{(n_{r})},\ldots
,k_{n_{r}-1}^{(n_{r})},(2m)^{(n_{r})},k_{0}^{(n_{r})},\ldots
,k_{0}^{(n_{r})})$, and $\bar{\alpha}_{2}=(k_{1}^{(n_{r})},\ldots
,k_{n_{r}-1}^{(n_{r})},(2m+1)^{(n_{r})},k_{0}^{(n_{r})},\ldots
,$ $k_{0}^{(n_{r})})$. It is clear that $z_{m}(\bar{\alpha}_{0})=\{0\}$, $z_{m}(%
\bar{\alpha}_{1})=\{1\}$, and $z_{m}(\bar{\alpha}_{2})=\{2\}$. For $i=0,1,2$%
, denote by $\varphi _{i}$ the set of all functions from $\Phi _{Y}$, which
take value $1$ on the tuple $\bar{\alpha}_{i}$. One can show that $\varphi
_{0}=\emptyset $, $\varphi _{1}=\{q_{2m}^{(n_{r})}(\bar{x}),p_{2m}^{(n_{r})}(%
\bar{x})\}$, and $\varphi _{2}=\{q_{2m+1}^{(n_{r})}(\bar{x}%
),p_{2m}^{(n_{r})}(\bar{x})\}$.

Consider the complete path $\xi _{0}$ in $%
\Gamma $ such that $\bar{\alpha}_{0}\in \mathcal{A}(\xi _{0})$. It is clear
that $\bar{\alpha}_{1}\notin \mathcal{A}(\xi _{0})$ and $\bar{\alpha}%
_{2}\notin \mathcal{A}(\xi _{0})$. Therefore the expression $%
p_{2m}^{(n_{r})}(\bar{x})$ or both expressions $q_{2m}^{(n_{r})}(\bar{x})$
and $q_{2m+1}^{(n_{r})}(\bar{x})$ are among expressions attached to the nodes of
$\xi _{0}$. If nodes of $\xi _{0}$ are labeled with at least two expressions
from the set $B=\{q_{2m}^{(n_{r})}(\bar{x}),q_{2m+1}^{(n_{r})}(\bar{x}%
),p_{2m}^{(n_{r})}(\bar{x})\}$, then $\psi (\Gamma )\geq 2m$. Let only one
expression from the set $B$ be attached to the nodes of $\xi _{0}$. Then this is
$p_{2m}^{(n_{r})}(\bar{x})$. Let the considered expression be attached to
the node $v$ of the path $\xi _{0}$. Consider the complete path $\xi _{1}$ in $\Gamma$
such that $\bar{\alpha}_{1}\in \mathcal{A}(\xi _{1})$. Since $\Gamma $ is a
deterministic computation tree, the path $\xi _{1}$ contains the node $v$.
Suppose that no one node of the path $\xi _{1}$ is labeled with an
expression from the set $\{q_{2m}^{(n_{r})}(\bar{x}),q_{2m+1}^{(n_{r})}(\bar{%
x})\}$. Then $\bar{\alpha}_{2}\in \mathcal{A}(\xi _{1})$ but this is
impossible. Hence nodes of $\xi _{1}$ are labeled with at least two
expressions from the set $B$. Therefore $\psi (\Gamma )\geq 2m$ and $\psi
_{U}^{d}(z_{m})\geq 2m$. Taking into account that $\psi _{U}^{a}(z_{m})\leq
m $, we obtain $\mathcal{U}_{U\psi n}^{da}(m)\geq 2m$ for any $m\in \omega
\setminus \{0\}$. Thus, $typ(\mathcal{U}_{U\psi n}^{da})=\delta $. Using
Proposition \ref{P3.1} we obtain $typ_{u}(U,\psi ,n)=t_{6}$.
 \end{proof}

\begin{lemma}
\label{L5.6} Let $\tau \in \Delta _{u}$, $(U_{\tau },\psi _{\tau })=\pi
_{s_{1}}^{(n_{1})}\oplus \cdots \oplus \pi _{s_{m}}^{(n_{m})}$,  $s_{m}=7$, $n\in \omega \setminus \{0\}$, and $n\geq n_{m}$%
. Then $typ_{u}(U_{\tau },\psi _{\tau },n)=t_{7}$.
\end{lemma}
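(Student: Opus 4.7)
The plan is to establish each entry of $typ_u(U,\psi,n)$ by invoking Proposition~\ref{P3.1}, following the pattern of Lemmas~\ref{L5.4} and~\ref{L5.5}. Writing $(U,\psi) = (U_\tau,\psi_\tau)$ and $U = (A,F,P)$, the set $P$ contains all lifted predicates $l_i^{(n_m)}$ of weight $1$ and $q_{-k}^{(n_m)}$ of weight $k$, for $i \in \omega$ and $k \in \omega \setminus \{0\}$, inherited from $\pi_7^{(n_m)}$. Since $t_7$ differs from $t_5$ and $t_6$ only in the $(d,a)$ entry (being $\varepsilon$ instead of $\gamma$ or $\delta$), the novel content is the equality $typ(\mathcal{U}^{da}_{U\psi n}) = \varepsilon$. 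The three entries $typ(\mathcal{U}^{bi}_{U\psi n}) = \gamma$ for $b \in \{i,d,a\}$, which via case~(e) of Proposition~\ref{P3.1} localise $typ_u(U,\psi,n)$ to $\{t_5,t_6,t_7\}$, follow by considering, for each $K \in \omega \setminus \{0\}$, the single-predicate problem $z_K = (Y,\nu_K,q_{-K}^{(n_m)}(x_1,\ldots,x_{n_m}))$ with $\nu_K(0) = \{0\}$, $\nu_K(1) = \{1\}$. Among all predicates of $P$, only $q_{-K}^{(n_m)}$ distinguishes the formatted input with value $-K$ from the formatted input with value $0$, whose answer sets $\{1\}$ and $\{0\}$ are disjoint; hence any covering path for the first input must contain a $q_{-K}^{(n_m)}$ node. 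Together with Lemma~\ref{L3.1} this gives $\psi^i_U(z_K) = \psi^d_U(z_K) = \psi^a_U(z_K) = K$, so $\mathcal{U}^{bi}_{U\psi n}(K) = K$ for infinitely many $K$.

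The key step is to prove $\mathcal{U}^{da}_{U\psi n}(2) = \infty$ by exhibiting, for each $N \in \omega \setminus \{0\}$, a problem
\[
z_N = (Y,\mu_N,l_0^{(n_m)}(\bar{x}),l_1^{(n_m)}(\bar{x}),\ldots,l_N^{(n_m)}(\bar{x})),
\]
with $\bar{x} = (x_1,\ldots,x_{n_m})$, where $\mu_N$ sends the threshold vector with exactly $j$ leading ones to $\{j\}$ for $j = 1,\ldots,N$, both the all-zero and all-one vectors to $\{0\}$, and is arbitrary on the remaining bit vectors (which never arise, because $l_i^{(n_m)}(\bar{a})$ is monotone nonincreasing in $i$ for every $\bar{a}$). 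A nondeterministic solving tree of $\psi$-cost $2$ is built with $N+2$ complete paths leaving the root: for each $j \in \{1,\ldots,N\}$, a length-$2$ path querying $l_{j-1}^{(n_m)}$ then $l_j^{(n_m)}$ along the $(1,0)$-branch and terminating at $j$; a length-$1$ path querying $l_0^{(n_m)} = 0$ and terminating at $0$; and a length-$1$ path querying $l_N^{(n_m)} = 1$ and terminating at $0$. For the matching lower bound, $z_N$ attains $N+1$ distinct answer values, so any deterministic solving tree has at least $N+1$ terminal nodes; since only predicate nodes branch and each predicate node has exactly two children, some root-to-terminal path contains at least $\lceil \log_2(N+1) \rceil$ predicate expressions, whence $\psi^d_U(z_N) \geq \lceil \log_2(N+1) \rceil$ by the property $\psi(\alpha) \geq |\alpha|$ of a limited complexity measure. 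Letting $N \to \infty$ gives $\mathcal{U}^{da}_{U\psi n}(2) = \infty$; Lemma~\ref{L3.2} then yields $typ(\mathcal{U}^{da}_{U\psi n}) = \varepsilon$, and Proposition~\ref{P3.1} concludes $typ_u(U,\psi,n) = t_7$.

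The main obstacle is the careful validation of the nondeterministic tree for $z_N$: one must check that every input $\bar{a} \in A^n$ (including those outside the ``format'' of~(\ref{E1}) and those lying in sub-structures $A_{s_t}^{(n_t)}$ for $t<m$) is covered by at least one of the three kinds of paths, and that every such covered input satisfies $\kappa(\xi) \in z_N(\bar{a})$. The structural point that makes this work is that every $l_i^{(n_m)}$ vanishes identically outside the format of~(\ref{E1}), so the bit vector for any non-formatted input is all-zero and $\mu_N$ sends it to $\{0\}$; the catch-all $l_0^{(n_m)} = 0$ path therefore correctly handles both formatted inputs with value $\leq 0$ and all non-formatted inputs, including those contributed by earlier blocks.
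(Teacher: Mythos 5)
Your proposal is correct and follows essentially the same route as the paper: the threshold predicates $l_i^{(n_m)}$ yield problems with nondeterministic complexity at most $2$ but unbounded deterministic complexity (forcing many terminal nodes), giving $\mathcal{U}^{da}_{U\psi n}(2)=\infty$ and hence $typ(\mathcal{U}^{da}_{U\psi n})=\varepsilon$, while the expensive predicates $q_{-K}^{(n_m)}$ (your $z_K$ is the paper's $\eta_i$) show $\psi^a_U$ is unbounded, after which Proposition~\ref{P3.1} pins down $t_7$. Your explicit two-level nondeterministic tree and the $\lceil\log_2(N+1)\rceil$ depth bound are just more concrete renderings of the paper's ``subsystem with at most two equations'' and terminal-node-counting arguments.
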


\begin{proof}
Denote $(U,\psi )=(U_{\tau },\psi _{\tau })$. Let $U=(A,F,P)$. Show that $%
typ(\mathcal{U}_{U\psi n}^{da})=\varepsilon $. Let $t\in \omega \setminus
\{0\}$. Consider the problem $z_{t}=(Y,\nu ,l_{1}^{(n_{m})}(\bar{x}),\ldots
,l_{t}^{(n_{m})}(\bar{x}))$ from $\mathcal{P}(U,n)$, where $Y=\{x_{1},\ldots
,x_{n}\}$, $\bar{x}=(x_{1},\ldots ,x_{n_{m}})$, $\nu :E_{2}^{t}\rightarrow
\mathcal{S}(\omega )$, and $\nu (\bar{\delta}_{1})\cap \nu (\bar{\delta}%
_{2})=\emptyset $ for any $\bar{\delta}_{1},\bar{\delta}_{2}\in E_{2}^{t}$
such that $\bar{\delta}_{1}\neq \bar{\delta}_{2}$. It is not difficult to
show that, for any consistent on $A^{n}$ system of equations%
\[
\{l_{1}^{(n_{m})}(\bar{x})=\delta _{1},\ldots ,l_{t}^{(n_{m})}(\bar{x}%
)=\delta _{t}\},
\]%
where $\delta _{1},\ldots ,\delta _{t}\in E_{2}$, there exists a subsystem,
which has the same set of solutions and which contains at most two
equations. Using this fact it is not difficult to show that $\psi
_{U}^{a}(z_{t})\leq 2$.

Let us prove that there is no $c\in \omega $ such
that $\psi _{U}^{d}(z_{t})\leq c$ for any $t\in \omega \setminus \{0\}$.
Assume the contrary. Then there exists a number $w\in \omega \setminus \{0\}$
satisfying the following condition: for any $t\in \omega \setminus \{0\}$,
there exists a computation tree $\Gamma $ over $U,$ which solves the problem $%
z_{t}$ deterministically and has at most $w$ terminal nodes. Let us consider
the problem $z_{w}$. Let $\Gamma $ be an arbitrary computation tree over $U$,
which solves the problem $z_{w}$ deterministically. It is not difficult to
show that $\Gamma $ must have at least $w+1$ terminal nodes. We obtain a
contradiction. Thus, $\psi _{U}^{a}(z_{t})\leq 2$ for any $t\in \omega
\setminus \{0\}$ and there is no $c\in \omega $ such that $\psi
_{U}^{d}(z_{t})\leq c$ for any $t\in \omega \setminus \{0\}$. Therefore $%
\mathcal{U}_{U\psi n}^{da}(2)=\infty $. Using Lemma \ref{L3.2} we obtain $%
typ(\mathcal{U}_{U\psi n}^{da})=\varepsilon $.

We now prove that the function $\psi _{U}^{a}$ is unbounded from above on
the set $\mathcal{P}(U,n)$. Let $i\in \omega \setminus \{0\}$. Consider the
problem $\eta _{i}=(Y,\nu ,q_{-i}^{(n_{m})}(\bar{x}))$ over $U$, where $%
Y=\{x_{1},\ldots ,x_{n}\}$, $\bar{x}=(x_{1},\ldots ,x_{n_{m}})$, $\nu
((0))=\{0\}$, and $\nu ((1))=\{1\}$. Show that $\psi _{U}^{a}(\eta _{i})\geq
i.$ One can prove that there exists a computation tree $\Gamma $ over $U$, which
solves the problem $\eta _{i}$ nondeterministically and satisfies the
following conditions: $\psi (\Gamma )=\psi _{U}^{a}(\eta _{i})$, and all
expressions attached to nodes of $\Gamma $ depend on variables from $Y$
only. Show that the expression $q_{-i}^{(n_{m})}(\bar{x})$ is attached to a
node of $\Gamma $. Assume the contrary. Consider two $n$-tuples from $A^{n}$%
: $\bar{\alpha}=(k_{0}^{(n_{m})},\ldots ,k_{0}^{(n_{m})})$ and $\bar{\beta}%
=(k_{1}^{(n_{m})},\ldots
,k_{n_{m}-1}^{(n_{m})},(-i)^{(n_{m})},k_{0}^{(n_{m})},\ldots
,k_{0}^{(n_{m})})$.
It is clear that $\eta _{i}(\bar{\alpha})=\{0\}$ and $\eta _{i}(\bar{\beta}%
)=\{1\}$.
One can show that all functions from $P$ depending on
variables from $Y$ take the value $0$ on the tuple $\bar{\alpha}$, and
all functions from $P$, depending on variables from $Y$, with the exception
of $q_{-i}^{(n_{m})}(\bar{x})$, take the value $0$ on the tuple $\bar{\beta}$%
.  Let $\xi $ be a complete path in $\Gamma $ such that $\bar{\alpha}%
\in \mathcal{A}(\xi )$. Evidently, all edges of this path with the exception
of the first one are labeled with $0$. Therefore $\bar{\beta}\in \mathcal{A}%
(\xi )$ but this is impossible since $\eta _{i}(\bar{\alpha})\cap \eta _{i}(%
\bar{\beta})=\emptyset $. Hence the expression $q_{-i}^{(n_{m})}(\bar{x})$
is attached to a node of $\Gamma $. Therefore $\psi (\Gamma )\geq i$ and $%
\psi _{U}^{a}(\eta _{i})\geq i$. Thus, the function $\psi _{U}^{a}$ is
unbounded from above on the set $\mathcal{P}(U,n)$. Using Lemma \ref{L3.4}
we obtain $typ(\mathcal{U}_{U\psi n}^{aa})=\gamma $. From this equality, the
equality $typ(\mathcal{U}_{U\psi n}^{da})=\varepsilon $, and from Proposition %
\ref{P3.1} it follows that $typ_{u}(U,\psi ,n)=t_{7}$.
 \end{proof}

\begin{lemma}
\label{L5.7} Let $(U,\psi )$ be a sm-pair such that $\psi \equiv 0$. Then $%
dtyp_{u}(U,\psi )=t_{1}^{\infty }$.
\end{lemma}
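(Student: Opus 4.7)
The plan is to show that $\psi\equiv 0$ collapses every parameter $\psi_U^b(z)$ to $0$, which forces each function $\mathcal{U}_{U\psi n}^{bc}$ to be identically $0$ on all of $\omega$, hence of type $\alpha$, so every $n$-type equals $t_1$.

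First I would unpack what $\psi\equiv 0$ says about the extended measure. By definition $\psi(\beta)=\psi(\lambda)=0$ for the empty sequence, and $\psi(\beta)=\psi(b_1\cdots b_m)=0$ for any nonempty sequence of functional and predicate expressions, since $\psi$ vanishes on $(F\cup P)^\ast$. Extending to computation trees, $\psi(\Gamma)=\max\{\psi(\beta(\xi)):\xi\in\Xi(\Gamma)\}=0$ for every $\Gamma\in\mathcal{T}(U)$.

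Next, for every problem $z\in\mathcal{P}(U)$, I would observe that $\psi_U^i(z)=\psi(\beta(z))=0$. For $\psi_U^d$ and $\psi_U^a$, the construction in the proof of Lemma~\ref{L3.1} supplies a deterministic (hence also nondeterministic) computation tree $\Gamma_0$ solving $z$; since $\psi(\Gamma_0)=0$, we get $\psi_U^a(z)\leq\psi_U^d(z)\leq 0$, so $\psi_U^a(z)=\psi_U^d(z)=0$. Thus all three parameters vanish identically on $\mathcal{P}(U,n)$ for every $n\in\omega\setminus\{0\}$.

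Finally, for any $b,c\in\{i,d,a\}$, $n\in\omega\setminus\{0\}$, and any $m\in\omega$, the set $\{z\in\mathcal{P}(U,n):\psi_U^c(z)\leq m\}=\mathcal{P}(U,n)$ is nonempty, and the maximum defining $\mathcal{U}_{U\psi n}^{bc}(m)$ equals $0$. Hence $Dom(\mathcal{U}_{U\psi n}^{bc})=\omega$ and $\mathcal{U}_{U\psi n}^{bc}$ is bounded from above (by $0$), so $typ(\mathcal{U}_{U\psi n}^{bc})=\alpha$. This gives $typ_u(U,\psi,n)=t_1$ for every $n$, and therefore $dtyp_u(U,\psi)=t_1^\infty$. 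There is no real obstacle here; the whole argument is a direct unwinding of the definitions once one confirms the existence of the trivial tree $\Gamma_0$ solving each problem.
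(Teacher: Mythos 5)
Your proof is correct and rests on the same observation as the paper's: $\psi\equiv 0$ forces all three parameters to vanish, so every function $\mathcal{U}_{U\psi n}^{bc}$ has type $\alpha$. The only difference is cosmetic — the paper checks just that $typ(\mathcal{U}_{U\psi n}^{ii})=\alpha$ via Lemma~\ref{L3.4} and then invokes case (a) of Proposition~\ref{P3.1} to fill in the whole table, whereas you verify all nine entries directly; both routes are fine.
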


\begin{proof}
Let $n\in \omega \setminus \{0\}$. It is clear that the function $\psi
_{U}^{i}$ is bounded from above on the set $\mathcal{P}(U,n)$. Using Lemma %
\ref{L3.4} we obtain $typ(\mathcal{U}_{U\psi n}^{ii})=\alpha $. From this
equality and from Proposition \ref{P3.1} it follows that $typ_{u}(U,\psi
,n)=t_{1}$.
 \end{proof}

\begin{proof}
[Proof of Proposition \ref{P5.1}] Let $\tau \in \{t_{1}^{\infty }\}\cup \Delta _{u}$%
. Assume that $\tau =t_{1}^{\infty }$. Consider an arbitrary sm-pair $%
(U,\psi )$ such that $\psi \equiv 0$. By Lemma \ref{L5.7}, $dtyp_{u}(U,\psi
)=t_{1}^{\infty }$. Assume now that $\tau \in \Delta _{u}$. Consider the
sm-pair $(U_{\tau },\psi _{\tau})$. Using Lemmas \ref{L5.1}--\ref{L5.6} we
obtain $dtyp_{u}(U_{\tau },\psi _{\tau })=\tau $.
 \end{proof}

\begin{proof}
[Proof of Proposition \ref{P5.2}] Let $\tau \in \Delta _{u}$. Consider the sm-pair $(U_{\tau },\psi _{\tau})$. 	
By construction, $\psi _{\tau}$ is a weighted depth.  Therefore $(U_{\tau },\psi _{\tau })$ is a limited sm-pair.
From Lemmas \ref{L5.1}--\ref{L5.6} it follows that $%
dtyp_{u}(U_{\tau },\psi _{\tau })=\tau $.
 \end{proof}

\section{Proofs of Theorems \protect\ref{T2.1} and \protect\ref{T2.2}}
\label{S7}

Let us define a function $\rho :\{\alpha ,\beta ,\gamma ,\delta ,\varepsilon
\}\rightarrow \{\alpha ,\beta ,\gamma ,\delta ,\varepsilon \}$ as follows: $%
\rho (\alpha )=\varepsilon $, $\rho (\beta )=\delta $, $\rho (\gamma
)=\gamma $, $\rho (\delta )=\beta $, and $\rho (\varepsilon )=\alpha $. The
following statement (Proposition 5 from \cite{Moshkov96}) allows us to analyze the relationships
between upper types and types of sm-pairs.

\begin{proposition}
 \label{P6.1} Let $B$ be a nonempty set, $f:B\rightarrow
\omega $, $g:B\rightarrow \omega $, $\mathcal{U}^{fg}(n)=\max \{f(b):b\in
B,g(b)\leq n\}$, and $\mathcal{L}^{gf}(n)=\min \{g(b):b\in B,f(b)\geq n\}$
for any $n\in \omega $. Then $typ(\mathcal{L}^{gf})=\rho (typ(\mathcal{U}%
^{fg}))$.
\end{proposition}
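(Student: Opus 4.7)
The plan is to exploit the duality
\[
\mathcal{U}^{fg}(n)\geq m \iff \{b\in B : g(b)\leq n,\ f(b)\geq m\}\neq\emptyset \iff \mathcal{L}^{gf}(m)\leq n,
\]
which transforms statements about values of one function above/below the diagonal into statements about the other. Writing $A(n)=\{b:g(b)\leq n\}$ and $C(m)=\{b:f(b)\geq m\}$, I would first record that $\mathcal{U}^{fg}(n)$ is undefined iff $A(n)=\emptyset$ or $f$ is unbounded on $A(n)$ (and once the latter holds at some $n_0$, it holds for all larger $n$), while $\mathcal{L}^{gf}(m)$ is undefined iff $C(m)=\emptyset$ iff $m>\sup f$. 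This gives clean descriptions of both domains in terms of $f$ and $g$.

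Then I would do a case analysis on $typ(\mathcal{U}^{fg})$. The easy endpoints are $\alpha$ and $\varepsilon$: $typ(\mathcal{U}^{fg})=\alpha$ means $f$ is bounded on $B$, so $Dom(\mathcal{L}^{gf})$ is finite, giving $typ(\mathcal{L}^{gf})=\varepsilon$; and $typ(\mathcal{U}^{fg})=\varepsilon$ means $f$ is unbounded on some $A(n_0)$, so $\mathcal{L}^{gf}(m)\leq n_0$ for every $m\in Dom(\mathcal{L}^{gf})$ while $Dom(\mathcal{L}^{gf})$ is infinite, yielding $\alpha$. For the middle cases I would use the two implications
\[
n\in Dom^{+}(\mathcal{U}^{fg}) \ \Longrightarrow\ n\in Dom^{-}(\mathcal{L}^{gf}),
\]
\[
n\in Dom^{-}(\mathcal{U}^{fg}),\ \mathcal{U}^{fg}(n)=m\leq n \ \Longrightarrow\ m+1\in Dom^{+}(\mathcal{L}^{gf}),
\]
which follow directly from the duality (for the second, note that $\mathcal{U}^{fg}(n)=m$ forces $A(n)\cap C(m+1)=\emptyset$, so $\mathcal{L}^{gf}(m+1)>n\geq m$). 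Symmetrically, $k\in Dom^{+}(\mathcal{L}^{gf})$ forces $A(k-1)\cap C(k)=\emptyset$, and hence either $A(k-1)=\emptyset$ (only finitely many $k$) or $\mathcal{U}^{fg}(k-1)\leq k-1$, placing $k-1$ in $Dom^{-}(\mathcal{U}^{fg})$.

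Using these implications, case $\gamma$ is immediate: both $Dom^{\pm}(\mathcal{U}^{fg})$ infinite transfer to both $Dom^{\mp}(\mathcal{L}^{gf})$ infinite, giving $\gamma$. For case $\beta$, $Dom^{+}(\mathcal{U}^{fg})$ finite combined with the symmetric implication above shows $Dom^{-}(\mathcal{L}^{gf})$ finite, and unboundedness of $\mathcal{U}^{fg}$ (hence of $f$) gives an infinite domain for $\mathcal{L}^{gf}$, yielding $\delta$. Case $\delta$ is the symmetric argument giving $\beta$, where I additionally need that $\mathcal{L}^{gf}$ is unbounded: if it were bounded by $N$, then for every $m$ there would be $b_m\in A(N)$ with $f(b_m)\geq m$, making $f$ unbounded on $A(N)$, contradicting that $\mathcal{U}^{fg}(N)$ is defined (which it is, since $Dom(\mathcal{U}^{fg})$ is infinite and hence, by the domain description above, cofinite).

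The main obstacle is the bookkeeping around undefinedness of $\mathcal{U}^{fg}$: the two reasons for $\mathcal{U}^{fg}(n)$ being undefined behave very differently under the duality (empty $A(n)$ is benign, but unbounded $f$ on $A(n)$ forces $\mathcal{L}^{gf}$ to be bounded), so the argument in each case must separate "small $n$ where $A(n)=\emptyset$" from "large $n$ where the function is genuinely defined or blows up." Once the domain structure is pinned down at the start (essentially the analog of Lemma~\ref{L3.2}), the five type transitions reduce to short, largely mechanical verifications.
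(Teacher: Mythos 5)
The paper does not actually prove Proposition~\ref{P6.1}; it imports it verbatim as Proposition~5 of \cite{Moshkov96}, so there is no in-paper argument to compare against. Your self-contained proof via the duality $A(n)\cap C(m)\neq\emptyset \iff \mathcal{L}^{gf}(m)\leq n \iff$ ``$\mathcal{U}^{fg}(n)\geq m$ or $\mathcal{U}^{fg}(n)$ undefined by unboundedness'' is correct: the domain descriptions, the endpoint cases $\alpha\mapsto\varepsilon$ and $\varepsilon\mapsto\alpha$, and the three middle cases all go through, and the five types are mutually exclusive and exhaustive, so a case analysis on $typ(\mathcal{U}^{fg})$ suffices. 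Two small points deserve tightening. First, in case $\gamma$ the transfer $Dom^{-}(\mathcal{U}^{fg})\to Dom^{+}(\mathcal{L}^{gf})$, $n\mapsto \mathcal{U}^{fg}(n)+1$, is not injective, so ``immediate'' hides a step: you need that $\mathcal{U}^{fg}$ is nondecreasing and (since $Dom^{+}(\mathcal{U}^{fg})$ is infinite) unbounded on its domain, so that $\mathcal{U}^{fg}(n)+1$ takes infinitely many values as $n$ ranges over the infinite set $Dom^{-}(\mathcal{U}^{fg})$. Second, in case $\beta$ the implication you actually need is $Dom^{-}(\mathcal{L}^{gf})\cap Dom(\mathcal{U}^{fg})\subseteq Dom^{+}(\mathcal{U}^{fg})$ (i.e., $\mathcal{L}^{gf}(k)\leq k$ gives $b$ with $g(b)\leq k$, $f(b)\geq k$, hence $\mathcal{U}^{fg}(k)\geq k$ whenever defined), which is the converse of your first displayed implication rather than the ``symmetric implication'' about $Dom^{+}(\mathcal{L}^{gf})$ that you cite; it follows from the same duality together with cofiniteness of $Dom(\mathcal{U}^{fg})$, so the fix is one line. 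With these two clarifications the proof is complete.
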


Using Proposition \ref{P6.1} we obtain the following statement.

\begin{proposition}
\label{P6.2} Let $(U,\psi )$ be a sm-pair, $n\in \omega \setminus \{0\}$,
and $typ_{u}(U,\psi ,n)=t_{i}$ for some $i\in \{1,\ldots ,7\}$. Then $%
typ(U,\psi ,n)=T_{i}$.
\end{proposition}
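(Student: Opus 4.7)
The plan is to derive the lower-type entries of $typ(U,\psi,n)$ directly from the upper-type entries of $typ_u(U,\psi,n)$ using Proposition \ref{P6.1}, and then verify the seven assertions by a finite case check.

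First, I would instantiate Proposition \ref{P6.1} with $B=\mathcal{P}(U,n)$, $f=\psi_U^b$, and $g=\psi_U^c$ for an arbitrary pair $b,c\in\{i,d,a\}$. Under this choice, $\mathcal{U}^{fg}$ and $\mathcal{L}^{gf}$ coincide with $\mathcal{U}_{U\psi n}^{bc}$ and $\mathcal{L}_{U\psi n}^{cb}$ respectively (one has to note that $B$ is nonempty because $\mathcal{P}(U,n)\neq\emptyset$, which is clear from the definition of a problem). The proposition then yields the key identity
\[
typ(\mathcal{L}_{U\psi n}^{cb})=\rho(typ(\mathcal{U}_{U\psi n}^{bc})).
\]
Renaming indices, this says that the lower-type entry at position $(b,c)$ of the table $typ(U,\psi,n)$ equals $\rho$ applied to the upper-type entry at position $(c,b)$ of the table $typ_u(U,\psi,n)$. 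In other words, the lower-type part of $typ(U,\psi,n)$ is obtained from $typ_u(U,\psi,n)$ by transposing the table and then applying $\rho$ componentwise.

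With this reduction in hand, the proof becomes a purely mechanical check. For each $i\in\{1,\ldots,7\}$, I would form the transpose of $t_i$, apply $\rho$ entrywise using $\rho(\alpha)=\varepsilon$, $\rho(\beta)=\delta$, $\rho(\gamma)=\gamma$, $\rho(\delta)=\beta$, $\rho(\varepsilon)=\alpha$, and then combine the resulting lower-type entry with the corresponding upper-type entry from $t_i$ at every cell $(b,c)$. Inspecting the definitions of $T_1,\ldots,T_7$ shows, cell by cell, that the pair obtained matches the pair written in $T_i$; hence $typ(U,\psi,n)=T_i$.

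There is no real obstacle here: Proposition \ref{P6.1} does all the analytical work, so what remains is verifying $9\times 7=63$ entries against the tabulated values of $T_1,\ldots,T_7$. The only mild care needed is checking that the tables $t_i$ as given are genuinely fixed under the operation "transpose, apply $\rho$, then pair with the original" — which is exactly what the seven tables $T_i$ were designed to encode, so the verification is uniform and routine.
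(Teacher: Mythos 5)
Your proposal is correct and follows exactly the route the paper takes: the paper's proof of Proposition \ref{P6.2} is precisely an appeal to Proposition \ref{P6.1} with $B=\mathcal{P}(U,n)$, $f=\psi_U^b$, $g=\psi_U^c$, giving $typ(\mathcal{L}_{U\psi n}^{bc})=\rho(typ(\mathcal{U}_{U\psi n}^{cb}))$, followed by the (implicit) entrywise check that each $T_i$ is obtained from $t_i$ by pairing it with $\rho$ applied to its transpose. Your write-up simply makes that finite verification explicit, which is a sound and complete rendering of the intended argument.
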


\begin{proof}
[Proof of Theorem \ref{T2.1}] The statement of the theorem follows from
Propositions \ref{P4.1}, \ref{P5.1}, and \ref{P6.2}.
 \end{proof}

\begin{proof}
[Proof of Theorem \ref{T2.2}] The statement of the theorem follows from
Propositions \ref{P4.2}, \ref{P5.2}, and \ref{P6.2}.
 \end{proof}

\section{Explanations of Results}
\label{S8}

Rather unusual formulations of the results (in particular, the use of the types of functions) require additional explanations.

Let $(U,\psi )$ be a sm-pair, and $n\in \omega \setminus \{0\}$. We'll start
by considering the upper $n$-type $typ_{u}(U,\psi ,n)$ of the sm-pair $%
(U,\psi )$. After that, discuss the lower $n$-type $typ_{l}(U,\psi ,n)$ of the
sm-pair $(U,\psi )$ and the $n$-type $typ(U,\psi ,n)$ of the sm-pair $(U,\psi )$%
. We end by looking at the dynamic type $dtyp(U,\psi )$ of the sm-pair $(U,\psi )
$.

\subsection{Upper $n$-Types of SM-Pairs}

For any $b,c\in \{i,d,a\}$, the matrix $typ_{u}(U,\psi ,n)$ (the upper $n$%
-type of the sm-pair $(U,\psi )$) contains the value $typ(\mathcal{U}_{U\psi
n}^{bc})\in \{\alpha ,\beta ,\gamma ,\delta ,\varepsilon \}$ in the
intersection of the row $b$ and the column $c$, where $$\mathcal{U}_{U\psi
n}^{bc}(m)=\max \{\psi _{U}^{b}(z):z\in \mathcal{P}(U,n),\psi
_{U}^{c}(z)\leq m\}$$ for any $m\in \omega $. If the value $\mathcal{U}%
_{U\psi n}^{bc}(m)$ is defined for some $m\in \omega $, then it is the
unimprovable upper bound on the value $\psi _{U}^{b}(z)$ for problems $z\in
\mathcal{P}(U,n)$ such that the inequality $\psi _{U}^{c}(z)\leq m$ holds.
From Propositions \ref{P4.1},  \ref{P4.2}, \ref{P5.1}, and \ref{P5.2} it follows that $\{t_{1},\ldots
,t_{7}\}$ is the set of all possible upper $n$-types of sm-pairs and $%
\{t_{2},\ldots ,t_{7}\}$ is the set of all possible upper $n$-types of limited
sm-pairs.

Let us remind that $Dom(\mathcal{U}_{U\psi n}^{bc})$ is the domain of $%
\mathcal{U}_{U\psi n}^{bc}$, $Dom^{+}(\mathcal{U}_{U\psi n}^{bc})=\{m:m\in
Dom(\mathcal{U}_{U\psi n}^{bc}),\mathcal{U}_{U\psi n}^{bc}(m)\geq m\}$, and $%
Dom^{-}(\mathcal{U}_{U\psi n}^{bc})=\{m:m\in Dom(\mathcal{U}_{U\psi n}^{bc}),%
\mathcal{U}_{U\psi n}^{bc}(m)\leq m\}$.

By Lemma \ref{L3.2}, the set $Dom(\mathcal{U}_{U\psi n}^{bc})$ is finite if
and only if there exists $m\in \omega $ such that $\mathcal{U}_{U\psi
n}^{bc}(m)=\infty $, i.e., the set $\{\psi _{U}^{b}(z):z\in \mathcal{P}%
(U,n),\psi _{U}^{c}(z)\leq m\}$ is infinite. In this case, $Dom(\mathcal{U}%
_{U\psi n}^{bc})=\emptyset $ if $m_{0}=m_{1}$ and $Dom(\mathcal{U}_{U\psi
n}^{bc})=\{m:m\in \omega ,m_{0}\leq m<m_{1}\}$ if $m_{0}<m_{1}$, where $%
m_{0}=\min \{\psi _{U}^{c}(z):z\in \mathcal{P}(U,n)\}$ and $m_{1}=\min
\{m:m\in \omega ,\mathcal{U}_{U\psi n}^{bc}(m)=\infty \}$. If the set $Dom(%
\mathcal{U}_{U\psi n}^{bc})$ is infinite, then $Dom(\mathcal{U}_{U\psi
n}^{bc})=\{m:m\in \omega ,m\geq m_{0}\}$.

Let us remind that, by Lemma \ref{L3.1}, $\psi _{U}^{a}(z)\leq \psi
_{U}^{d}(z)\leq \psi _{U}^{i}(z)$ for any problem $z\in \mathcal{P}(U,n)$.

The equality $typ(\mathcal{U}_{U\psi n}^{bc})=\alpha $ means that the set $Dom(%
\mathcal{U}_{U\psi n}^{bc})$ is infinite and the function $\mathcal{U}%
_{U\psi n}^{bc}$ is bounded from above. This equality can hold for any pair $%
bc,$ $b,c\in \{i,d,a\}$. By Lemma \ref{L3.4}, $typ(\mathcal{U}_{U\psi
n}^{bc})=\alpha $  if and only if the function $\psi _{U}^{b}$ is bounded
from above on the set $\mathcal{P}(U,n)$. Later we will often omit words
\textquotedblleft from above on the set $\mathcal{P}(U,n)$%
\textquotedblright\ and write that the function $\psi _{U}^{b}$ is bounded
or that the function $\psi _{U}^{b}$ is unbounded. The function $\psi
_{U}^{i}$ can be bounded only for sm-pairs that are not limited. The
function $\psi _{U}^{d}$ can be bounded for sm-pairs that are limited, but
this case is in some sense degenerate -- see Lemma \ref{L0.1}.

The equality $typ(\mathcal{U}_{U\psi n}^{bc})=\beta $ means that $Dom(%
\mathcal{U}_{U\psi n}^{bc})$ is an infinite set, $Dom^{+}(\mathcal{U}_{U\psi
n}^{bc})$ is a finite set, and $\mathcal{U}_{U\psi n}^{bc}$ is an unbounded
from above function. This equality can hold only if $bc=di$. One can show
that $typ(\mathcal{U}_{U\psi n}^{di})=\beta $ if and only if
the function $\psi _{U}^{d}$ is unbounded
and there exists $%
p\in \omega $ such that $\psi _{U}^{d}(z)<\psi _{U}^{i}(z)$ for any problem $%
z\in \mathcal{P}(U,n)$ such that $\psi _{U}^{i}(z)\geq p$.

The equality $typ(\mathcal{U}_{U\psi n}^{bc})=\gamma $ means that each of
the sets $Dom^{+}(\mathcal{U}_{U\psi n}^{bc})$ and $Dom^{-}(\mathcal{U}%
_{U\psi n}^{bc})$ is an infinite set. This equality can hold for any pair $%
bc,$ $b,c\in \{i,d,a\}$, with the exception of $id$ and $ia$.
One can show that, for $bc \in \{di, ai, ad\}$,
$typ(\mathcal{U}_{U\psi n}^{bc})=\gamma $ if and only if, for any $q\in
\omega $, there exists a problem $z\in \mathcal{P}(U,n)$ such that $\psi
_{U}^{b}(z)=\psi _{U}^{c}(z)\geq q$. If $bc \in \{ii, dd, aa\}$, then $typ(\mathcal{U}_{U\psi n}^{bc})=\gamma $ if and only if the function
$\psi _{U}^{b}$ is unbounded -- see Lemma \ref{L3.4}. One can show that
$typ(\mathcal{U}_{U\psi n}^{da})=\gamma $ if and only if the set $Dom(\mathcal{U}_{U\psi n}^{da})$ is infinite, the function $\psi _{U}^{d}$ is unbounded and,
for any $q\in
\omega $, there exists a number $m\in\omega $, $m\ge q$, such that, for any problem $z\in \mathcal{P}(U,n)$ with $\psi
_{U}^{a}(z) \le m$,  the inequality $\psi _{U}^{d}(z)\le m$ holds.

The equality $typ(\mathcal{U}_{U\psi n}^{bc})=\delta $ means that $Dom(%
\mathcal{U}_{U\psi n}^{bc})$ is an infinite set and $Dom^{-}(\mathcal{U}%
_{U\psi n}^{bc})$ is a finite set. This equality can hold only if $bc=da$.
One can show that $typ(\mathcal{U}_{U\psi n}^{da})=\delta $ if and only if $Dom(\mathcal{U}_{U\psi n}^{da})$ is an infinite set  and
there exists $p\in \omega $ such that, for any $m\in \omega $, $m \ge p$, there exists a problem  $z\in \mathcal{P}(U,n)$ such that $\psi _{U}^{a}(z)\le m$ and $\psi _{U}^{d}(z) >m$.

The equality $typ(\mathcal{U}_{U\psi n}^{bc})=\varepsilon $ means that $Dom(%
\mathcal{U}_{U\psi n}^{bc})$ is a finite set. This equality can hold only if
$bc\in \{id,ia,da\}$. By Lemma \ref{L3.2}, $typ(\mathcal{U}_{U\psi
n}^{bc})=\varepsilon $ if and only if there exists $m\in \omega $ such that $%
\mathcal{U}_{U\psi n}^{bc}(m)=\infty $, i.e., the set $\{\psi
_{U}^{b}(z):z\in \mathcal{P}(U,n),\psi _{U}^{c}(z)\leq m\}$ is infinite. The
case $typ(\mathcal{U}_{U\psi n}^{bc})=\varepsilon $ is not very informative:
for large enough $m$, there are problems $z\in \mathcal{P}(U,n)$ for which $%
\psi _{U}^{c}(z)\leq m$ but we cannot derive any upper bound on the value $%
\psi _{U}^{b}(z)$ for these problems.

Let $typ_{u}(U,\psi ,n)=t\in \{t_{1},\ldots ,t_{7}\}$. We now consider possible behavior of the rows in the table $t$, which is closely related to the boundedness of the functions $\psi _{U}^{i}$, $\psi _{U}^{d}$, and $\psi _{U}^{a}$.

The row $i$ in the
matrix $t$ is equal to $(\alpha ,\alpha ,\alpha )$ if the function $\psi
_{U}^{i}$ is bounded (if $t=t_{1}$) and is equal to $(\gamma ,\varepsilon
,\varepsilon )$ if the function $\psi _{U}^{i}$ is unbounded (if $t\in
\{t_{2},\ldots ,t_{7}\}$).

The row $a$ in the matrix $t$ is equal to $(\alpha ,\alpha ,\alpha )$ if the
function $\psi _{U}^{a}$ is bounded (if $t\in \{t_{1},t_{2},t\,_{3},t_{4}\}$%
) and is equal to $(\gamma ,\gamma ,\gamma )$ if the function $\psi _{U}^{a}$
is unbounded (if $t\in \{t_{5},t_{6},t_{7}\}$).

The behavior of the row $d$ in the matrix $t$ is more complicated. If the
function $\psi _{U}^{d}$ is bounded (if $t\in \{t_{1},t_{2}\}$), then this
row is equal to $(\alpha ,\alpha ,\alpha )$. If the function $\psi _{U}^{d}$
is unbounded and the function $\psi _{U}^{a}$ is bounded (if $t\in
\{t_{3},t_{4}\}$), then the row $d$ is equal to $(x,\gamma ,\varepsilon )$,
where $x\in \{\beta ,\gamma \}$. If each of the functions $\psi _{U}^{i}$, $%
\psi _{U}^{d}$, and $\psi _{U}^{a}$ is unbounded (if $t\in
\{t_{5},t_{6},t_{7}\}$), then the row $d$ is equal to $(\gamma ,\gamma ,x)$,
where $x\in \{\gamma ,\delta ,\varepsilon \}$.

\subsection{Lower $n$-Types of SM-Pairs}

We denote by $typ_{l}(U,\psi ,n)$ a table with three rows and three columns
in which rows from top to bottom and columns from the left to the right are
labeled with indices $i,d,a$ and the value $typ(\mathcal{L}_{U\psi n}^{bc})$
is in the intersection of the row with index $b\in \{i,d,a\}$ and the column
with index $c\in \{i,d,a\}$. The table $typ_{l}(U,\psi ,n)$ will be called the
lower $n$-type of the sm-pair $(U,\psi )$.

For any $b,c\in \{i,d,a\}$ and $m\in \omega $, $$\mathcal{L}_{U\psi
n}^{bc}(m)=\min \{\psi _{U}^{b}(z):z\in \mathcal{P}(U,n),\psi
_{U}^{c}(z)\geq m\}.$$ If the value $\mathcal{L}_{U\psi n}^{bc}(m)$ is
defined for some $m\in \omega $, then it is the unimprovable lower bound on
the value $\psi _{U}^{b}(z)$ for problems $z\in \mathcal{P}(U,n)$ such that
the inequality $\psi _{U}^{c}(z)\geq m$ holds.

One can show that the set $Dom(\mathcal{L}_{U\psi
n}^{bc})$ is finite if and only if the function $\psi _{U}^{c}$ is bounded. In this case, $Dom(\mathcal{L}%
_{U\psi n}^{bc})=\{0,\ldots ,M\}$, where $M=\max \{\psi _{U}^{c}(z):z\in
\mathcal{P}(U,n)\}$. If the function $\psi _{U}^{c}$ is unbounded, then $Dom(\mathcal{L}_{U\psi
n}^{bc})=\omega $.

Define seven tables:%
$$
l_{1}=%
\begin{tabular}{|c|ccc|}
\hline
& $i$ & $d$ & $a$ \\ \hline
$i$ & $\varepsilon $ & $\varepsilon $ & $\varepsilon $ \\
$d$ & $\varepsilon $ & $\varepsilon $ & $\varepsilon $ \\
$a$ & $\varepsilon $ & $\varepsilon $ & $\varepsilon $ \\ \hline
\end{tabular}%
\;\;l_{2}=%
\begin{tabular}{|c|ccc|}
\hline
& $i$ & $d$ & $a$ \\ \hline
$i$ & $\gamma $ & $\varepsilon $ & $\varepsilon $ \\
$d$ & $\alpha $ & $\varepsilon $ & $\varepsilon $ \\
$a$ & $\alpha $ & $\varepsilon $ & $\varepsilon $ \\ \hline
\end{tabular}%
\;\;l_{3}=%
\begin{tabular}{|c|ccc|}
\hline
& $i$ & $d$ & $a$ \\ \hline
$i$ & $\gamma $ & $\delta $ & $\varepsilon $ \\
$d$ & $\alpha $ & $\gamma $ & $\varepsilon $ \\
$a$ & $\alpha $ & $\alpha $ & $\varepsilon $ \\ \hline
\end{tabular}%
$$
$$ l_{4}=%
\begin{tabular}{|c|ccc|}
\hline
& $i$ & $d$ & $a$ \\ \hline
$i$ & $\gamma $ & $\gamma $ & $\varepsilon $ \\
$d$ & $\alpha $ & $\gamma $ & $\varepsilon $ \\
$a$ & $\alpha $ & $\alpha $ & $\varepsilon $ \\ \hline
\end{tabular}%
\;\;
l_{5}=%
\begin{tabular}{|c|ccc|}
\hline
& $i$ & $d$ & $a$ \\ \hline
$i$ & $\gamma $ & $\gamma $ & $\gamma $ \\
$d$ & $\alpha $ & $\gamma $ & $\gamma $ \\
$a$ & $\alpha $ & $\gamma $ & $\gamma $ \\ \hline
\end{tabular}%
\;\;l_{6}=%
\begin{tabular}{|c|ccc|}
\hline
& $i$ & $d$ & $a$ \\ \hline
$i$ & $\gamma $ & $\gamma $ & $\gamma $ \\
$d$ & $\alpha $ & $\gamma $ & $\gamma $ \\
$a$ & $\alpha $ & $\beta $ & $\gamma $ \\ \hline
\end{tabular}%
$$
$$
 l_{7}=%
\begin{tabular}{|c|ccc|}
\hline
& $i$ & $d$ & $a$ \\ \hline
$i$ & $\gamma $ & $\gamma $ & $\gamma $ \\
$d$ & $\alpha $ & $\gamma $ & $\gamma $ \\
$a$ & $\alpha $ & $\alpha $ & $\gamma $ \\ \hline
\end{tabular}%
$$

Using Proposition \ref{P6.1} we obtain that $\{l_{1},\ldots ,l_{7}\}$ is the
set of all possible lower $n$-types of sm-pairs and $\{l_{2},\ldots ,l_{7}\}$
is the set of all possible lower $n$-types of limited sm-pairs. Moreover, $%
typ_{l}(U,\psi ,n)=l_{i}$, $i\in \{1,\ldots ,7\}$, if and only if $%
typ_{u}(U,\psi ,n)=t_{i}$.

From Proposition \ref{P6.1} it follows that, for any pair $bc$, $b,c \in \{i,d,a\}$, $$typ(\mathcal{L}_{U\psi
n}^{bc})=\rho (typ(\mathcal{U}_{U\psi n}^{cb})),$$ where $\rho (\alpha
)=\varepsilon $, $\rho (\beta )=\delta $, $\rho (\gamma )=\gamma $, $\rho
(\delta )=\beta $, and $\rho (\varepsilon )=\alpha $. Using this equality and criteria of the behavior of the value $typ(\mathcal{U}_{U\psi n}^{cb})$ described in the previous section we can obtain the criteria of the behavior of the value $typ(\mathcal{L}_{U\psi n}^{bc})$. As examples, we consider the criteria for equalities $typ(\mathcal{L}_{U\psi n}^{bc})= \alpha$ and  $typ(\mathcal{L}_{U\psi n}^{bc})= \varepsilon$.

The equality $typ(\mathcal{L}_{U\psi n}^{bc})=\alpha $ means that the set $Dom(%
\mathcal{L}_{U\psi n}^{bc})$ is infinite and the function $\mathcal{L}%
_{U\psi n}^{bc}$ is bounded from above. This equality can hold  if $bc\in
\{di,ai,ad\}$. One can show that $typ(\mathcal{L}_{U\psi
n}^{bc})=\alpha $  if and only if there exists $p\in \omega $ such that $%
\mathcal{U}_{U\psi n}^{cb}(p)=\infty $, i.e., the set $\{\psi
_{U}^{c}(z):z\in \mathcal{P}(U,n),\psi _{U}^{b}(z)\leq p\}$ is infinite. In
this case, $\mathcal{L}_{U\psi n}^{bc}(m)\leq p$ for any $m\in \omega $.

The equality $typ(\mathcal{L}_{U\psi n}^{bc})=\varepsilon $ means that $Dom(%
\mathcal{L}_{U\psi n}^{bc})$ is a finite set. This equality can hold for any
pair $bc,$ $b,c\in \{i,d,a\}$. One can show that $typ(\mathcal{L}_{U\psi
n}^{bc})=\varepsilon $ if and only if the function $\psi _{U}^{c}$ is
bounded. The case $typ(\mathcal{L}%
_{U\psi n}^{bc})=\varepsilon $ is not as bad as the case $typ(\mathcal{U}%
_{U\psi n}^{bc})=\varepsilon $. In the former case, the value $\mathcal{L}%
_{U\psi n}^{bc}(m)$ is undefined only if there are no problems $z\in
\mathcal{P}(U,n)$ such that $\psi _{U}^{c}(z)\geq m$. In the latter case, if
the value $\mathcal{U}_{U\psi n}^{bc}(m)$ is undefined, then there exist
problems $z\in \mathcal{P}(U,n)$ with $\psi _{U}^{c}(z)\leq m$.

Let $typ_{l}(U,\psi ,n)=l\in \{l_{1},\ldots ,l_{7}\}$. We now consider possible behavior of the columns in the table $l$.

The column $i$ in the
matrix $l$ is equal to $(\varepsilon ,\varepsilon ,\varepsilon )^{T}$ if the
function $\psi _{U}^{i}$ is bounded (if $l=l_{1}$) and is equal to $(\gamma
,\alpha ,\alpha )^{T}$ if the function $\psi _{U}^{i}$ is unbounded (if $%
l\in \{l_{2},\ldots ,l_{7}\}$).

The column $a$ in the matrix $l$ is equal to $(\varepsilon ,\varepsilon
,\varepsilon )^{T}$ if the function $\psi _{U}^{a}$ is bounded (if $l\in
\{l_{1},l_{2},l\,_{3},l_{4}\}$) and is equal to $(\gamma ,\gamma ,\gamma
)^{T}$ if the function $\psi _{U}^{a}$ is unbounded (if $l\in
\{l_{5},l_{6},l_{7}\}$).

The behavior of the column $d$ in the matrix $l$ is more complicated. If the
function $\psi _{U}^{d}$ is bounded (if $l\in \{l_{1},l_{2}\}$), then this
column is equal to $(\varepsilon ,\varepsilon ,\varepsilon )^{T}$. If the
function $\psi _{U}^{d}$ is unbounded and the function $\psi _{U}^{a}$ is
bounded (if $l\in \{l_{3},l_{4}\}$), then the column $d$ is equal to $%
(x,\gamma ,\alpha )^{T}$, where $x\in \{\delta ,\gamma \}$. If each of the
functions $\psi _{U}^{i}$, $\psi _{U}^{d}$, and $\psi _{U}^{a}$ is unbounded
(if $l\in \{l_{5},l_{6},l_{7}\}$), then the column $d$ is equal to $(\gamma
,\gamma ,x)^{T}$, where $x\in \{\gamma ,\beta ,\alpha \}$.

\subsection{$n$-Types of SM-Pairs}

From Theorems \ref{T2.1} and \ref{T2.2} it follows that $\{T_{1},\ldots
,T_{7}\}$ is the set of all possible $n$-types of sm-pairs and $%
\{T_{2},\ldots ,T_{7}\}$ is the set of all possible $n$-types of limited
sm-pairs.

For $i\in \{1,\ldots ,n\}$ and $b,c\in $ $\{i,d,a \}$, we denote by $t_{i}^{bc}$ the value in the intersection of the
row $b$ and the column $c$ in the matrix $t_{i}$ and by $l_{i}^{bc}$ the value
in the intersection of the row $b$ and the column $c$ in the matrix $l_{i}$.
Then the matrix $T_{i}$ has the pair $l_{i}^{bc}t_{i}^{bc}$ in the
intersection of the row $b$ and the column $c$.

In the table
\[
\begin{tabular}{|c|ccc|}
\hline
& $i$ & $d$ & $a$ \\ \hline
$i$ & $\gamma \gamma ,\varepsilon \alpha $ & $\gamma \varepsilon ,\delta
\varepsilon ,\varepsilon \alpha ,\varepsilon \varepsilon $ & $\gamma
\varepsilon ,\varepsilon \alpha ,\varepsilon \varepsilon $ \\
$d$ & $\alpha \alpha ,\alpha \beta ,\alpha \gamma ,\varepsilon \alpha $ & $%
\gamma \gamma ,\varepsilon \alpha $ & $\gamma \gamma ,\gamma \delta ,\gamma
\varepsilon ,\varepsilon \alpha ,\varepsilon \varepsilon $ \\
$a$ & $\alpha \alpha ,\alpha \gamma ,\varepsilon \alpha $ & $\alpha \alpha
,\alpha \gamma ,\beta \gamma ,\gamma \gamma ,\varepsilon \alpha $ & $\gamma
\gamma ,\varepsilon \alpha $ \\ \hline
\end{tabular}%
\]%
in the intersection of the row with index $b\in \{i,d,a\}$ and the column
with index $c\in \{i,d,a\}$, we have all possible pairs that appear in
tables $T_{1},\ldots ,T_{7}$ in the intersection of the row and the column
with the same indices. Note that out of 25 different pairs $pq,$ $p,q\in
\{\alpha ,\beta ,\gamma ,\delta ,\varepsilon \}$, only ten pairs $\alpha
\alpha $, $\alpha \beta $, $\alpha \gamma $, $\beta \gamma $, $\gamma \gamma
$, $\gamma \delta $, $\gamma \varepsilon $, $\delta \varepsilon $, $%
\varepsilon \alpha $, $\varepsilon \varepsilon $ are present in tables $%
T_{1},\ldots ,T_{7}$.

The situation, when $typ(\mathcal{L}_{U\psi n}^{bc})typ(\mathcal{U}_{U\psi
n}^{bc})\in \{\alpha \alpha ,\beta \gamma ,\gamma \gamma ,\gamma \delta \}$
is good enough: the difference between the lower and upper bounds $\mathcal{L%
}_{U\psi n}^{bc}$ and $\mathcal{U}_{U\psi n}^{bc}$ is reasonable (they have
infinite domains and  are either both bounded from above or both unbounded
from above). For the rest of the cases, the situation is worse.
For pairs $\alpha \beta $ and $\alpha \gamma $, there is a too big gap
between lower and upper bounds $\mathcal{L}_{U\psi n}^{bc}$ and $\mathcal{U}%
_{U\psi n}^{bc}$: the lover bound is bounded from above and the upper bound
is unbounded from above. For pairs $\gamma \varepsilon $, $\delta
\varepsilon $, $\varepsilon \alpha $, and $\varepsilon \varepsilon $, at
least one of the bounds has finite domain.

\subsection{Dynamic Types for SM-Pairs}

We defined the linear order $\preceq $ on the set $\{\alpha ,\beta ,\gamma
,\delta ,\varepsilon \}$ as follows: $\alpha \preceq \beta \preceq \gamma
\preceq \delta \preceq \varepsilon $. Based on this order, we defined the
partial order  $\leqq $ on the set of tables $\{t_{1},\ldots ,t_{7}\}$. Let $i,j\in \{1,\ldots ,7\}$. Then $t_{i}\leqq t_{j}$
if and only if $t_{i}^{bc}\preceq t_{j}^{bc}$ for any $b,c\in \{i,d,a\}$.
The graph depicted in Fig. \ref{Fig1} is the Hasse diagram for the partially ordered
set $(\{t_{1},\ldots ,t_{7}\},\leqq )$. Nodes of this diagram are tables $%
t_{1},\ldots ,t_{7}$. An edge goes upward from $t_{i}$ to $t_{j}$ if $%
t_{i}\leqq t_{j}$ and there is no $t_{k}$, $t_{k}\notin \{t_{i},t_{j}\}$,
such that $t_{i}\leqq t_{k}\leqq t_{j}$. We now define a partial order $%
\leqq $ on the set of tables $\{T_{1},\ldots ,T_{7}\}$:  for any $i,j\in
\{1,\ldots ,7\}$, $T_{i}\leqq T_{j}$ if and only if  $t_{i}\leqq t_{j}$.

From Theorems \ref{T2.1} and \ref{T2.2} it follows that, for limited
sm-pairs, the set of all possible dynamic types coincides with the set $%
\Delta $ of infinite sequences $T_{i_{1}}T_{i_{2}}\cdots $ such that
$T_{i_{1}},T_{i_{2}},\ldots \in \{T_{2},\ldots ,T_{7}\}$ and $%
T_{i_{1}}\leqq T_{i_{2}}\leqq \cdots $. For arbitrary
sm-pairs the set of all possible dynamic types coincides with the set $%
\Delta \cup T_{1}^{\infty }$.

\section{Conclusions}
\label{S9}

In this paper, we studied computation trees over arbitrary structures. We described the set of all possible dynamic types for (i) arbitrary sm-pairs ((structure, measure)-pairs), (ii) for limited sm-pairs, and, in fact, (iii) for limited sm-pairs with a weighted depth as the complexity measure. The question about the set of all possible dynamic types for sm-pairs that have the depth as the complexity measure is open. In the future, we are planning to consider both this issue and various issues related to computational trees over structures with finite collections of predicates and functions.

\section*{Acknowledgments}

Research reported in this publication was supported by King Abdullah
University of Science and Technology (KAUST).

\bibliographystyle{spmpsci}
\bibliography{comp-trees}

\end{document}